\newtheorem{fact}{Fact}
\newtheorem{dfn}{Definition}
\newtheorem{cor}{Corollary}
\newtheorem{lem}{Lemma}
\def\tr{\mathrm{tr}}
\newcommand{\ket}[1]{\lvert #1 \rangle}
\newcommand{\bra}[1]{\langle #1 \lvert}
\newcommand{\SGDI}{\mathsf{SGDI}_r}
\newcommand{\SE}{\mathsf{SetEquality}_{\ell,U}}
\newcommand{\EQ}{\mathsf{EQ}_n^t}
\newcommand{\ignore}[1]{}
\title{Distributed Merlin-Arthur Synthesis of Quantum States and Its Applications} %TODO Please add
\author{Fran\c{c}ois Le Gall}{Graduate School of Mathematics, Nagoya University, Nagoya, Japan}{legall@math.nagoya-u.ac.jp}{}{}
\author{Masayuki Miyamoto}{Graduate School of Mathematics, Nagoya University, Nagoya, Japan }{masayuki.miyamoto95@gmail.com}{}{}
\author{Harumichi Nishimura}{Graduate School of Informatics, Nagoya University, Nagoya, Japan} {hnishimura@is.nagoya-u.ac.jp}{}{}%TODO mandatory, please use full name; only 1 author per \author macro; first two parameters are mandatory, other parameters can be empty. Please provide at least the name of the affiliation and the country. The full address is optional. Use additional curly braces to indicate the correct name splitting when the last name consists of multiple name parts.
\authorrunning{F. Le Gall, M. Miyamoto and H. Nishimura} %TODO mandatory. First: Use abbreviated first/middle names. Second (only in severe cases): Use first author plus 'et al.'
\keywords{distributed quantum Merlin-Arthur, distributed verification, quantum computation} %TODO mandatory; please add comma-separated list of keywords
\begin{document}
\maketitle
%TODO mandatory: add short abstract of the document
\begin{abstract}
The generation and verification of quantum states are fundamental tasks for quantum information processing that have recently been investigated by Irani, Natarajan, Nirkhe, Rao and Yuen [CCC 2022], Rosenthal and Yuen [ITCS 2022], Metger and Yuen [FOCS 2023] under the term \emph{state synthesis}. This paper studies this concept from the viewpoint of quantum distributed computing, and especially distributed quantum Merlin-Arthur (dQMA) protocols. We first introduce a novel task, on a line, called state generation with distributed inputs (SGDI). In this task, the goal is to generate the quantum state $U\ket{\psi}$ at the rightmost node of the line, where $\ket{\psi}$ is a quantum state given at the leftmost node and $U$ is a unitary matrix whose description is distributed over the nodes of the line. We give a dQMA protocol for SGDI and utilize this protocol to construct a dQMA protocol for the Set Equality problem studied by Naor, Parter and Yogev [SODA 2020], and complement our protocol by showing classical lower bounds for this problem. Our second contribution is a dQMA protocol, based on a recent work by Zhu and Hayashi [Physical Review A, 2019], to create EPR-pairs between adjacent nodes of a network without quantum communication. As an application of this dQMA protocol, we prove a general result showing how to convert any dQMA protocol on an arbitrary network into another dQMA protocol where the verification stage does not require any quantum communication.

\end{abstract}

\maketitle
\section{Introduction}\label{sec:introduction}
%\paragraph*{Background.}
While quantum computational complexity has so far mostly investigated the complexity of classical problems (e.g., computing Boolean functions) in the quantum setting, recent works \cite{Aaronson16, INNRY22, JLS18, Kretschmer21, 2301.07730, RY22} have started investigating the complexity of \emph{quantum} problems (e.g., generating quantum states). For instance, Ji, Liu and Song~\cite{JLS18} and Kretschmer \cite{Kretschmer21} have investigated the concept of quantum pseudorandom states from complexity-theoretic and cryptographic perspectives. Irani, Natarajan, Nirkhe, Rao, and Yuen \cite{INNRY22} have made in-depth investigations of the complexity of the \emph{state synthesis problem} in a setting first introduced by Aaronson \cite{Aaronson16} where the goal is to generate a quantum state by making queries to a classical oracle encoding the state.  Rosenthal and Yuen~\cite{RY22} and Metger and Yuen~\cite{2301.07730} have considered interactive proofs for synthesizing quantum states (and also for implementing unitaries). Here the main goal is to generate complicated quantum states (e.g., quantum states described by an exponential-size generating quantum circuit) efficiently with the help of an all-powerful but untrusted prover. Note that in settings where an all-powerful prover is present, the task of quantum state synthesis is closely related to the task of quantum state verification (since the prover can simply send the quantum state that needs to be synthesized).  
%and considered the class of quantum states that can be efficiently computed by interactive protocols.

In this paper, we investigate the task of state generation and verification in the setting of quantum distributed computing. Quantum distributed computing is a fairly recent research topic: despite early investigations in the 2000s and the 2010s \cite{Arfaoui+14,Denchev+08,ElkinKNP14,GavoilleKM09,Tani+12}, it is only in the past five years that significant advances have been done in understanding the power of quantum distributed algorithms \cite{AVPODC22,FLNP21,Izumi+PODC19,Izumi+STACS20,GallM18,LeGall+STACS19,WYPODC22}. Fraigniaud, Le Gall, Nishimura, and Paz~\cite{FLNP21}, in particular, have investigated the power of distributed quantum proofs in distributed computing, which is the natural quantum version of the concept of distributed classical proofs (also called locally-checkable proofs~\cite{GoosS16} or proof-labeling schemes \cite{KormanKP10}): each node of the network receives, additionally to its input, a quantum state (called a quantum proof) from an all-powerful but untrusted party called the prover. The main result from \cite{FLNP21} shows that there exist classical problems that can be solved by quantum protocols using quantum proofs of length exponentially smaller than in the classical case.  

We present two main results about state generation and verification in the setting where an all-powerful but untrusted prover helps the nodes in a non-interactive way, and apply these results to design new quantum protocols for concrete problems studied recently in \cite{FLNP21,NPY20}. 

%===========
\subsection{First result and applications: State Generation with Distributed Inputs}\label{sub:result1}
One of the main conceptual contributions of this paper is introducing the following problem: In a network of $r+1$ nodes $v_0,v_1,\ldots,v_r$, node $v_0$ is given as input an $n$-qubit quantum state $\ket{\psi}$. The goal is to generate the quantum state $U\ket{\psi}$ at node $v_r$, where $U$ is a unitary matrix whose description is distributed over the nodes of the network.
%: $U$ is given as a sequence of unitary matrices $U=U_{r}\cdots U_1$, where the description of $U_j$ is input to node $v_j$, for $j\in\{1,\ldots,r\}$. 
For concreteness, in this paper we focus on the case where the network is a path of length $r$ and the nodes $v_0,v_r$ are both extremities of the path.\footnote{In distributed computing it is standard to first investigate the complexity of computational problems on simple network topologies such as a path or a ring. A solution on the path can often be extended to networks of more complex topology, or be used as a building block for solving problems on network of arbitrary topology.} 
%A solution on a path can then typically be extended to more complex topology. }

Here is the precise description of the problem. The parties $v_0,v_1,\ldots,v_r$ are the nodes of a line graph of length $r$: the left-end extremity is $v_0$, the right-end extremity is $v_r$, and nodes $v_j$ and $v_{j+1}$ are connected for $j=0,1,\ldots,r-1$. Node $v_0$ receives as input the classical description of an $n$-qubit state $|\psi\rangle$, as a $2^n$-dimensional vector.\footnote{Our protocol actually only requires $v_0$ to be able to generate many copies of $\ket{\psi}$, and thus also works when the input is a description of a quantum circuit generating $\ket{\psi}$, or even a black box generating $\ket{\psi}$.} The other nodes $v_j$ for $j=1,2,\ldots,r$ receive as input the description of an $n$-qubit unitary transformation: each node $v_j$ receives the description of a unitary transformation $U_j$ acting on $n$ qubits. In this setting, the aim is to generate the quantum state
\[
|\varphi_r\rangle:=U_{r}\cdots U_1|\psi\rangle
\] 
at the right-end extremity $v_r$. 
We call this problem $n$-qubit 
{\em State Generation with Distributed Inputs} on the line of length~$r$ ($n$-qubit $\SGDI$). 
Without a prover, this problem is clearly not solvable in less than $r$ rounds of communications between neighbors (this can be seen easily by considering the case where $U_1=\cdots=U_r=I$). 

We consider the setting where a prover (an all-powerful but untrusted party) helps the nodes in a non-interactive way: at the very beginning of the protocol the prover sends to node $v_j$ a quantum state $\rho_j$ of at most~$s_c$ qubits, for each $j\in\{0,1,\ldots,r\}$. Here $s_c$ is called the certificate size of the protocol and the state $\rho_j$ is called the certificate to $v_j$.
%The global quantum state sent by the prover (which consists of at most $(r+1)s_c$ qubits and such that its reduced states are $\rho_0,\ldots,\rho_r$) is called the certificate of the protocol.
The nodes then run a one-round\footnote{As in almost all prior works on (classical or quantum) distributed proofs, in this paper we consider only one-round verification algorithms.} distributed quantum algorithm (called the verification algorithm). More precisely, the nodes first perform one round of (synchronous) communication: each node sends one quantum message of at most $s_m$ qubits to its neighbors ($s_m$ is called the message size of the protocol). Each node then decides to either accept or reject. Such protocols, which have been introduced and studied in \cite{FLNP21}, are called distributed Quantum Merlin-Arthur (dQMA) protocols (see Section \ref{sub:dQMA} for details). Additionally, when considering dQMA protocols for $n$-qubit $\SGDI$, we add the requirement that node $v_r$ outputs an $n$-qubit quantum state at the end of the protocol. 

Here is our main result: 

\begin{theorem}\label{thm:state-transfer-simplified} 
For any constant $\varepsilon>0$, there exists a dQMA protocol for $n$-qubit $\SGDI$ with certificate size $O(n^2r^5)$ and message size $O(nr^2)$ 
satisfying the following: 
({\bf completeness}) 
%There is a certificate such that 
There are certificates $\rho_0,\ldots,\rho_r$ such that
all the nodes accept and node $v_r$ outputs 
$|\varphi_r\rangle$ with probability $1$; 
({\bf soundness}) If all the nodes accept with probability 
at least $\varepsilon$, 
then the output state $\rho$ of node $v_r$ satisfies 
$\langle\varphi_r|\rho|\varphi_r \rangle\geq 1-\varepsilon$.
\end{theorem}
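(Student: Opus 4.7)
The plan is to have the prover send each node $v_j$ (for $0\le j\le r$) a quantum certificate that is supposed to consist of $K$ copies of the intermediate vector $\ket{\varphi_j}=U_j\cdots U_1\ket{\psi}$, for some $K=\mathrm{poly}(n,r)$ fixed at the end of the analysis. Note that, since $v_0$ is given the classical description of $\ket{\psi}$, it can additionally generate fresh honest copies of $\ket{\varphi_0}=\ket{\psi}$ on its own. Each node first applies a uniformly random permutation to its $K$ registers, and verification then proceeds edge by edge: on each edge $(v_{j-1},v_j)$, node $v_{j-1}$ forwards a batch of $T=O(r^2)$ registers to $v_j$; node $v_j$ applies $U_j$ to each received register and SWAP-tests it against one of its own local registers. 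Every node accepts iff all the SWAP tests it participates in succeed, and at the end $v_r$ outputs one register that was never involved in a test.

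Completeness is straightforward: with honest certificates $\rho_j=\ket{\varphi_j}\bra{\varphi_j}^{\otimes K}$, applying $U_j$ to any forwarded copy of $\ket{\varphi_{j-1}}$ yields $\ket{\varphi_j}$ exactly, so every SWAP test accepts with probability $1$ and the reserved register at $v_r$ is literally $\ket{\varphi_r}$.

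The substance is the soundness argument. The analytic anchor I would use is the SWAP test identity: on a state $\rho$ with a pure reference $\ket{\phi}$, the test accepts with probability $(1+\bra{\phi}\rho\ket{\phi})/2$, so acceptance with probability at least $1-\delta$ forces the single-register fidelity with $\ket{\phi}$ to be at least $1-2\delta$. I would propagate such a bound inductively along the line: if the (permutation-symmetric) reduced state of a single copy at $v_{j-1}$ has fidelity at least $1-\delta_{j-1}$ with $\ket{\varphi_{j-1}}$, then the $T$ successful SWAP tests on edge $(v_{j-1},v_j)$ force the single-copy reduced state at $v_j$ to have fidelity at least $1-\delta_j$ with $\ket{\varphi_j}$, with a loss of the form $\delta_j\le\delta_{j-1}+O(1/T)$. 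The base case $\delta_0=0$ is free, because $v_0$'s tested copies are the honest ones it generated itself, and telescoping through $r$ edges yields the desired fidelity bound on the register output by $v_r$.

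The main obstacle is that the prover is free to entangle all certificates, both across nodes and across the $K$ registers of a single node, so ``single-copy fidelity'' a priori refers only to a reduced state on an adversarially correlated system. This is exactly what the random-permutation step is designed to neutralize: after symmetrization, a symmetric-subspace / de~Finetti-type argument lets me convert a bound on the reduced state of one register into a bound on the state actually held by the uniformly random untested register that $v_r$ outputs. The remaining work is parameter accounting: the message budget $O(nr^2)$ per edge accommodates $T=\Theta(r^2)$ copies of $n$ qubits, while the per-node certificate budget $O(n^2r^5)$ accommodates $K=\Theta(nr^5)$ such copies, which is enough to drive the telescoped fidelity loss below any prescribed constant $\varepsilon$ and close the argument.
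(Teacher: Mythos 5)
Your protocol is essentially the paper's Protocol ${\cal P}_{{\sf SGDI}}$: the prover supplies many putative copies of each intermediate state $\ket{\varphi_j}$, each node locally randomly permutes its registers, copies are forwarded rightward, hit with $U_j$, and SWAP-tested, a de Finetti step relates the tested copies to the untested one that $v_r$ outputs, and the parameters ($\Theta(r^2)$ tests per edge, $\Theta(n^2r^5)$ certificate) coincide. However, the soundness sketch has two genuine gaps. First, your analytic anchor is the wrong SWAP-test bound. The identity ``acceptance probability $(1+\bra{\phi}\rho\ket{\phi})/2$, hence acceptance $1-\delta$ forces fidelity $1-2\delta$'' holds only when one input is a known pure state in product with the other; that is the situation only on the first edge, where $v_0$'s forwarded copies are honest. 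On every subsequent edge both inputs to the SWAP test are adversarial and possibly entangled with everything else, and the correct guarantee (Lemma~\ref{lem:SWAPtest}) is that acceptance probability $1-\delta$ only bounds the trace distance between the two \emph{reduced} states by $O(\sqrt{\delta})$. Consequently the recursion $\delta_j\le\delta_{j-1}+O(1/T)$ is not what the analysis delivers: the per-edge increments are $O(\sqrt{q_j})$ with $\sum_j q_j=O(1/T)$, which by Cauchy--Schwarz telescopes to $O(\sqrt{r/T})$, and this square-root loss is precisely what forces $T=\Theta(r^2)$. As stated, your recursion would imply $T=\Theta(r)$ suffices, so it cannot be the bound your own parameter choice is resting on.

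Second, the soundness condition is conditional: acceptance probability may be as small as the constant $\varepsilon$, and the guarantee concerns the output state \emph{given} acceptance. An induction that at each edge treats ``the $T$ tests passed'' as an event to condition on either pays a $1/\Pr[\mathrm{pass}]$ factor at each of the $r$ edges or tacitly assumes near-certain acceptance; neither is available here. The argument must instead bound the joint probability $\Pr[\mbox{all tests pass}\wedge\mbox{output bad}]$ within each i.i.d.\ component $\sigma^{\otimes(k+1)}$ of the de Finetti decomposition --- where it is at most $\max_x(1-\beta(x))^k\,x$ for the rejection rate $\beta$ of a bad $\sigma$ --- and divide by $\Pr[\mbox{accept}]\ge\varepsilon$ exactly once at the end; this is the hypothesis-testing structure of the paper's proof, and your write-up does not make this move. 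Relatedly, your accounting misattributes the size $K=\Theta(nr^5)$ to ``driving the telescoped loss down'': only $O(r^2)$ registers per node are ever touched by tests, and the remaining $\Theta(nk^2r)$ untouched copies exist solely to make the one-way-LOCC de Finetti error $\sqrt{2k^2\ln(2^{n(r+1)})/m}$ of Lemma~\ref{lemma:LS15} a small constant; without that term the transfer from tested registers to the untested register $v_r$ outputs does not close.
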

The protocol of Theorem \ref{thm:state-transfer-simplified} is a dQMA protocol with perfect completeness and soundness $\varepsilon$. Indeed, when receiving appropriate certificates from the prover, all nodes accept with probability 1 and node $v_r$ outputs the state $\ket{\varphi_r}$. On the other hand, if the state $\rho$ is far from $\ket{\varphi_r}$, the soundness condition guarantees that for any certificates $\rho_0,\ldots,\rho_r$ received from the prover (including the case of entangled certificates), the probability that at least one node rejects is at least $1-\varepsilon$ (remember that the quantity $\langle\varphi_r|\rho|\varphi_r \rangle$ represents the square root of the fidelity between $\ket{\varphi_r}\bra{\varphi_r}$ and $\rho$ --- see Section~\ref{sub:prelim} for details).

% \begin{theorem}\label{thm:state-transfer-simplified} 
% For any constant $\varepsilon>0$, 
% there exists a dQMA protocol 
% for $n$-qubit SGDI with on the line of length $r$ 
% with certificate size $O(n^2r^5)$ and message size $O(nr^2)$ 
% satisfying the following: 
% ({\bf completeness}) There is a certificate such that 
% all the nodes accept and node $v_r$ outputs 
% $|\varphi_r\rangle$ with probability $1$; 
% ({\bf soundness}) If all the nodes accept with probability 
% at least $\varepsilon$, 
% then the output state $\rho$ of node $v_r$ satisfies 
% $\langle\varphi_r|\rho|\varphi_r \rangle\geq 1-\varepsilon$.
% \end{theorem}

As an application of Theorem \ref{thm:state-transfer-simplified}, we construct a quantum protocol for a concrete computational task called Set Equality, which was introduced in Ref.~\cite{NPY20}. Here is the formal definition over a network of arbitrary topology (represented by an arbitrary graph $G=(V,E)$).

\begin{dfn}[$\SE$~\cite{NPY20}]\label{def:seteq}
Let $\ell$ be a positive integer and $U$ be a finite set.
Each node $u$ of a graph $G=(V,E)$ holds two lists of $\ell$ elements $(a_{u,1},\ldots,a_{u,\ell})$ 
and $(b_{u,1},\ldots,b_{u,\ell})$ as input, where $a_{u,i},b_{u,i}\in U$ for all $i\in\{1,2,\ldots,\ell\}$.
Define $A=\{a_{u,i}\mid u\in V,\ i\in \{1,2,\ldots,\ell\} \}$ and  $B=\{b_{u,i}\mid u\in V,\ i\in \{1,2,\ldots,\ell\} \}$. 
The output of $\SE$ is $1$ (yes), 
if $A=B$ as multisets and $0$ (no) otherwise.
\end{dfn}

Using Theorem \ref{thm:state-transfer-simplified} we obtain the following result:
\begin{theorem}\label{thm:SetEQ}
For any small enough constant $\varepsilon>0$, %and any parameter $l=o(\frac{2^N}{r})$, 
there exists a dQMA protocol for $\SE$ on the line graph of length $r$ 
with completeness $1-\varepsilon$ and 
soundness $\varepsilon$ that has
certificate size $O(r^{5}\log^2(\ell r)\log^2 |U|)$ 
and message size $O(r^{2}\log(\ell r)\log |U|)$. 
\end{theorem}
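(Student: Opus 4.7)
I would reduce $\SE$ to two parallel invocations of Theorem~\ref{thm:state-transfer-simplified}, each generating at node $v_r$ a quantum polynomial fingerprint of one of the two multisets, followed by a SWAP test. Fix an injection $\iota:U\hookrightarrow\{0,1,\dots,|U|-1\}$ and a prime $p$ with $p>2(r+1)\ell$ and $p\ge C|U|$ for a suitable constant $C$, so $p=\Theta(|U|+\ell r)$. For a multiset $M$ over $U$ set
\[
 P_M(z):=\sum_{x\in M} z^{\iota(x)}\ \mathrm{mod}\ p, \qquad |\phi_M\rangle := \frac{1}{\sqrt p}\sum_{z\in \mathbb F_p} |z\rangle|P_M(z)\rangle.
\]
The condition $p>2(r+1)\ell$ ensures that the total multiplicities cannot collide modulo $p$, so $A=B$ as multisets iff $P_A\equiv P_B$ over $\mathbb F_p$. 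Schwartz--Zippel applied to the nonzero $P_A-P_B$, which has degree at most $|U|-1$, then gives $|\langle\phi_A|\phi_B\rangle|\le (|U|-1)/p\le 1/C$ when $A\ne B$.

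\textbf{Reduction to $\SGDI$.} I realize $|\phi_M\rangle$ as the output of an $n$-qubit $\SGDI$ instance with $n=2\lceil\log_2 p\rceil = O(\log(\ell r)+\log|U|)$. Writing $s_j^M(z):=\sum_{i=1}^{\ell} z^{\iota(m_{j,i})}\bmod p$, node $v_0$ initializes $|\psi_M\rangle:=\tfrac{1}{\sqrt p}\sum_{z}|z\rangle|s_0^M(z)\rangle$ (preparable from its own input), and for $j\ge 1$ node $v_j$ supplies the controlled shift $U_j^M:|z\rangle|y\rangle\mapsto|z\rangle|y+s_j^M(z)\bmod p\rangle$, a permutation on basis states and hence unitary. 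Telescoping yields $U_r^M\cdots U_1^M|\psi_M\rangle=|\phi_M\rangle$. The protocol runs one $\SGDI$ for $M=A$ in parallel with one for $M=B$, and ends with a SWAP test at $v_r$ on the two resulting $n$-qubit registers; the overall verifier accepts iff every $\SGDI$-verifier accepts and the SWAP test accepts. With $n=O(\log(\ell r)+\log|U|)$, the $O(n^2 r^5)$ certificate and $O(nr^2)$ message bounds from Theorem~\ref{thm:state-transfer-simplified} immediately give the claimed $O(r^5\log^2(\ell r)\log^2|U|)$ and $O(r^2\log(\ell r)\log|U|)$ bounds (doubled by the two parallel instances, which is absorbed in the big-$O$).

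\textbf{Soundness, amplification, and main obstacle.} Completeness is immediate: honest certificates make every $\SGDI$-verifier accept and produce exactly $|\phi_A\rangle,|\phi_B\rangle$, which coincide when $A=B$, so the SWAP test always accepts. For soundness in the case $A\ne B$ fix an arbitrary prover. Either the joint probability that every node accepts both $\SGDI$ executions already falls below the $\varepsilon_0$-threshold of Theorem~\ref{thm:state-transfer-simplified} (in which case we are done), or it exceeds $\varepsilon_0$, in which case the soundness clause of Theorem~\ref{thm:state-transfer-simplified}, applied to each instance separately, guarantees that the marginal outputs $\rho_A,\rho_B$ at $v_r$ have fidelity at least $1-\varepsilon_0$ with $|\phi_A\rangle,|\phi_B\rangle$; a continuity estimate then bounds the SWAP-test acceptance by $\tfrac12\bigl(1+|\langle\phi_A|\phi_B\rangle|^2\bigr)+O(\sqrt{\varepsilon_0})\le\tfrac12+O(1/C^2)+O(\sqrt{\varepsilon_0})$. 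Choosing $C$ large, $\varepsilon_0$ small, and amplifying by $O(\log(1/\varepsilon))$ independent parallel copies with unanimous acceptance pushes the overall error below $\varepsilon$. The technical core---and the main obstacle---is the continuity estimate: an adversarial prover may entangle the certificates of the two parallel $\SGDI$ executions, so the joint output at $v_r$ need not be a product, while Theorem~\ref{thm:state-transfer-simplified} only controls each marginal. Handling this will require a gentle-measurement/purification argument to deduce that any joint state whose marginals are close to the pure fingerprints is itself close to $|\phi_A\rangle\langle\phi_A|\otimes|\phi_B\rangle\langle\phi_B|$, so that the pure-state SWAP-test inequality applies; everything else in the reduction is routine.
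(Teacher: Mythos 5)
Your overall strategy is the same as the paper's: encode each multiset by a polynomial fingerprint state $\frac{1}{\sqrt{|\mathbb F|}}\sum_s|s\rangle|P(s)\rangle$, use the $\SGDI$ protocol to deliver both fingerprints to $v_r$, finish with a SWAP test there, and amplify by a constant number of AND-type parallel repetitions. Your fingerprint itself is a genuinely different (and in some ways cleaner) choice: the paper uses the product polynomial $\prod_{j,i}(s-a_{j,i})$, whose per-node update ``multiply by $\alpha_j(s)$'' is not unitary when $\alpha_j(s)=0$ and therefore needs an auxiliary flag register, costing perfect completeness (the paper only gets completeness $1-\varepsilon$); your power-sum polynomial $\sum_u m_u z^{\iota(u)}$ with $p>2(r+1)\ell$ has purely additive per-node updates that are automatically permutations, gives perfect completeness, and replaces the degree bound $\ell(r+1)$ by $|U|-1$. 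The parameter count ($p=\Theta(|U|+\ell r)$, hence $n=O(\log(\ell r)+\log|U|)$) indeed reproduces the stated certificate and message sizes.

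The genuine gap is in the soundness step, and it is precisely the one you flag as the ``main obstacle'' --- but it is not merely a continuity estimate to be filled in later, and the paper's construction is designed to avoid it. The paper runs a \emph{single} $\SGDI$ instance with $|\psi\rangle=|\psi_A\rangle\otimes|\psi_B\rangle$ and $U_j=G_{j,A}\otimes G_{j,B}$, so Theorem~\ref{thm:state-transfer-simplified} directly certifies that the \emph{joint} conditional output at $v_r$ has fidelity $1-\varepsilon_0$ with the pure product state $|\psi_A^f\rangle\otimes|\psi_B^f\rangle$; Lemma~\ref{lem:FvG} then converts this into a trace-distance bound and the SWAP-test analysis goes through on pure states. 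Your two-parallel-instance version has two holes: (a) you assert that the soundness clause applies ``to each instance separately'' to control the marginals of the output conditioned on \emph{both} instances accepting, but the prover can entangle the two certificates, so the state of instance $A$ conditioned on $B$'s acceptance is the output of instance $A$ under a \emph{different} certificate (namely $\tr_B[\sqrt{\Pi_B}\,\xi\,\sqrt{\Pi_B}]/\tr[\Pi_B\xi]$); this can be repaired because soundness holds for arbitrary certificates and the two instances' measurements act on disjoint registers, but it must be argued; and (b) the marginals-to-joint step you defer is exactly Lemma~\ref{lem:MHNF15} combined with Lemma~\ref{lem:FvG} and a triangle inequality, giving $D(\rho_{AB},|\phi_A\rangle\langle\phi_A|\otimes|\phi_B\rangle\langle\phi_B|)=O(\sqrt{\varepsilon_0})$. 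Both repairs are available with tools already in the paper, but the cleanest fix is simply to package the two fingerprints into one $\SGDI$ instance as the paper does, after which neither issue arises.
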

While Ref.~\cite{NPY20} considered the special case of $\SE$ and showed efficient distributed \textit{interactive} protocols with small certificate and message size (see Section~\ref{subsec:related work}), no (nontrivial) classical dMA protocol (or lower bound) is known before this paper to our best knowledge. %It is easy to see that a trivial dMA protocol has certificate and message size $O(r\ell \log|U|)$ (the prover sends all of the inputs to all nodes). 
We complement the result in Theorem~\ref{thm:SetEQ} by showing classical lower bounds and upper bounds of distributed Merlin-Arthur (dMA) protocols for $\SE$.

\begin{theorem}\label{thm:SetEQ_classical_lower_bound}
For any dMA protocol for $\SE$ on a line graph of length $r$ 
with certificate size $s_c$, completeness $3/4$, and soundness $1/4$,
\begin{itemize}
\item if $|U| < \ell$, then $s_c = \Omega(|U|\log (\ell/|U|))$;
%\item If $|U| \geq r$, then $m = \Omega(r\log (\ell))$;
\item if $|U|= \Omega(\ell)$, then $s_c = \Omega(\ell)$;
\item if $|U| =\Omega(r\ell)$, then $s_c = \Omega(r\ell)$.
\end{itemize}
\end{theorem}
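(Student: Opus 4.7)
My plan is to prove each of the three lower bounds by a reduction from a classical lower bound on the multi-party equality problem $\EQ$ in the dMA model on a line, for suitable choices of $n$ and $t$. The three parameter regimes $|U|<\ell$, $|U|=\Omega(\ell)$, and $|U|=\Omega(r\ell)$ correspond to three different encodings of binary strings into multisets of size $\ell$ over $U$.

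First I would establish a generic dMA lower bound of $s_c=\Omega(n)$ for $\EQ$ on a line of length $r$ (with $t=r+1$ parties, each holding an $n$-bit string). The argument is a fooling-set one: fix an interior node $v_j$ and consider the family of $2^n$ all-equal YES inputs parametrised by the common string $x\in\{0,1\}^n$. If $s_c<\alpha n$ for a small enough constant $\alpha$, then by pigeonhole two distinct strings $x\neq x'$ share the Merlin certificate at $v_j$; combined with an indistinguishability argument on the messages crossing $v_j$, the spliced input (one half of the path holds $x$, the other half holds $x'$, which is a NO instance) produces identically distributed views at every node and is therefore accepted with the same probability as a YES instance, violating soundness up to the $(3/4,1/4)$ gap.

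Then I would reduce $\EQ$ to $\SE$ by having each node encode its $n$-bit string as a multiset of size $\ell$ over (a dedicated block of) $U$, and place this multiset into both the node's $A$-list and $B$-list. The resulting $\SE$ instance answers YES iff the per-node multisets agree, up to consistent bookkeeping on the $U$-blocks. With $\binom{\ell+|U|-1}{|U|-1}=2^{\Theta(|U|\log(\ell/|U|))}$ multisets available in the regime $|U|<\ell$ and $\binom{|U|}{\ell}=2^{\Theta(\ell)}$ subsets available in the regime $|U|=\Omega(\ell)$, the strings encoded per node have lengths $n=\Theta(|U|\log(\ell/|U|))$ and $n=\Theta(\ell)$ respectively, yielding cases~(1) and~(2). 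For case~(3), where $|U|=\Omega(r\ell)$, each of $\Theta(r)$ nodes is assigned a disjoint block of $U$ of size $\Theta(\ell)$, and the per-node lower bound accumulates to $\Omega(r\ell)$.

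The main obstacle is the first step in the setting where the message size $s_m$ is unbounded: crossing-edge messages can transmit arbitrary information, so the pigeonhole on certificate values at $v_j$ alone does not immediately give view indistinguishability. To handle this I would observe that, for a fixed neighbour input, the distribution of the message received by $v_j$ is parametrised only by the neighbour's certificate, so the effective pigeonhole runs over pairs (certificate at $v_j$, certificate at the neighbour) of total log-size $2s_c$; iterating this reasoning along the line and using a hybrid argument at every edge yields the factor $r$ needed in case~(3). The constant gap between completeness and soundness is then used in the final step to convert distributional indistinguishability into acceptance-probability equality, thereby completing the contradiction with soundness.
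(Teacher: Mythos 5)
There is a genuine gap --- in fact two, and they compound each other.

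First, the base lower bound you want to establish is false for the problem as you define it. You take the multi-party equality problem with $t=r+1$ terminals, i.e.\ \emph{every} node of the line holds an $n$-bit string, and claim an $\Omega(n)$ certificate lower bound via a fooling-set argument. But in the dMA model the message size is unconstrained, so this problem admits a protocol with \emph{zero} certificates: each node sends its entire input to its right neighbour, and each node accepts iff the received string equals its own input; all nodes accept iff all inputs are equal. The hardness of equality on a line in this model comes entirely from the inputs being held only by the two far-apart endpoints $v_0$ and $v_r$ (with $r\geq 3$), forcing the information to traverse input-less intermediate nodes via their certificates. This is exactly the setting of the $\mathsf{EQ}_n^2$ lower bound of Fraigniaud et al.\ (Lemma~\ref{lemma:lower_bound_for_eq}), which the paper invokes as a black box; your splicing/pigeonhole sketch is in the right spirit for \emph{that} two-endpoint problem, but no amount of hybrid argument can rescue the all-nodes version.

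Second, the reduction to $\SE$ is vacuous as described: if each node places the same multiset into both its $A$-list and its $B$-list, then the global multisets satisfy $A=B$ identically, for every input, so the $\SE$ instance is always a YES instance and certifies nothing about cross-node agreement. The reduction must split the two strings being compared across the two lists and across the two ends of the line: encode $x$ into $v_0$'s $A$-list and $y$ into $v_r$'s $B$-list, and pad all remaining list entries (including $v_0$'s $B$-list, $v_r$'s $A$-list, and all internal nodes) with a fixed dummy element; then $A=B$ as multisets iff the two encodings coincide iff $x=y$. Your choice of encodings (multiplicity vectors giving $n=\Theta(|U|\log(\ell/|U|))$ when $|U|<\ell$, weight-$\ell$ indicator sets giving $n=\Theta(\ell)$ when $|U|=\Omega(\ell)$, and $\Theta(r)$ disjoint blocks for the $\Omega(r\ell)$ case) is essentially the same as the paper's and the counting is fine; what is missing is the correct placement of the encodings and, for the third case, the additional step of collapsing the $\Theta(r)$ simulated terminals onto a four-node line so that the $\mathsf{EQ}_{kn}^2$ lower bound lands on a single internal node holding only one certificate of size $s_c$.
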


\begin{theorem}\label{thm:SetEQ_classical_upper_bound}
There exists a dMA protocol for $\SE$ on a line graph of length $r$ 
with completeness $1$ and 
soundness $0$ whose
certificate size and message size are both $O(\mathrm{min}\{r\ell \log |U|,|U| \log (r\ell)\})$.
\end{theorem}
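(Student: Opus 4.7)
The plan is to exhibit two dMA protocols, each with perfect completeness and zero soundness error, one witnessing each term in the minimum, and then to use whichever is smaller on a given instance. Both protocols share the same template of a \emph{prefix-sum accumulator along the line}: the prover sends to each node $v_i$ a pair $(T_i^A,T_i^B)$ of multisets that should equal the partial unions $A_{v_0}\uplus\cdots\uplus A_{v_i}$ and $B_{v_0}\uplus\cdots\uplus B_{v_i}$. In the single communication round, every node $v_i$ forwards its pair to $v_{i+1}$. Node $v_0$ accepts iff its pair matches its own input as multisets; each internal node $v_i$ accepts iff the pair received from $v_{i-1}$, augmented by $v_i$'s own two lists, equals the pair $v_i$ received from the prover; finally $v_r$ additionally checks $T_r^A=T_r^B$.

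Completeness with probability $1$ is immediate by taking the $(T_i^A,T_i^B)$ to be the honest partial unions. For soundness with probability $1$, if every node accepts then $v_0$'s local check pins down $T_0^A=A_{v_0}$ and $T_0^B=B_{v_0}$, and each subsequent consistency check propagates by one node the identity $T_i^A=A_{v_0}\uplus\cdots\uplus A_{v_i}$, and similarly for $B$; the final check at $v_r$ thus forces $A=B$ as multisets. The key observation is that every verification is a deterministic multiset equality, so a dishonest prover has no room to cheat; a minor subtlety is that the pair $v_i$ receives from the prover and the pair it receives as a message from $v_{i-1}$ may differ, but the induction only needs the former to equal the latter augmented by $v_i$'s input, which is exactly what is checked.

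The two terms in the bound arise from two encodings of a multiset over $U$ of total multiplicity at most $r\ell$: an explicit list of at most $r\ell$ elements, each described by $\lceil\log|U|\rceil$ bits, giving $O(r\ell\log|U|)$ bits; or a frequency vector indexed by $U$ with entries in $\{0,1,\ldots,r\ell\}$, giving $O(|U|\log(r\ell))$ bits. In either encoding, both the certificate at $v_i$ and the message sent along the edge $(v_i,v_{i+1})$ have size dominated by the encoding size, so selecting (as a function of $r,\ell,|U|$) the smaller of the two encodings yields the claimed $O(\min\{r\ell\log|U|,\,|U|\log(r\ell)\})$ bound on both quantities. The only part requiring care is the deterministic chaining in the soundness analysis; the rest is a direct implementation of the accumulator idea.
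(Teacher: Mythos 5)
Your proposal is correct, and it is essentially the paper's own argument: the paper's $O(|U|\log(r\ell))$ protocol is exactly your prefix-accumulator with the frequency-vector encoding, while for the $O(r\ell\log|U|)$ term the paper uses the even more trivial variant of having the prover broadcast all of $A$ and $B$ to every node (with adjacent nodes checking they got identical certificates); your unified prefix-union template with the explicit-list encoding achieves the same bound by the same deterministic-chaining soundness argument. No gaps.
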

Although the dependence in $r$ is worse than in the classical dMA protocol of Theorem~\ref{thm:SetEQ_classical_upper_bound}, the dependence of the dQMA protocol of Theorem~\ref{thm:SetEQ} in $\ell$ (the number of elements each node receives) and $|U|$ (the size of the universal set) are polylogarithmic. On the other hand, in classical case, we have linear lower bounds with respect to $\ell$ and $|U|$ as in Theorem~\ref{thm:SetEQ_classical_lower_bound}. Therefore Theorem~\ref{thm:SetEQ} gives a significant improvement for sufficiently large $\ell$ and $|U|$. 
%Although the dependence in $r$ and $|U|$ is worse than in the trivial dMA protocol, the dependence of the dQMA protocol of Theorem \ref{thm:SetEQ} in~$\ell $ (the number of elements each node receives) is exponentially better and thus gives a significant improvement for sufficiently large $\ell $. 
This assumption about the input parameters seems reasonable when considering applications similar to those of the dQMA protocol for the equality problem proposed in Ref.~\cite{FLNP21}. 
Note that our bounds of classical certificate size in Theorem~\ref{thm:SetEQ_classical_lower_bound} and Theorem~\ref{thm:SetEQ_classical_upper_bound} are tight up to $\mathrm{poly}\log(\ell, |U|, r)$ factors when $|U| <\ell$ or $|U| = \Omega(r\ell)$.

%=========================
\subsection{Second result and applications: EPR-pairs generation and LOCC dQMA protocols}
Our second contribution is a protocol, based on a recent work by Zhu and Hayashi \cite{ZH19PRA}, to create EPR-pairs between adjacent nodes of a network without quantum communication in the same setting as above, where a prover helps the nodes in a non-interactive way. As an application of this protocol, we prove a general result showing how to convert any dQMA protocol on an arbitrary network into another dQMA protocol where the verification algorithm uses only classical communication (instead as quantum communication, as allowed in the definition of dQMA protocols and used in all dQMA protocols of Ref.~\cite{FLNP21} and Theorems \ref{thm:state-transfer-simplified} and \ref{thm:SetEQ} above). 

More precisely, we say a dQMA protocol is an LOCC (Local Operation and Classical Communication) dQMA protocol  if the verification algorithm can be implemented only by local operations at each node and classical communication between neighboring nodes (i.e., no quantum communication is allowed). Our protocol for generating EPR-pairs enables us to show the following theorem: 

\begin{theorem}\label{thm:convertion-LOCC}
For any constant $p_c$ and $p_s$ such that $0\leq p_s<p_c\leq 1$, 
let ${\cal P}$ be a dQMA protocol for some problem on a network $G$ 
with completeness $p_c$, soundness $p_s$,
certificate size $s_c^{{\cal P}}$ and message size $s_m^{{\cal P}}$.
For any small enough constant $\gamma>0$,
there exists an LOCC dQMA protocol ${\cal P}'$ for the same problem on $G$
with completeness $p_c$, soundness $p_s+\gamma$, 
certificate size $s_c^{{\cal P}}+O(d_{\max} s_m^{{\cal P}} s_{tm}^{{\cal P}})$, 
and message size $O(s_m^{{\cal P}} s_{tm}^{{\cal P}})$, 
where
$d_{\max}$ is the maximum degree of $G$, and 
$s_{tm}^{{\cal P}}$ is the total number of qubits sent in the verification stage of ${\cal P}$.    
\end{theorem}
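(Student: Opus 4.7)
The high-level plan is to replace every qubit of quantum communication in $\mathcal{P}$'s verification stage by classical teleportation, using EPR-pairs supplied and certified via our EPR-pair generation protocol (which itself uses only classical communication between nodes). Since teleportation of one qubit consumes one EPR-pair and two classical bits, this immediately converts a one-round quantum verification into a one-round LOCC verification, provided the nodes first certify that their shared EPR-pairs are genuine.

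In more detail, for each edge $\{u,v\}$ of $G$, let $k_{uv}\leq s_m^{\mathcal{P}}$ denote the number of qubits sent across this edge during $\mathcal{P}$'s verification stage. The prover of $\mathcal{P}'$ supplies, in addition to a copy of the certificate of $\mathcal{P}$, a register that is supposed to contain $k_{uv}$ EPR-pairs on each edge, together with whatever extra state is required to run the EPR-pair verification protocol on them. To guarantee that, conditioned on successful EPR-verification, the overall state of \emph{all} $\sum_{uv} k_{uv}\leq s_{tm}^{\mathcal{P}}$ EPR-pairs is within trace distance $\gamma$ of the ideal tensor product $\bigotimes \ket{\Phi^+}\bra{\Phi^+}$, I would amplify the precision of the EPR-generation subprotocol inversely proportionally to $s_{tm}^{\mathcal{P}}$ (and apply a union bound). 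This inflates the EPR-verification certificate by a factor of $O(s_{tm}^{\mathcal{P}})$, giving $O(s_m^{\mathcal{P}} s_{tm}^{\mathcal{P}})$ qubits per edge; since each node is incident to at most $d_{\max}$ edges, the total certificate size becomes $s_c^{\mathcal{P}}+O(d_{\max} s_m^{\mathcal{P}} s_{tm}^{\mathcal{P}})$ and the classical message size $O(s_m^{\mathcal{P}} s_{tm}^{\mathcal{P}})$, matching the bounds claimed. The verification algorithm of $\mathcal{P}'$ then has each node run the EPR-generation verification on its EPR registers with every neighbour, simulate its local part of $\mathcal{P}$'s verification, and for every qubit it would have sent to a neighbour perform a Bell measurement with the corresponding EPR-half and transmit the two outcome bits; the neighbour applies the Pauli correction before carrying out its own local part of $\mathcal{P}$.

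For completeness, the honest prover provides the certificate of $\mathcal{P}$ together with perfect EPR-pairs and the appropriate EPR-verification certificate. Teleportation is then exact, so $\mathcal{P}'$ is behaviourally identical to $\mathcal{P}$ with an honest prover, yielding acceptance probability $p_c$. For soundness, the subtle point is that a cheating prover of $\mathcal{P}'$ may produce certificates that entangle the EPR registers with the $\mathcal{P}$-certificate registers across nodes. I would argue that, conditioned on every node accepting the EPR-verification, a gentle-measurement (or ``quantum union bound'') argument shows that the combined state is $O(\gamma)$-close in trace distance to a state whose EPR registers are truly maximally entangled and decoupled from the $\mathcal{P}$-certificate registers; the residual state on the latter is then a valid (possibly mixed, possibly globally entangled) prover strategy for $\mathcal{P}$, to which the soundness $p_s$ applies. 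Propagating the $O(\gamma)$ trace-distance error through the teleportation channels and the subsequent local decisions (using monotonicity of trace distance under quantum operations) then bounds the no-instance acceptance probability by $p_s+\gamma$. The main technical obstacle, as expected, is this decoupling step: one has to combine the soundness of the EPR-verification subprotocol with the teleportation simulation so that the final error remains additive rather than multiplicative across the $O(s_{tm}^{\mathcal{P}})$ teleported qubits, which is exactly what determines the amplification factor chosen in the certificate-size calculation above.
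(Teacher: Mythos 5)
Your proposal is correct and follows essentially the same route as the paper: teleportation over prover-supplied EPR-pairs certified by the (LOCC) Zhu--Hayashi verification, precision set to $O(1/s_{tm}^{\mathcal{P}})$ so that a union bound over all $s_{tm}^{\mathcal{P}}$ pairs keeps the error additive, and a decoupling step (the paper uses Lemma~\ref{lem:MHNF15} together with the Fuchs--van de Graaf inequalities, yielding the $\sqrt{s_{tm}^{\mathcal{P}}\varepsilon}$ term and hence $\varepsilon=\gamma^2/s_{tm}^{\mathcal{P}}$) reducing the residual certificate to an ordinary prover strategy for $\mathcal{P}$. The certificate- and message-size accounting also matches the paper's.
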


As an application of Theorem \ref{thm:convertion-LOCC}, we consider the equality problem studied in Ref.~\cite{FLNP21}. In this problem, denoted $\EQ$, a collection of $n$-bit strings $x_1,x_2,\ldots, x_t$ is given as input to $t$ specific nodes $u_1,u_2,\ldots, u_t$ (called terminals) of an arbitrary network $G = (V,E)$ as follows: node $u_i$ receives $x_i$, for $i\in\{1,2,\ldots,t\}$. The goal is to check whether the~$t$ strings are equal, i.e., whether $x_1 = \cdots = x_t$. By applying Theorem \ref{thm:convertion-LOCC} to the main result in Ref.~\cite{FLNP21} (a dQMA protocol for $\EQ$ with certificate size $O(tr^2\log n)$ and message size $O(tr^2\log(n+r))$), we obtain the following corollary: 
 
\begin{cor}
For any small enough constant $\varepsilon>0$,
there is an LOCC dQMA protocol for $\EQ$ 
with completeness~$1$, soundness $\varepsilon$,
certificate size $O(d_{\max}|V|t^2r^4 \log^2 (n+r))$ 
and messages size $O(|V| t^2 r^4 \log^2(n + r))$, 
where $r$ is the radius of the set of the $t$ terminals and $|V|$ is the number of nodes of the network $G=(V,E)$. 
\end{cor}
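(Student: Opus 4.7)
The plan is to obtain the corollary as a direct application of Theorem~\ref{thm:convertion-LOCC} to the dQMA protocol for $\EQ$ from \cite{FLNP21}. Call that protocol $\mathcal{P}$. We are told it has completeness $1$ and some soundness that can be pushed below any desired constant by standard amplification (so we may assume $p_s$ is any small constant we need), certificate size $s_c^{\mathcal{P}}=O(tr^2\log n)$ and message size $s_m^{\mathcal{P}}=O(tr^2\log(n+r))$. Plug $\mathcal{P}$ into Theorem~\ref{thm:convertion-LOCC} with $\gamma>0$ chosen small enough that the resulting soundness $p_s+\gamma$ is at most $\varepsilon$. Completeness remains $1$ and soundness is at most $\varepsilon$, so only the resource counts remain to be verified.

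The one quantity not spelled out in the statement of $\mathcal{P}$ is the total communication $s_{tm}^{\mathcal{P}}$. My first step is to bound it. In the $\EQ$ protocol of \cite{FLNP21}, nontrivial activity takes place only along a Steiner tree connecting the $t$ terminals inside $G$, and each node on that tree sends at most $s_m^{\mathcal{P}}$ qubits along each of its incident tree edges. Since the tree has fewer than $|V|$ edges, the total number of qubits transmitted during verification satisfies
\[
s_{tm}^{\mathcal{P}} \;=\; O\bigl(|V|\cdot s_m^{\mathcal{P}}\bigr) \;=\; O\bigl(|V|\, t r^2 \log(n+r)\bigr).
\]
(If one prefers a completely topology-blind estimate, the cruder bound $s_{tm}^{\mathcal{P}}=O(|E|\,s_m^{\mathcal{P}})$ also suffices after absorbing $|E|$ into the same $|V|d_{\max}$ factor that appears later in the certificate bound.)

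Substituting $s_m^{\mathcal{P}}$ and $s_{tm}^{\mathcal{P}}$ into the conclusion of Theorem~\ref{thm:convertion-LOCC}, the message size of $\mathcal{P}'$ becomes
\[
O\bigl(s_m^{\mathcal{P}}\, s_{tm}^{\mathcal{P}}\bigr)
=O\bigl(tr^2\log(n+r)\cdot |V|\,tr^2\log(n+r)\bigr)
=O\bigl(|V|\,t^2 r^4 \log^2(n+r)\bigr),
\]
and the certificate size becomes
\[
s_c^{\mathcal{P}}+O\bigl(d_{\max}\, s_m^{\mathcal{P}}\, s_{tm}^{\mathcal{P}}\bigr)
=O\bigl(tr^2\log n\bigr)+O\bigl(d_{\max}|V|\,t^2 r^4 \log^2(n+r)\bigr)
=O\bigl(d_{\max}|V|\,t^2 r^4 \log^2(n+r)\bigr),
\]
which matches the corollary. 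The only non-routine point is justifying the estimate on $s_{tm}^{\mathcal{P}}$; everything else is a mechanical substitution. I expect that to be the one step requiring care, since it depends on an inspection of the $\EQ$ protocol of \cite{FLNP21} rather than on its black-box parameters.
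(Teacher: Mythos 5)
Your proposal is correct and follows exactly the route the paper intends: the corollary is obtained by plugging the $\EQ$ protocol of \cite{FLNP21} into Theorem~\ref{thm:convertion-LOCC}, with the only non-black-box step being the bound $s_{tm}^{\mathcal{P}}=O(|V|\,s_m^{\mathcal{P}})$ from the fact that communication happens along a tree with fewer than $|V|$ edges, and your substitutions reproduce the stated certificate and message sizes. (One small caveat: your parenthetical fallback $s_{tm}^{\mathcal{P}}=O(|E|\,s_m^{\mathcal{P}})$ would not recover the stated message size, which carries no $d_{\max}$ factor, so the tree-based estimate is actually needed there.)
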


We can also apply Theorem~\ref{thm:convertion-LOCC} to the dQMA protocol of Theorem~\ref{thm:SetEQ}, leading to the following corollary:

\begin{cor}
For any small enough constant $\varepsilon>0$,
there is an LOCC dQMA protocol for $\SE$ on the line
graph of length $r$ with completeness~$1-\varepsilon$, soundness $\varepsilon$,
certificate size $O(r^5 \log^2 (\ell r)\log^2|U|)$ 
and messages size $O(r^5 \log^2 (\ell r)\log^2|U|)$.
\end{cor}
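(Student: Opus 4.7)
The plan is to apply Theorem~\ref{thm:convertion-LOCC} directly to the dQMA protocol $\mathcal{P}$ guaranteed by Theorem~\ref{thm:SetEQ}, so the proof reduces to a parameter substitution. First, I would read off the relevant parameters of $\mathcal{P}$ on the line graph of length $r$: the certificate size $s_c^{\mathcal{P}} = O(r^5 \log^2(\ell r)\log^2|U|)$, the message size $s_m^{\mathcal{P}} = O(r^2 \log(\ell r)\log|U|)$, and (since the network is a line) the maximum degree $d_{\max}=2$. The verification stage sends at most $O(s_m^{\mathcal{P}})$ qubits along each of the $r$ edges, so the total number of qubits transmitted is
\[
s_{tm}^{\mathcal{P}} \;=\; O\!\bigl(r \cdot s_m^{\mathcal{P}}\bigr) \;=\; O\!\bigl(r^3 \log(\ell r)\log|U|\bigr).
\]

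Next, I would substitute these quantities into the bounds of Theorem~\ref{thm:convertion-LOCC}. The resulting LOCC dQMA protocol has certificate size
\[
s_c^{\mathcal{P}} + O\!\bigl(d_{\max}\, s_m^{\mathcal{P}}\, s_{tm}^{\mathcal{P}}\bigr) \;=\; O\!\bigl(r^5 \log^2(\ell r)\log^2|U|\bigr) + O\!\bigl(r^5 \log^2(\ell r)\log^2|U|\bigr),
\]
and message size $O(s_m^{\mathcal{P}}\, s_{tm}^{\mathcal{P}}) = O(r^5 \log^2(\ell r)\log^2|U|)$, both matching the bounds claimed in the corollary.

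Finally, I would handle the completeness and soundness parameters. Theorem~\ref{thm:SetEQ} yields, for any small enough constant $\varepsilon_0>0$, a protocol with completeness $1-\varepsilon_0$ and soundness $\varepsilon_0$, while Theorem~\ref{thm:convertion-LOCC} preserves completeness and inflates soundness by an arbitrarily small constant $\gamma>0$. Setting $\varepsilon_0 = \varepsilon/2$ in Theorem~\ref{thm:SetEQ} and $\gamma = \varepsilon/2$ in Theorem~\ref{thm:convertion-LOCC} gives the desired completeness $1-\varepsilon_0 \ge 1-\varepsilon$ and soundness $\varepsilon_0+\gamma = \varepsilon$. There is no genuine obstacle here: the only step that requires care is verifying the estimate $s_{tm}^{\mathcal{P}} = O(r\, s_m^{\mathcal{P}})$ on the line topology (immediate from the one-round nature of the verification and the fact that there are $r$ edges), and then observing that the additive $s_c^{\mathcal{P}}$ term dominates the $O(d_{\max} s_m^{\mathcal{P}} s_{tm}^{\mathcal{P}})$ term up to a constant factor, so the final expressions simplify as stated.
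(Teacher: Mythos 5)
Your proposal is correct and follows exactly the route the paper intends: the corollary is stated as a direct application of Theorem~\ref{thm:convertion-LOCC} to the protocol of Theorem~\ref{thm:SetEQ}, and your parameter substitution (using $d_{\max}=2$ and $s_{tm}^{\mathcal{P}}=O(r\,s_m^{\mathcal{P}})$ on the line) together with the $\varepsilon/2$ splitting of the error budget is precisely the implicit argument.
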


Note that these LOCC dQMA protocols still have good dependence in the main parameters we are interested in: the parameter $n$ for $\EQ$ (for which the dependence is still exponentially better than any classical dMA protocols) 
%(the classical lower bound is shown in \cite{FLNP21}), 
and the parameters $\ell$ and $|U|$ for $\SE$ (for which the dependence is still exponentially better than any classical dMA protocols, due to Theorem~\ref{thm:SetEQ_classical_lower_bound}).

\subsection{Overview of our proofs}

To explain the proof idea of Theorem~\ref{thm:state-transfer-simplified}, 
we only consider the simplified case $U_1=\cdots=U_r=I$. 
The general case can be proved similarly by a slightly more complicated analysis.
 
The dQMA protocol to prove Theorem~\ref{thm:state-transfer-simplified} 
is based on the dQMA protocol on the line of length $r$ by Fraigniaud et al.~\cite{FLNP21}. 
In the setting of Ref.~\cite{FLNP21}, the left-end extremity $v_0$ has an $n$-bit string $x$, the right-end extremity $v_r$ has an $n$-bit string $y$, and the other intermediate nodes have no input. The goal is to verify whether $x=y$. 
The dQMA protocol in Ref.~\cite{FLNP21} checks whether the fingerprint state $|\psi_0\rangle=|\psi_x\rangle$~\cite{BCWW01} prepared by $v_0$ is equal to the fingerprint state $|\psi_{r}\rangle=|\psi_y\rangle$ 
prepared by $v_r$ ($x=y$), 
or $|\psi_0\rangle$ is almost orthogonal to $|\psi_r\rangle$ ($x\neq y$). 
For this, node $v_j$ ($2\leq j\leq r-1$) receives a subsystem whose reduced state is $\rho_j$ as a certificate from the prover. 
At the verification stage, any node (except for $v_r$) chooses keeping its certificate by itself, 
or sending it to the right neighboring node with probability $1/2$ to check if the reduced states of the two neighboring nodes, $\rho_j$ and $\rho_{j+1}$, 
are close, which can be checked by the SWAP test~\cite{BCWW01} (using Lemma~\ref{lem:FLNP21}). 
If $x=y$, then the prover can send $|\psi_0\rangle~(=|\psi_r\rangle)$ for every intermediate node to pass all the SWAP tests done at the verification stage, 
which means accept. Otherwise, the SWAP test done at some node rejects with a reasonable probability 
since $|\psi_x\rangle$ is very far from $|\psi_y\rangle$, and hence the distance between $\rho_{j}$ and $\rho_{j+1}$ should be far at some $j$. 

Now the case that $U_1=\cdots=U_r=I$ (which means that all nodes except $v_0$ have no input) in the setting of $\SGDI$  (then the goal state $|\varphi_r\rangle$ at $v_r$ is the same as the state $|\psi\rangle$ of $v_0$) is similar to the setting of Ref.~\cite{FLNP21}, except that $v_r$ also has no input. 
 The difficulty is that $v_r$ has no state that can be generated by itself, and thus the analysis of Ref.~\cite{FLNP21} cannot be used as it is. 

To overcome this difficulty, we utilize an idea from the verification of graph states~\cite{HM15PRL,MTH17PRA}, 
in particular, the idea by Morimae, Takeuchi, and Hayashi~\cite{MTH17PRA}. 
They used the following basic idea for their protocol in order to verify an arbitrary graph state $|G\rangle$ sent from the prover (or prepared by a malicious party): 
(i) the verifier receives $(m+k+1)$ subsystems, in which each subsystem ideally contains $|G\rangle$, from the prover; 
(ii) the verifier chooses $m$ subsystems uniformly at random, and discards them; 
(iii) the verifier chooses one subsystem, and some test that $|G\rangle$ should pass (stabilizer test) is done for each of the remaining $k$ subsystems; and 
(iv) if all the tests passed, the chosen subsystem in (iii) should be close to $|G\rangle$, which is proved by using a quantum de Finetti theorem with some measurement condition~\cite{LS15PRL}  (exponentially better in the dimension of the subsystem than the standard quantum de Finetti theorem~\cite{CKMR07CMP}). 
Note that (ii) and (iii) are necessary since the assumption that the total system is permutation-invariant is needed to apply the quantum de Finetti theorem. 
  
Our protocol applies the idea of Ref.~\cite{MTH17PRA} to the verification protocol of Ref.~\cite{FLNP21} explained above. 
Namely, the parties $v_1,v_2,\ldots,v_r$ first receives $(m+k+1)$ subsystems, where each subsystem ideally contains $|\psi\rangle^{\otimes r}$, sent from the prover.
For $k$ subsystems that are randomly chosen, 
we apply the verification protocol of Ref.~\cite{FLNP21}.
Actually, we have a subtle problem with the corresponding steps of (ii) and (iii) in the idea of Ref.~\cite{MTH17PRA}, 
since $v_0,v_1,\ldots,v_r$ do not have any shared randomness, 
and thus those steps cannot be implemented jointly. 
Fortunately, this problem can be overcome 
since the permutation-invariant property is satisfied by the random permutations of $(m+k+1)$ subsystems on {\em each} party. 

The dQMA protocol for Theorem~\ref{thm:SetEQ} is based on the distributed interactive protocol by Naor, Parter, and Yogev~\cite{NPY20} using shared randomness\footnote{While there is no shared randomness in their setting, shared randomness can be simulated by two interactions between the prover and the verifier.}. 
In our setting (line of length $r$), the distributed interactive protocol of Ref.~\cite{NPY20} is as follows with two polynomials $\alpha_j(x):=\prod_{i} (x-a_{j,i})$ and $\beta_j(x):=\prod_{i}(x-b_{j,i})$: 
with shared randomness $s$ (taken from a large field), (i) $v_0$ prepares $A_0(s):=\alpha_0(s)$ and $B_0(s):=\beta_0(s)$;
(ii) $v_j$ ($j=1,2,\ldots,r$) ideally receives $A_j(s):=\alpha_0(s)\cdots \alpha_j(s)$ and $B_j(s):=\beta_0(s)\cdots \beta_j(s)$ from the prover;
(iii) $A_j(s)=\alpha_j(s) A_{j-1}(s)$ and $B_j(s)=\beta_j(s) B_{j-1}(s)$ are checked for consistency by communication from $v_{j-1}$ to $v_j$. 
We can see that when $A=B$, $A_r(s)=B_r(s)$ for any $s$, and thus this protocol accepts with probability $1$ 
by the ideal certificates from the prover, while when $A\neq B$,  $A_r(s) \neq B_r(s)$ for most of $s$, and thus some node rejects with reasonable probability.  

Actually, neither interaction nor shared randomness is available in our setting. Instead, we reduce the protocol by Naor et al.~to $\SGDI$ 
with $|\psi\rangle
=|\psi_A\rangle\otimes |\psi_B\rangle$ 
where $|\psi_A\rangle=\sum_s |s\rangle|\alpha_0(s)\rangle$, 
and $|\psi_B\rangle=\sum_s |s\rangle|\beta_0(s)\rangle$, 
and $U=U_{j,A}\otimes U_{j,B}$, 
where 
$U_{j,A}$ roughly\footnote{We actually need some modifications for $U_{j,A}$ to be unitary.} maps $|s\rangle|t\rangle$ to $|s\rangle|\alpha_j(s)t\rangle$ ($j=1,2,\ldots,r$) 
and $U_{j,B}$ roughly maps $|s\rangle|t\rangle$ to $|s\rangle|\beta_j(s)t\rangle$ ($j=1,2,\ldots,r$).    
Then, Theorem~\ref{thm:state-transfer-simplified} guarantees that $v_r$ receives $\sum_s |s\rangle|A_r(s)\rangle$ and $\sum_s |s\rangle|B_r(s)\rangle$ with high fidelity 
as long as every node accepts with at least the probability guaranteed by Theorem~\ref{thm:state-transfer-simplified}. 
The SWAP test between these at $v_r$ checks if $A=B$ with high probability.

For the classical lower bound of $\SE$ in Theorem~\ref{thm:SetEQ_classical_lower_bound}, we utilize the lower bound for $\mathsf{EQ}_n^2$ of~\cite{FLNP21}. Ref.~\cite{FLNP21} showed that for any classical protocol for $\mathsf{EQ}_n^2$ on the line graph, at least one internal node requires a certificate of linear size. We show that $\mathsf{EQ}_n^2$ can be reduced to $\SE$ in three cases depending on the size of $U$. Here we explain the simplest case: $|U|=\Omega(\ell)$. For a line graph with the left-end extremity $v$ and the right-end extremity $v'$, let $x =x_1x_2\cdots x_n$ be the input of $\mathsf{EQ}_n^2$ for $v$ and $y =y_1y_2\cdots y_n$ be the input of $\mathsf{EQ}_n^2$ for $v'$. Then we consider an injection $f$ from $\{0,1\}^n$ to the set of $3\ell$-bit strings with Hamming weight $\ell$ such that the input list $(a_{v,1},\ldots,a_{v,\ell})$ of $\SE$ for $v$ includes the $j$-th element of the universal set $U$ for $|U|>3\ell$ if and only if $f(x)_j=1$, and the input list $(b_{v',1},\ldots,b_{v',\ell})$ of $\SE$ for $v'$ includes the $j$-th element of the universal set $U$ if and only if $f(y)_j=1$. Now these two sets are identical if and only if $x=y$, which means a reduction from $\mathsf{EQ}_n^2$ to $\SE$ for $\ell = \Theta(n)$. We thus get a lower bound of $\Omega(\ell)$ from the $\Omega(n)$ lower bound of $\mathsf{EQ}_n^2$ mentioned above.

The classical upper bound of $\SE$ in Theorem~\ref{thm:SetEQ_classical_upper_bound} is fairly simple: the prover can send all of inputs $A$ and $B$ to each node to achieve the first upper bound $O(r\ell \log |U|)$. For the second upper bound $O(|U| \log (r\ell))$, the node $v_i$ on the line graph $\{v_0,\ldots, v_r\}$ is given the information of inputs of $v_j, j\in\{0,\ldots,i-1\}$ as the certificate in the form of the number of each element of $U$ in the corresponding inputs.

%\ignore{
The basic proof idea of Theorem~\ref{thm:convertion-LOCC} is standard: 
we replace one qubit communicated between any two nodes $u$ and $v$ 
by two bits using quantum teleportation~\cite{BBC+93PRL}, 
assuming that they share an EPR pair $|\Phi^+\rangle=\frac{1}{\sqrt{2}}(|00\rangle+|11\rangle)$ sent from the prover.  The problem is that the prover may be malicious, and $u$ and $v$ should then verify that the pair sent from the prover is $|\Phi^+\rangle$. 
In order to obtain $|\Phi^+\rangle$ with high fidelity, we actually ask the prover to send $N+1$ copies of the EPR pairs. An honest prover will send the state $|\Phi^+\rangle^{\otimes (N+1)}$, but a malicious prover may naturally send an arbitrary state. Nodes $u$ and $v$ use $N$ among the $N+1$ pairs for the verification. If the verification succeeds, they are guaranteed that the remaining pair has high fidelity with $|\Phi^+\rangle$.  

This type of verification of $|\Phi^+\rangle$ in an adversarial scenario by the malicious prover was considered in a remarkable work by Zhu and Hayashi~\cite{ZH19PRA}. 
Extending the previous result~\cite{PLM18PRL} in a less adversarial scenario, they showed that by taking $N=O(\frac{1}{\varepsilon}\log(\frac{1}{\delta}))$, 
if the verification test succeeds with probability at least $\delta$, the state $\sigma$ of the last pair has a high fidelity with $|\Phi^+\rangle$ such that $\langle\Phi^+|\sigma|\Phi^+\rangle\geq 1-\varepsilon$. Furthermore, the measurements in their verification protocol (essentially the same as those in Ref.~\cite{PLM18PRL}) are local, namely, they do not need any entangled measurement between the two qubits of each pair.

Now the proof idea of Theorem~\ref{thm:convertion-LOCC} uses the verification protocol of Ref.~\cite{ZH19PRA} 
in our setting. To do so, we first observe that the amount of classical communication needed between $u$ and $v$ 
can be upper-bounded by $O(N)$ (which is the same as the certificate size from the prover), by rewriting the protocol of Ref.~\cite{ZH19PRA} with a slight modification in our setting. 
Then we replace the quantum bits sent among the nodes in the original dQMA protocol ${\cal P}$ by classical communication. However, it needs not only a single EPR pair but a lot of EPR pairs to be verified. 
Thus, we need further analysis to convert ${\cal P}$ into an LOCC dQMA protocol and to evaluate the message size of classical communication and the certificate size.  
%}

\subsection{Related work}\label{subsec:related work}

The concept of \textit{distributed Merlin-Arthur protocols} (dMA), which is very similar to the concept of \textit{randomized proof-labeling schemes}~\cite{FraigniaudPP19} was introduced by \cite{FraigniaudMORT19} as a randomized version of locally checkable proofs (LCPs). In a dMA protocol, as in LCPs, the prover assigns each node a short certificate.
The nodes then perform a 1-round distributed algorithm, i.e., exchange messages with their neighbors through incident edges. The difference is that in dMA, this algorithm can be a randomized algorithm, instead of a deterministic algorithm as in LCPs. This randomization is helpful to reduce the size of certificates for some problems.

The recent paper~\cite{KolOS18} introduced the interactive extension of dMA, \textit{distributed interactive proofs}, in which the prover and the verifier can perform more interaction. They showed that interaction is also useful to reduce the size of certificates. This concept has recently been explored in depth by several studies: distributed interactive proofs that utilize quantum certificates~\cite{LMN23}, the role of shared and private randomness~\cite{CrescenziFP19,MRR20}, and more efficient protocols for concrete problems~\cite{JMR22,MRR21,NPY20}.
In particular,~\cite{NPY20} introduced $\SE$, which is one of the problems we study in this paper, and showed efficient interactive protocols for $\SE$ when $\ell = |V|$ and $|U|=O(|V|)$ that require two interactions between the prover and the verifier with certificate size\footnote{For $\SE$, the certificate size of their protocol can be written as $O(\log|U|+\log (\ell |V|))$.} $O(\log |V|)$, and five interactions between the prover and the verifier with certificate size $O(\log \log |V|)$.

The technique we used in this paper from Refs.~\cite{MTH17PRA,ZH19PRA} belongs to a broad and hot topic called ``state certification (state verification)''~\cite{kliesch2021theory,yu2022statistical}. One conceptual contribution of this paper is providing the first concrete example of the effective use of these techniques for quantum distributed verification.

\section{Preliminaries}

\subsection{Quantum information}\label{sub:prelim}

We assume the familiarity with basics of quantum information such as quantum states, time evolutions, and measurements (see \cite{NC00,Wat18book,Wil17} for instance). 

For any quantum states $\sigma$ and $\rho$ in a Hilbert space ${\cal H}$, $D(\sigma,\rho)$ denotes the trace distance between $\sigma$ and $\rho$, 
namely, $D(\sigma,\rho):=\frac{1}{2}\|\sigma-\rho\|_1$, where $\|M\|_1:=\tr{(\sqrt{M^\dagger M})}$ is the trace norm of a matrix $M$. 
$F(\sigma,\rho)$ denotes the fidelity between $\sigma$ and $\rho$, 
namely, $F(\sigma,\rho):=\tr\sqrt{\sigma^{1/2}\rho\sigma^{1/2}}$. 
In particular, $F(|\psi\rangle\langle\psi|,\rho)^2=\langle\psi|\rho|\psi\rangle$, 
and $F(|\psi\rangle\langle\psi|,|\varphi\rangle\langle\varphi|)^2=|\langle\psi|\varphi\rangle|^2$.

The following three lemmas 
are used to evaluate how an ideal state we consider 
and the real state are close in this paper. 
The first is a well-known inequality 
between the trace distance and the fidelity 
called the Fuchs-van de Graaf inequalities 
(for instance, see~\cite{NC00,Wat18book,Wil17}). 
The second lemma can be found in Ref.~\cite{Wil17} for instance. 
The third lemma was proved in Ref.~\cite{MHNF15}.

\begin{lem}[Fuchs-van de Graaf inequalities]\label{lem:FvG}
For any $\sigma,\rho$ in a Hilbert space ${\cal H}$, 
\[
1-F(\sigma,\rho) \leq D(\sigma,\rho)\leq \sqrt{1-F(\sigma,\rho)^2}.
\]
\end{lem}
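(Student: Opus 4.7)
The plan is to prove the two inequalities separately, in each case lifting a statement about pure states or classical distributions to the general mixed-state setting via a standard variational characterisation of the fidelity.

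For the upper bound $D(\sigma,\rho)\leq\sqrt{1-F(\sigma,\rho)^2}$, I would first handle the case of pure states $|\psi\rangle,|\phi\rangle$. Restricting to the (at most) two-dimensional span of $|\psi\rangle$ and $|\phi\rangle$, the eigenvalues of $|\psi\rangle\langle\psi|-|\phi\rangle\langle\phi|$ can be computed explicitly, giving $D(|\psi\rangle\langle\psi|,|\phi\rangle\langle\phi|)=\sqrt{1-|\langle\psi|\phi\rangle|^2}$. I would then invoke Uhlmann's theorem to pick purifications $|\psi\rangle,|\phi\rangle$ of $\sigma,\rho$ on an enlarged Hilbert space satisfying $|\langle\psi|\phi\rangle|=F(\sigma,\rho)$. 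The contractivity of the trace distance under the partial trace (itself a CPTP map) then yields $D(\sigma,\rho)\leq D(|\psi\rangle\langle\psi|,|\phi\rangle\langle\phi|)=\sqrt{1-F(\sigma,\rho)^2}$, as required.

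For the lower bound $1-F(\sigma,\rho)\leq D(\sigma,\rho)$, the plan is to reduce to the classical case using the Fuchs--Caves variational formula $F(\sigma,\rho)=\min_{\{E_m\}}\sum_m\sqrt{p_m q_m}$, where $p_m=\tr(E_m\sigma)$ and $q_m=\tr(E_m\rho)$ and the minimum is over POVMs. Fix an optimal POVM; then, using $\sum_m p_m=\sum_m q_m=1$, rewrite
\[
1-F(\sigma,\rho)=\sum_m\Bigl(\tfrac{p_m+q_m}{2}-\sqrt{p_m q_m}\Bigr)=\tfrac{1}{2}\sum_m(\sqrt{p_m}-\sqrt{q_m})^2.
\]
The elementary term-by-term inequality $|\sqrt{p_m}-\sqrt{q_m}|\leq\sqrt{p_m}+\sqrt{q_m}$ gives $(\sqrt{p_m}-\sqrt{q_m})^2\leq|p_m-q_m|$, so the right-hand side is at most $\tfrac{1}{2}\sum_m|p_m-q_m|=D(p,q)$. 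Finally, the variational characterisation of the trace distance as the maximum of $D(p,q)$ over POVMs gives $D(p,q)\leq D(\sigma,\rho)$, completing the chain.

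The bulk of the work is conceptual rather than computational: both directions ultimately rest on nontrivial variational principles. Uhlmann's theorem (needed for the upper bound) is a fairly standard consequence of the polar decomposition, but the Fuchs--Caves theorem (needed for the lower bound) is the deepest ingredient, and I would expect its invocation to be the main obstacle if one wanted a self-contained exposition. Once both are available, the remaining arguments are short and elementary, and the desired double inequality follows immediately from the two sides above.
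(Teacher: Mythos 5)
Your proof is correct: the upper bound via Uhlmann's theorem plus contractivity of the trace distance under partial trace, and the lower bound via the Fuchs--Caves measurement characterisation of fidelity combined with the elementary pointwise bound $(\sqrt{p_m}-\sqrt{q_m})^2\leq|p_m-q_m|$, is exactly the standard textbook argument. The paper itself supplies no proof of this lemma --- it records it as a well-known fact and cites standard references --- and your argument is precisely the proof one finds in those references, so there is nothing to compare beyond noting that you have filled in what the paper delegates to the literature.
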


%The following lemma is simply called the union bound in Ref.~\cite{Wil17}.

%\ignore{

\begin{lem}[Union bound (for quantum measurements)]\label{lem:union-bound}
For any quantum state $\rho$ in a Hilbert space ${\cal H}$ 
and any two commuting projectors $\Pi_1$ and $\Pi_2$ on ${\cal H}$, 
\[
\tr[(I-\Pi_1\Pi_2)\rho] \leq \tr[(I-\Pi_1)\rho]+\tr[(I-\Pi_2)\rho].
\]
\end{lem}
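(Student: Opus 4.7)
The plan is to exploit the commutativity of $\Pi_1$ and $\Pi_2$ to decompose $I-\Pi_1\Pi_2$ as a sum of two positive semi-definite operators, each dominated by $I-\Pi_1$ or $I-\Pi_2$ respectively, and then take the trace against $\rho$.

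First I would write the algebraic identity
\[
I-\Pi_1\Pi_2 \;=\; (I-\Pi_1) \;+\; \Pi_1(I-\Pi_2),
\]
which is immediate (no commutativity needed for this step). Taking the trace against the positive operator $\rho$ turns this into
\[
\tr[(I-\Pi_1\Pi_2)\rho] \;=\; \tr[(I-\Pi_1)\rho] \;+\; \tr[\Pi_1(I-\Pi_2)\rho].
\]
So the statement reduces to showing $\tr[\Pi_1(I-\Pi_2)\rho]\leq \tr[(I-\Pi_2)\rho]$.

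For this inequality I would use the commutativity of $\Pi_1$ and $\Pi_2$ (equivalently, of $\Pi_1$ and $I-\Pi_2$). Under this assumption $(I-\Pi_1)(I-\Pi_2)$ is a product of two commuting orthogonal projectors, hence itself an orthogonal projector and in particular positive semi-definite. Rewriting
\[
(I-\Pi_2) \;-\; \Pi_1(I-\Pi_2) \;=\; (I-\Pi_1)(I-\Pi_2) \;\succeq\; 0,
\]
we obtain the operator inequality $\Pi_1(I-\Pi_2)\preceq I-\Pi_2$. Since $\rho\succeq 0$, the trace functional preserves this inequality, yielding the desired bound.

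The proof is essentially elementary; the only subtle point is making sure that the commutativity hypothesis is invoked precisely where it is needed, namely to guarantee that $(I-\Pi_1)(I-\Pi_2)$ is a genuine (Hermitian) projector rather than merely a product of projectors. I do not anticipate any real obstacle beyond presenting this decomposition cleanly.
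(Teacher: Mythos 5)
Your proof is correct, and the paper itself does not prove this lemma --- it only cites a textbook reference (Wilde) for it. Your argument via the decomposition $I-\Pi_1\Pi_2=(I-\Pi_1)+\Pi_1(I-\Pi_2)$ together with the observation that $(I-\Pi_1)(I-\Pi_2)\succeq 0$ for commuting projectors is exactly the standard proof found in that reference, so there is nothing to add.
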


\begin{lem}\label{lem:MHNF15}
Let $\rho$ be a state in $\mathcal{H}_1\otimes \mathcal{H}_2$, where $\mathcal{H}_1$ and $\mathcal{H}_2$ are Hilbert spaces.
For any $|x\rangle\in \mathcal{H}_1$,
%\begin{equation}\label{eq:MHNF15}
\[
\max_{\rho'\in\mathcal{H}_2} F(|x\rangle\langle x|\otimes\rho',\rho) = F(|x\rangle\langle x|,\tr_2(\rho)),
\]
%\end{equation} 
where $\tr_2$ is the partial trace over $\mathcal{H}_2$.
\end{lem}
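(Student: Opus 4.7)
The plan is to compute $F(|x\rangle\langle x|\otimes\rho',\rho)$ directly from the definition, exploiting the fact that one of the tensor factors is a rank-one projector, and then optimize over $\rho'$.

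First I would observe that since $|x\rangle\langle x|$ is a projector, $(|x\rangle\langle x|\otimes\rho')^{1/2}=|x\rangle\langle x|\otimes\rho'^{1/2}$. Define the positive operator $\sigma_x:=(\langle x|\otimes I)\,\rho\,(|x\rangle\otimes I)$ on $\mathcal{H}_2$; note that $\tr(\sigma_x)=\langle x|\tr_2(\rho)|x\rangle$. A short calculation using $|x\rangle\langle x|\cdot A\cdot|x\rangle\langle x|=(\langle x|A|x\rangle)\cdot|x\rangle\langle x|$ (for blocks) gives
\[
(|x\rangle\langle x|\otimes\rho'^{1/2})\,\rho\,(|x\rangle\langle x|\otimes\rho'^{1/2})
\;=\;|x\rangle\langle x|\otimes\bigl(\rho'^{1/2}\sigma_x\rho'^{1/2}\bigr).
\]
Taking the square root factors through the tensor product (since $|x\rangle\langle x|$ is idempotent), and tracing, one obtains
\[
F\bigl(|x\rangle\langle x|\otimes\rho',\rho\bigr)\;=\;\tr\sqrt{\rho'^{1/2}\sigma_x\rho'^{1/2}}.
\]

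Next I would maximize the right-hand side over density operators $\rho'$ on $\mathcal{H}_2$. Writing $\sigma_x=t\,\tilde\sigma_x$ with $t=\tr(\sigma_x)$ and $\tilde\sigma_x$ a state (assuming $t>0$; the $t=0$ case is trivial), the expression becomes $\sqrt{t}\,F(\rho',\tilde\sigma_x)$. Since $F(\rho',\tilde\sigma_x)\le 1$ with equality attained by $\rho'=\tilde\sigma_x$, the maximum is $\sqrt{t}=\sqrt{\langle x|\tr_2(\rho)|x\rangle}$. Using the identity $F(|x\rangle\langle x|,\tau)^2=\langle x|\tau|x\rangle$ for any state $\tau$ on $\mathcal{H}_1$, this is precisely $F(|x\rangle\langle x|,\tr_2(\rho))$, completing the proof.

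The main obstacle I anticipate is a purely bookkeeping one: carefully justifying that $\sqrt{A\otimes B}=\sqrt{A}\otimes\sqrt{B}$ may be applied here, which is immediate because $|x\rangle\langle x|$ is a projector so its tensor product with any positive operator has a block-diagonal structure that makes the square root split. With that observation in place, everything else reduces to standard manipulations of the fidelity and the elementary inequality $F(\rho',\tau)\le\sqrt{\tr\tau}$ for subnormalized positive $\tau$.
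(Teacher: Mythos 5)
Your proof is correct. The paper itself does not prove Lemma~\ref{lem:MHNF15}; it only cites the reference [MHNF15], so there is no in-paper argument to compare against. Your self-contained computation is sound: the identity $\sqrt{A\otimes B}=\sqrt{A}\otimes\sqrt{B}$ holds for arbitrary positive semidefinite $A,B$ by uniqueness of the positive square root (you do not even need the projector structure you invoke, though it does no harm), the reduction of $F(|x\rangle\langle x|\otimes\rho',\rho)$ to $\tr\sqrt{\rho'^{1/2}\sigma_x\rho'^{1/2}}$ with $\sigma_x=(\langle x|\otimes I)\rho(|x\rangle\otimes I)$ is a correct block computation, and the optimization step $\max_{\rho'}\sqrt{t}\,F(\rho',\tilde\sigma_x)=\sqrt{t}=\sqrt{\langle x|\tr_2(\rho)|x\rangle}=F(|x\rangle\langle x|,\tr_2(\rho))$ (with the degenerate case $t=0$ handled trivially) completes the argument using exactly the fidelity conventions stated in Section~\ref{sub:prelim}.
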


%}

The SWAP test~\cite{BCWW01} is a quantum protocol 
to check the closeness of two given pure states 
$|\psi_1\rangle$ and $|\psi_2\rangle$, 
both of which are in a Hilbert space ${\cal H}$. 
Namely, it accepts with high probability 
when $|\psi_1\rangle$ and $|\psi_2\rangle$ are close. 
The following fact on the acceptance probability 
of the SWAP test is well-known.

\begin{lem}\label{lem:swap-acceptance-probability}
Given two (mixed) states $\sigma_1$ and $\sigma_2$ as input, 
the SWAP test accepts with probability 
$\frac{1}{2}+\frac{1}{2}\tr(\sigma_1\sigma_2)$. 
\end{lem}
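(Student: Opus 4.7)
The plan is a direct analysis of the SWAP-test circuit. First I would recall its definition: prepare an ancilla qubit in $|0\rangle$, apply a Hadamard to the ancilla, apply a controlled-SWAP of the two registers holding the two input states (controlled by the ancilla), apply another Hadamard to the ancilla, and finally measure the ancilla, accepting on outcome $0$.

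Step~1 handles pure inputs $\sigma_i=|\psi_i\rangle\langle\psi_i|$. Propagating the state symbolically through the circuit, the (unnormalized) $|0\rangle$-branch of the ancilla before the final measurement equals $\tfrac{1}{2}(|\psi_1\rangle|\psi_2\rangle+|\psi_2\rangle|\psi_1\rangle)$. Its squared norm is $\tfrac{1}{2}+\tfrac{1}{2}|\langle\psi_1|\psi_2\rangle|^2$, which matches $\tfrac{1}{2}+\tfrac{1}{2}\tr(\sigma_1\sigma_2)$ in the pure case.

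Step~2 extends to mixed inputs by linearity. Writing $\sigma_1=\sum_i p_i|\psi_i\rangle\langle\psi_i|$ and $\sigma_2=\sum_j q_j|\phi_j\rangle\langle\phi_j|$ in spectral form, and observing that the acceptance probability is an affine function of the input density matrix $\sigma_1\otimes\sigma_2$, the probability becomes $\sum_{i,j}p_iq_j\bigl(\tfrac{1}{2}+\tfrac{1}{2}|\langle\psi_i|\phi_j\rangle|^2\bigr)$. Using $|\langle\psi_i|\phi_j\rangle|^2=\tr(|\psi_i\rangle\langle\psi_i|\cdot|\phi_j\rangle\langle\phi_j|)$ and pulling the convex combinations inside the trace gives $\tfrac{1}{2}+\tfrac{1}{2}\tr(\sigma_1\sigma_2)$.

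A slicker alternative, which I would mention in passing, is to observe that the POVM element corresponding to acceptance is exactly the projector onto the symmetric subspace, $P_+=(I+S)/2$, where $S$ is the SWAP operator on $\mathcal{H}\otimes\mathcal{H}$. Then $\tr(P_+(\sigma_1\otimes\sigma_2))=\tfrac{1}{2}+\tfrac{1}{2}\tr(S(\sigma_1\otimes\sigma_2))$, and the standard identity $\tr(S(A\otimes B))=\tr(AB)$ (verified on a product basis) closes the argument in one line. There is no genuine obstacle here; the only step that requires a touch of care is identifying the acceptance measurement with $P_+$, but this is an elementary consequence of the circuit description.
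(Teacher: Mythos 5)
Your proof is correct; note that the paper does not actually prove this lemma at all, stating it as a well-known fact about the SWAP test (citing the fingerprinting paper of Buhrman et al.). Both of your arguments --- the direct circuit propagation for pure states followed by bilinearity, and the one-line identification of the acceptance POVM element with the symmetric-subspace projector $(I+S)/2$ together with $\tr(S(A\otimes B))=\tr(AB)$ --- are the standard derivations and correctly establish the claimed acceptance probability $\frac{1}{2}+\frac{1}{2}\tr(\sigma_1\sigma_2)$ for product inputs.
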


Actually, the SWAP test receives 
not only the product state of two inputs 
$\sigma_1\in {\cal H}$ and $\sigma_2\in {\cal H}$ 
but any entangled state $\rho$ in ${\cal H}^{\otimes 2}$. 
The following lemma says that 
if the SWAP test accepts with high probability, 
the two reduced states of $\rho$ must be close~\cite{FLNP21}. 

\begin{lem}\label{lem:SWAPtest}
Let $z\geq 1$, and assume that the SWAP test on input $\rho$ in the input registers $({\sf R}_1,{\sf R}_2)$ accepts with probability $1-\frac{1}{z}$.
Then $D(\rho_1,\rho_2)\leq \frac{2}{\sqrt{z}} + \frac{1}{z}$, where $\rho_j$ is the reduced state on ${\sf R}_j$ of $\rho$. 
Moreover, if the SWAP test on input $\rho$ accepts with probability $1$, then $\rho_1=\rho_2$ (and hence $D(\rho_1,\rho_2)=0$).
\end{lem}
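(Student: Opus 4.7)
The plan is an ``approximate symmetrization'' argument: find an operator $\tilde{\rho}$ supported on the symmetric subspace of $\mathcal{H}\otimes\mathcal{H}$ that is close to $\rho$ in trace norm; since SWAP-invariant operators have equal marginals on the two registers, the bound on $D(\rho_1,\rho_2)$ then follows from contractivity of the partial trace together with the triangle inequality.

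First I would recast the SWAP-test acceptance probability as $\mathrm{tr}(\Pi_+\rho)$, where $\Pi_+=\tfrac12(I+S)$ is the projector onto the symmetric subspace of $\mathcal{H}\otimes\mathcal{H}$ and $S$ is the swap operator; this is immediate from the circuit definition of the SWAP test (and is consistent with Lemma~\ref{lem:swap-acceptance-probability} on product inputs). The hypothesis then reads $\mathrm{tr}(\Pi_-\rho)=1/z$, where $\Pi_-=I-\Pi_+$. Decomposing $\rho$ with respect to $\Pi_\pm$ into four blocks and setting $\tilde{\rho}:=\Pi_+\rho\,\Pi_+$, the difference $\rho-\tilde{\rho}$ consists of the antisymmetric diagonal block (of trace norm $\mathrm{tr}(\Pi_-\rho)=1/z$) together with the two off-diagonal coherences $\Pi_\pm\rho\,\Pi_\mp$. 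For each coherence I would invoke the Cauchy--Schwarz-type inequality $\|A\rho B\|_1\le\sqrt{\mathrm{tr}(A\rho A^\dagger)\,\mathrm{tr}(B^\dagger\rho B)}$, valid for positive $\rho$ by writing $\rho=X^\dagger X$ and using $\|MN\|_1\le\|M\|_2\|N\|_2$, to obtain $\|\Pi_+\rho\,\Pi_-\|_1\le\sqrt{(1-1/z)(1/z)}\le 1/\sqrt{z}$. Summing the three off-symmetric contributions gives $\|\rho-\tilde{\rho}\|_1\le 2/\sqrt{z}+1/z$.

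Since $\tilde{\rho}$ is supported on the symmetric subspace, it is SWAP-invariant, so $\mathrm{tr}_1(\tilde{\rho})=\mathrm{tr}_2(\tilde{\rho})$. Combining the triangle inequality with contractivity of the partial trace yields $\|\rho_1-\rho_2\|_1\le 2\|\rho-\tilde{\rho}\|_1$, and dividing by two gives $D(\rho_1,\rho_2)\le 2/\sqrt{z}+1/z$, as claimed. The ``moreover'' clause is then immediate: acceptance probability $1$ forces $\mathrm{tr}(\Pi_-\rho)=0$, which kills both the antisymmetric diagonal block and, via the same Cauchy--Schwarz bound, the two coherences, so $\rho=\tilde{\rho}$ is SWAP-invariant and $\rho_1=\rho_2$. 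The only subtle step is the estimate of $\|\Pi_+\rho\,\Pi_-\|_1$: a naive bound is the useless $\le 1$, and the Cauchy--Schwarz trick above is precisely what converts the quantitative smallness of $\mathrm{tr}(\Pi_-\rho)$ into a quantitative bound on the coherences; everything else in the argument is standard.
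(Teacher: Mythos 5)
Your argument is correct, and it is essentially the standard proof of this lemma: rewrite the acceptance probability as $\tr(\Pi_+\rho)$ with $\Pi_+$ the projector onto the symmetric subspace, bound $\|\rho-\Pi_+\rho\,\Pi_+\|_1\le 2/\sqrt{z}+1/z$ via the block decomposition and the Cauchy--Schwarz estimate on the coherences (i.e., a gentle-measurement-type bound), and then exploit the swap-invariance of $\Pi_+\rho\,\Pi_+$ together with contractivity of the partial trace. Note that the paper itself does not reprove this statement but imports it from Ref.~\cite{FLNP21}, whose proof follows the same route; your write-up matches it, including the exact constants.
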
  

For the measurements on a multipartite quantum system, 
several restricted classes of measurements are considered. 
The most famous one is $\mathrm{LOCC}$ (local operation and classical communication). 
A measurement on a $k$-partite system in $\mathrm{LOCC}$
consisting of subsystems $A_1,A_2,\ldots,A_k$ 
is implemented by local operations 
at each subsystem 
and classical communication among $k$ subsystems. 
A more restricted class is $\mathrm{LOCC}_1$ 
((fully) one-way LOCC) \cite{LS15PRL}. 
A measurement on a $k$-partite system in $\mathrm{LOCC}_1$
consisting of subsystems $A_1,A_2,\ldots,A_k$ 
is implemented by LOCC with the following order; 
local operation at $A_1$; for $j=1$ to $k-1$, 
classical communication from $A_j$ to $A_{j+1}$; 
local operation at $A_{j+1}$.

The quantum de Finetti theorems show 
that any state of $K$-partite system  
$A_1\cdots A_K$ that is a reduced state of a permutation-invariant state on $A_1\cdots A_N$ can be approximated 
by a mixture of $K$-fold products 
$\sigma^{\otimes K}$, if $N$ is sufficiently large than $K$. 
The following version 
of quantum de Finetti theorem 
is proved by Li and Smith~\cite{LS15PRL}, 
which has much better qualities 
on the size $d$ of each subsystem.

\begin{lem}[One-way LOCC measurement de Finetti theorem]\label{lemma:LS15}
Let $\rho_{{A}_1\cdots {A}_N}$ be a permutation-invariant state on ${\cal H}^{\otimes N}$, 
where ${A}_j$ is the $j$th subsystem over a $d$-dimensional system ${\cal H}$.
Then, for integer $0\leq K\leq N$, there exists a probabilistic measure $\mu$ on density matrices on ${\cal H}$ such that 
\[
\left\| 
\rho_{{A}_1\cdots {A}_K} - \int \sigma^{\otimes K} d\mu(\sigma) 
\right\|_{\mathrm{LOCC}_1}
\leq
\sqrt{ \frac{2(K-1)^2\ln d}{N-K} },
\]
where $\|\rho - \sigma\|_{\mathrm{LOCC}_1}
=\max_{M\in\mathrm{LOCC}_1} \| M(\rho)-M(\sigma) \|_1$
($M$ is the measurement operator corresponding to 
some POVM $\{M_x\}_x$\footnote{$M(\rho):=\sum_x \tr(\rho M_x)|x\rangle\langle x|$ 
with an orthogonal basis $\{|x\rangle\}_x$.}).
\end{lem}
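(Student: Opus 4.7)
The plan is to construct the measure $\mu$ by performing a measurement on the $N-K$ ``discarded'' subsystems and then to control the $\mathrm{LOCC}_1$ error by information-theoretic inequalities adapted to this restricted norm. Concretely, I would fix an informationally complete POVM $\{M_y\}$ on $\mathcal{H}$, apply it independently to each of $A_{K+1},\ldots,A_N$, and record the outcomes in a classical register $X$. Let $\tau^X$ denote the post-measurement state on $A_1\cdots A_K$. Since $\rho$ is permutation invariant on $A_1\cdots A_N$, the conditional state $\tau^X$ is permutation invariant on $A_1\cdots A_K$, so its single-site marginals all equal a common state $\sigma_X := \tau^X_{A_1}$; I would take $\mu$ to be the distribution of $\sigma_X$. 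Because measurement on the untouched subsystems commutes with partial trace, $\mathbb{E}_X[\tau^X]=\rho_{A_1\cdots A_K}$, and convexity of the norm reduces the target error to $\mathbb{E}_X\|\tau^X-\sigma_X^{\otimes K}\|_{\mathrm{LOCC}_1}$.

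The core technical ingredient is a Pinsker-type inequality for the $\mathrm{LOCC}_1$ norm: for any state $\omega$ on $\mathcal{H}^{\otimes K}$ whose single-site marginals all equal $\sigma$,
\[
\|\omega-\sigma^{\otimes K}\|_{\mathrm{LOCC}_1}^{2} \;\le\; (2\ln 2)\sum_{k=2}^{K} I(A_k:A_1\cdots A_{k-1})_\omega.
\]
I would prove this by writing any one-way LOCC measurement as a classical-communication cascade that measures $A_1$ first and then conditionally measures $A_2,\ldots,A_K$; each step contributes a classical Pinsker term, and the chain-rule identity $\sum_{k=2}^{K} I(A_k:A_1\cdots A_{k-1})=D(\omega\|\sigma^{\otimes K})$, which is valid precisely because all marginals of $\omega$ coincide with $\sigma$, assembles the pieces. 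Applied with $\omega=\tau^X$, the problem reduces to controlling the expected total correlation $\mathbb{E}_X\sum_k I(A_k:A_1\cdots A_{k-1})_{\tau^X}$.

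To bound this expected total correlation I would exchange expectation and mutual information, using the identity $\mathbb{E}_X I(A_k:A_1\cdots A_{k-1})_{\tau^X}=I(A_k:A_1\cdots A_{k-1}\mid X)_\rho$, and then run a chain-rule argument together with permutation invariance. Expanding $I(A_1\cdots A_K:A_{K+1}\cdots A_N)_\rho \le O(K\log d)$ via the chain rule and averaging over which $K$-subset of $\{1,\ldots,N\}$ plays the role of the ``test'' register (using the full $S_N$-symmetry of $\rho$) yields, after a telescoping comparison between successive subsets, the estimate $\mathbb{E}_X\sum_k I(A_k:A_1\cdots A_{k-1})_{\tau^X}\le (K-1)^2\log d/(N-K)$. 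A final use of Jensen's inequality to pull the square root outside the expectation over $X$ produces the claimed $\sqrt{2(K-1)^2\ln d/(N-K)}$.

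The main obstacle is the $\mathrm{LOCC}_1$ Pinsker inequality. A direct application of ordinary quantum Pinsker to the full trace norm yields a bound containing $\log d^{K}=K\log d$ inside the square root, which upon the subsequent averaging step would destroy the exponential-in-$d$ improvement that is the entire point of this lemma relative to the standard (trace-norm) quantum de Finetti theorem. Exploiting the sequential classical-communication structure of $\mathrm{LOCC}_1$ so that each cascade step genuinely costs only $\log d$ rather than $\log d^{K}$ is the delicate step, and it is precisely what enables the de Finetti bound to be useful for the large-dimensional subsystems arising in the $\SGDI$ application.
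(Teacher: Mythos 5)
The paper does not prove this lemma at all: it is imported verbatim from Li and Smith~\cite{LS15PRL} as a black-box ingredient, so there is no in-paper proof to compare against. Your sketch does follow the broad information-theoretic strategy of that line of work (measure the discarded subsystems, condition on the outcomes, and combine Pinsker's inequality with the chain rule and permutation invariance), but as written it contains a fatal gap, and it also misplaces where the $\mathrm{LOCC}_1$ restriction actually does its work. Your ``core technical ingredient'' is, by the very chain-rule identity you invoke, just $\|\omega-\sigma^{\otimes K}\|_{\mathrm{LOCC}_1}^2\le 2\ln 2\, D(\omega\|\sigma^{\otimes K})$, which follows from ordinary quantum Pinsker plus data processing for \emph{any} measurement and hence says nothing special about one-way LOCC. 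Consequently, if your intermediate estimate
\[
\mathbb{E}_X\sum_{k=2}^{K} I(A_k:A_1\cdots A_{k-1})_{\tau^X}\;\le\;\frac{(K-1)^2\log d}{N-K}
\]
were true, the same Pinsker-plus-Jensen computation would deliver the de Finetti bound in the full \emph{trace} norm with only logarithmic dependence on $d$. That conclusion is known to be false: take $\rho$ to be the normalized projector onto the antisymmetric subspace of $({\mathbb C}^d)^{\otimes N}$ with $d=N$ and $K=2$. Measuring $N-2$ subsystems in the computational basis leaves the remaining pair in a two-dimensional singlet, so $\mathbb{E}_X I(A_1:A_2)_{\tau^X}=2$ while your bound would force it to be $O(\log d/d)$; and indeed $\rho_{A_1A_2}$ stays at constant trace distance from every mixture of products. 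So the step you justify by ``a telescoping comparison between successive subsets'' cannot be repaired in the form you state it.

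The missing idea is that the measurement $M\in\mathrm{LOCC}_1$ being optimized over must itself be threaded into the entropic argument. Because $M$ measures $A_1$, forwards the classical outcome, then measures $A_2$, and so on, the quantities one can actually control by the chain rule and symmetry are conditional mutual informations of the form $I(A_k:Y_1\cdots Y_{k-1}\mid X)$, where $Y_1,\ldots,Y_{k-1}$ are the \emph{classical outcomes} of $M$'s local measurements on the earlier subsystems; each such term is bounded via $S(A_k)\le\log d$ together with a chain-rule/averaging argument over how many extra systems have been measured. Keeping $A_1\cdots A_{k-1}$ quantum, as you do, is exactly what breaks the argument (and is exactly why the theorem's conclusion holds only in the $\mathrm{LOCC}_1$ norm: in the antisymmetric example, locally measuring $A_1$ first destroys the correlations that make the state far from separable in trace norm). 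With that correction the remaining ingredients of your outline---conditioning on measurement outcomes, convexity, permutation invariance, Pinsker, Jensen---are the right ones.
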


\subsection{dQMA protocols}\label{sub:dQMA}

We consider a decision problem on a connected graph (called the network) $G=(V,E)$, where~$t$ inputs $x_1,x_2,\ldots,x_t$ are assigned to $t$ nodes $v_1,v_2,\ldots,v_t\in V$. We interpret the decision problem as a Boolean function $f$, where $f(x_1,x_2,\ldots,x_t)=1$ is interpreted as ``yes'' and $f(x_1,x_2,\ldots,x_t)=0$ is interpreted as ``no''.

%The goal is to compute $f(x_1,x_2,\ldots,x_t)$. 
%To compute $f(x_1,x_2,\ldots,x_t)$, 
%not only $v_1,v_2,\ldots,v_t$ but all nodes of $V$ 
%must communicate with one another in general.
%Communication is synchronous and is possible only if two nodes are connected: at each round, each node can send one message 

The concept of distributed quantum Merlin-Arthur (dQMA) protocols on a graph $G=(V,E)$ is a quantum version of the concept of distributed Merlin-Arthur (dMA) protocols. The aim of a dMA protocol is to verify whether $f(x_1,x_2,\ldots,x_t)=1$ or not. 
As briefly explained in Section \ref{sub:result1}, the nodes of $G$ (which correspond to the verifier) first receive a message from a powerful but possibly malicious party (the prover). The nodes then enter a verification phase, in which they communicate together (but do not communicate with the prover anymore). The communication is possible only if two nodes are connected: each node can send one message to each of its neighbors. In the case of dQMA protocols, the only difference is that the message from the prover and the communication among the nodes may be quantum. 
Note that neither randomness nor entanglement are shared among the nodes in advance. 

Formally, 
in a dQMA protocol ${\cal P}$ on $G=(V,E)$, 
each node $u\in V$ first receives a quantum register ${\sf M}_u$ from the prover.
Then the nodes move to the verification stage, which consists of the following steps:
(i) $u$ applies a local quantum (or classical) operation on the composite system of ${\sf M}_u$ and its private register ${\sf V}_u$; 
(ii) $u$ sends a quantum (or classical) register ${\sf M}_{uv}$ to any neighboring node $v$, and 
(iii) $u$ applies a local quantum (or classical) operation on ${\sf M}_u$, ${\sf V}_u$, and $\otimes_{v\in N(u)} {\sf M}_{vu}$, 
and either accepts or rejects (we call this the decision of $u$), 
where $N(u)$ denotes the set of nodes 
that are neighbors of $u$. When local operations at each node and communication among the nodes in the verification stage are classical, the dQMA protocol is called {\em LOCC (Local Operation and Classical Communication)}.

The two main complexity measures of ${\cal P}$ are the certificate size and the message size.
The certificate size of ${\cal P}$, denoted as $s_c^{{\cal P}}$, is the maximum number of qubits that are sent to each node from the prover, 
that is, 
$s_c^{{\cal P}}:=\max_{u\in V} |{\sf M}_u|$, where $|{\sf R}|$ denotes the number of qubits of ${\sf R}$. 
The message size of ${\cal P}$, denoted as $s_m^{{\cal P}}$, is the maximum number of qubits sent on edges of $G$, namely, 
$s_m^{{\cal P}}:=\max_{(u,v)\in E}(|{\sf M}_{uv}|+|{\sf M}_{vu}|)$. 

A dQMA protocol ${\cal P}$ 
for a decision problem $f$ on $G$  
with completeness $p_c$ and soundness $p_s$ 
is defined as a dQMA protocol satisfying 
the following two conditions:

\begin{description}
\item[(completeness)] 
\sloppy If $f(x_1,x_2,\ldots,x_t)=1$, 
there exists some quantum state 
$|\chi\rangle$ on 
${\sf M}:=\otimes_{u\in V} {\sf M}_u$ 
such that $\Pr[\mbox{all nodes accept}]\geq p_c$; 
\item[(soundness)]
If $f(x_1,x_2,\ldots,x_t)=0$,
for any quantum state $|\chi\rangle$ on ${\sf M}$, 
$\Pr[\mbox{all nodes accept}]\leq p_s$.
\end{description}

In this paper, we consider the problem of generating 
a quantum state $|\varphi\rangle$ on a network $G=(V,E)$. 
In this problem, some initially specified nodes $w_1,\ldots,w_{\kappa}$ not only make their decisions (accept or reject) but also output the quantum state $|\varphi\rangle$ jointly (if they accept). 
In our specific problem, the $n$-qubit $\SGDI$, 
%on the line of length $r$, 
all nodes of the line graph with nodes 
$v_0,v_1,\ldots,v_r$ have an input
($v_0$ has a classical description of $|\psi\rangle$ and
$v_j$ for $j=1,2,\ldots,r$ has a classical description of $U_j$),
$|\varphi\rangle=|\varphi_r\rangle~(:=U_r\cdots U_1|\psi\rangle)$, 
$\kappa=1$, and $w_1=v_r$. 

In a dQMA protocol for the problem of generating $|\varphi\rangle$ on $G$, 
the completeness and soundness conditions 
are slightly different from the case of decision problems. For our purpose we actually only need to discuss perfect-completeness protocols.
We say that the dQMA protocol has perfect completeness 
and $(\delta,\varepsilon)$-soundness 
if the following completeness and soundness are satisfied:

\begin{description}
\item[(completeness)] 
There exists a quantum state $|\chi\rangle$ on 
${\sf M}$ such that 
$$
\Pr[\mbox{all nodes accept and $w_1,\ldots,w_{\kappa}$ output $|\varphi\rangle$ jointly}]=1;
$$
\item[(soundness)]
If all nodes accept with probability at least $\delta$,
then the output $\tilde{\rho}$ of $w_1,\ldots,w_{\kappa}$ (under the condition that all nodes accept) satisfies
\[
\langle\varphi|\tilde{\rho}|\varphi\rangle \geq 1-\varepsilon.
\]
\end{description}

The soundness condition is regarded as 
a kind of hypothesis testing (i.e., if the verifier's test passes with probability 
greater than a threshold, then the state would be close to the ideal one). A similar completeness-soundness condition is used for the interactive proofs for synthesizing quantum states \cite{RY22}.

%======================================================
\section{dQMA Protocol for State Generation with Distributed Inputs}\label{sec:SDGI}
%\section{Proof of Theorem~\ref{thm:state-transfer-simplified}}
In this section we present our dQMA protocol for the $n$-qubit State Generation with Distributed Inputs over the line of length $r$ ($n$-qubit $\SGDI$) and prove Theorem~\ref{thm:state-transfer-simplified}.

\ignore{The parties $v_0,v_1,\ldots,v_r$ are the nodes of the line graph of length $r$, where $v_j$ and $v_{j+1}$ are connected $(j=0,1,\ldots,r-1)$. The left-end extremity $v_0$ has an $n$-qubit state $|\psi\rangle$, which can be prepared by $v_0$, as input. $v_j$ $(j=1,2,\ldots,r)$ has an $n$-qubit unitary transformation $U_j$ as input. In this setting, the aim is to generate the quantum state\[|\varphi_r\rangle:=U_{r}\cdots U_1|\psi\rangle\] at the right-end extremity $v_r$. We call this problem {\em State Generation with Distributed Inputs (SGDI)} on the line of length $r$. This problem is not solvable in one round clearly by considering the case where $U_1=\cdots=U_r=I$.    

Instead, the prover, who may be malicious, helps them in a non-interactive way, that is, he/she just sends a quantum message to them. In this case, we can verify the generate of $|\varphi_r\rangle$ at $v_r$ in one round.  
}

%\subsection{Result}

\subsection{dQMA protocol for SGDI}

The following is our dQMA protocol for $n$-qubit $\SGDI$.

\fboxsep=6pt
\begin{breakbox}
\noindent
{\bf Protocol ${\cal P}_{{\sf SGDI}}$}: 
Let $k=144cr^{2+\eta}$ and $m=2cn k^2(r+1)^{1+\eta}$ 
for any constant $c>0$ 
and any small constant $\eta\geq 0$.

\begin{enumerate}
\item $v_0$ prepares $(m+k+1)$ copies of $|\psi\rangle$ in $n$-qubit registers ${\sf R}_{0,j}$ ($j=1,2,\ldots,m+k+1$).
\item The prover sends each $v_{l}$, where $l=1,2,\ldots,r$, $(m+k+1)$ $n$-qubit registers ${\sf R}_{l,1},{\sf R}_{l,2},\ldots,{\sf R}_{l,m+k+1}$.
\item Each $v_l$ ($l=1,2,\ldots,r$) permutes the $(m+k+1)$ registers ${\sf R}_{l,1},{\sf R}_{l,2},\ldots,{\sf R}_{l,m+k+1}$ 
by a permutation $\pi$ on $\{1,2,\ldots,m+k+1\}$ taken uniformly at random, and renames ${\sf R}_{l,j}:={\sf R}_{l,\pi(j)}$. 
\item \sloppy The parties $v_0,v_1,\ldots,v_r$ implement the following subprotocol ${\cal P}_{{\sf SGDIV}}$ (a modification of the verification steps in Ref.~\cite{FLNP21}) 
on registers ${\sf R}_{0,j},{\sf R}_{1,j},\ldots,{\sf R}_{r,j}$ for each $j=2,3,\ldots,k+1$ in order.
If some party rejects for some $j$, the protocol rejects.
\item $v_r$ outputs ${\sf R}_{r,1}$.
\end{enumerate}
\end{breakbox}\vspace{3mm}

\begin{breakbox}
\noindent
{\bf Protocol ${\cal P}_{{\sf SGDIV}}$}: 
Assume that $v_0$ has $|\psi\rangle$ on $n$-qubit register ${\sf R}_0$, 
and $v_{l}$ ($l=1,2,\ldots,r$) receives $n$-qubit register ${\sf R}_{l}$.  

\begin{enumerate}
\item For every $j=0,1,\ldots,r-1$, party $v_j$ chooses a bit $b_j$ uniformly at random, 
and sends its register ${\sf R}_j$ to the right neighbor $v_{j+1}$ whenever $b_j=0$.
\item For every $j=1,2,\ldots,r$, if $v_j$ receives a register from the left neighbor $v_{j-1}$, 
and if $b_j=1$, then $v_j$ applies $U_j$ on register ${\sf R}_{j-1}$, and performs the SWAP test 
on the registers $({\sf R}_{j-1},{\sf R}_j)$, and accepts or rejects accordingly; Otherwise, $v_j$ accepts. 
\end{enumerate}
\end{breakbox}\vspace{3mm}
\fboxsep=3pt

We can show the following theorem, 
which induces Theorem~\ref{thm:state-transfer-simplified} 
by a special case with $\eta=0$.

\begin{theorem}\label{thm:state-transfer}
Protocol ${\cal P}_{{\sf SGDI}}$ has 
perfect completeness and 
$(\frac{1}{(cr^{\eta})^{1/4}},\frac{1}{(cr^{\eta})^{1/4}})$-soundness. 
The certificate size of ${\cal P}_{{\sf SGDI}}$ is $O(n^2r^{5+3\eta})$ and the message size is $O(nr^{2+\eta})$.
\ignore{the following property: 
%There exists a dQMA protocol for SGDI on the line of length $r$ with the following property:
\begin{description}
\item[(completeness)] 
If the prover is honest, 
that is, the contents of ${\sf R}_{l,j}$ is $$
|\varphi_{l}\rangle := U_{l}\cdots U_2U_1|\psi\rangle
$$ for any $l\in\{1,2,\ldots,r\}$ and $j\in\{1,2,\ldots,m+k+1\}$, 
$v_r$ outputs $|\varphi_r\rangle$ with probability $1$.  
\item[(soundness)]
If ${\cal P}_{{\sf SGDI}}$ accepts (i.e., no one rejects) with probability at least $\frac{1}{(cr^{\eta})^{1/4}}$,  
%$v_r$ outputs ${\sf R}_{r,1}$ without rejecting the protocol, 
then the contents $\rho$ of ${\sf R}_{r,1}$ satisfies 
\[
\langle \varphi_r|\rho|\varphi_r \rangle \geq 1-\frac{1}{(cr^{\eta})^{1/4}}.
\]
\end{description}
}
\end{theorem}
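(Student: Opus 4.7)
The plan is to establish completeness by direct inspection and to derive soundness by combining a quantum de Finetti reduction (via Lemma~\ref{lemma:LS15}) with a SWAP-test chain analysis of the same flavour as \cite{FLNP21}.

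\textbf{Completeness and resource bounds.} If the prover honestly supplies $|\varphi_l\rangle^{\otimes(m+k+1)}$ to each $v_l$ (with $|\varphi_0\rangle:=|\psi\rangle$), the state is permutation symmetric, so step~3 leaves every register $R_{l,j}$ in state $|\varphi_l\rangle$. In each invocation of ${\cal P}_{{\sf SGDIV}}$, the SWAP test at $v_j$ compares $U_j|\varphi_{j-1}\rangle=|\varphi_j\rangle$ with $|\varphi_j\rangle$ and accepts with probability $1$; the output $R_{r,1}$ contains $|\varphi_r\rangle$. The size bounds are routine counting: each node holds $m+k+1$ registers of $n$ qubits, giving certificate size $O(n(m+k+1))=O(n^2r^{5+3\eta})$, and at most one $n$-qubit register per invocation of ${\cal P}_{{\sf SGDIV}}$ crosses each edge, giving message size $O(nk)=O(nr^{2+\eta})$.

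\textbf{De Finetti reduction (main obstacle).} The conceptual difficulty for soundness is that the nodes share no randomness, so each $v_l$ ($l\geq 1$) can only permute its own $m+k+1$ registers independently, and the averaged state on the prover's system is manifestly invariant only under tuples $(\pi_1,\ldots,\pi_r)$ of independent permutations. The key observation that unlocks the argument is that invariance under all such tuples includes, as a special case, invariance under the diagonal action $\pi_1=\cdots=\pi_r=\pi$. Regrouping the $j$-th register across $v_1,\ldots,v_r$ into a single ``big copy'' of dimension $d=2^{nr}$, the prover's state on the $m+k+1$ big copies is thus permutation symmetric. Applying Lemma~\ref{lemma:LS15} with $N=m+k+1$ and $K=k+1$ yields a measure $\mu$ on big-copy states $\tilde\sigma$ with
\[
\Bigl\|\rho_{\mathrm{big\ copies}\ 1,\ldots,k+1} - \int \tilde\sigma^{\otimes(k+1)}\,d\mu(\tilde\sigma)\Bigr\|_{\mathrm{LOCC}_1} \leq \epsilon_{\sf deF}:=\sqrt{\tfrac{2k^2 nr\ln 2}{m}}=O\bigl(1/\sqrt{cr^\eta}\bigr),
\]
where the LOCC$_1$ norm is taken across the $k+1$ big copies. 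Since the $k$ runs of ${\cal P}_{{\sf SGDIV}}$ together with the fidelity measurement on $R_{r,1}$ against $|\varphi_r\rangle$ act as a tensor product across big copies, this bound transfers with additive error $\epsilon_{\sf deF}$ to any joint acceptance/output probability. The need to lift the de Finetti argument from dimension $2^n$ to $2^{nr}$ is the main obstacle of the proof: it is precisely what forces $m=\Theta(nrk^2)$ and hence the $O(n^2r^{5+3\eta})$ certificate size.

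\textbf{Single-copy chain analysis.} Fix a $\tilde\sigma$ and analyse one run of ${\cal P}_{{\sf SGDIV}}$ on $|\psi\rangle\langle\psi|_{R_0}\otimes\tilde\sigma$. The SWAP test at $v_j$ is performed iff $(b_{j-1},b_j)=(0,1)$, an event of probability $1/4$; hence, writing $q:=1-p(\tilde\sigma)$ for the overall reject probability, each conditional SWAP reject probability satisfies $q_j\leq 4q$. Lemma~\ref{lem:SWAPtest} then gives $D(U_j\tilde\sigma|_{R_{j-1}}U_j^\dagger,\tilde\sigma|_{R_j})\leq 2\sqrt{q_j}+q_j$, and the iterated triangle inequality along the chain (starting from $\tilde\sigma|_{R_0}=|\psi\rangle\langle\psi|$ and using unitary invariance of the trace distance) produces $D(|\varphi_r\rangle\langle\varphi_r|,\tilde\sigma|_{R_r})=O(r\sqrt{q})$. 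Lemma~\ref{lem:FvG} then gives $\langle\varphi_r|\tilde\sigma|_{R_r}|\varphi_r\rangle\geq 1-O(r\sqrt{q})$.

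\textbf{Combining via a two-case split.} Set $\varepsilon:=1/(cr^\eta)^{1/4}$ and call $\tilde\sigma$ ``good'' if $q(\tilde\sigma)\leq q_0:=\Theta(\varepsilon^2/r^2)$, so that good $\tilde\sigma$ yield single-copy output fidelity $\geq 1-\varepsilon/2$ at $v_r$. For bad $\tilde\sigma$, $p(\tilde\sigma)^k\leq e^{-kq_0}=e^{-\Theta(\sqrt{cr^\eta})}$, which by the choice $k=144cr^{2+\eta}$ is negligible compared to the acceptance threshold $\delta:=\varepsilon$. After the de Finetti approximation, the conditional output of the protocol equals $\int p^k\,\tilde\sigma|_{R_r}\,d\mu/\int p^k\,d\mu$ up to an $O(\epsilon_{\sf deF}/\delta)=O(\varepsilon)$ additive error; the bad contribution is exponentially small in $\sqrt{cr^\eta}$, while the good contribution has fidelity $\geq 1-\varepsilon/2$ with $|\varphi_r\rangle$. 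Absorbing the hidden constants into $c$ yields the claimed $(\varepsilon,\varepsilon)$-soundness.
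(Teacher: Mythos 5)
Your proposal is correct and follows essentially the same route as the paper's own proof: completeness and size bounds by inspection, the observation that independent local permutations yield diagonal permutation invariance of the column-grouped state so that the Li--Smith one-way LOCC de Finetti theorem applies to $(k+1)$ "big copies," a SWAP-test chain bound relating the single-run rejection probability to the output fidelity at $v_r$, and a conditioning argument to conclude. The only differences are presentational (you bound each conditional test-rejection probability by $4q$ and use an explicit good/bad split over $\mu$, where the paper uses a Cauchy--Schwarz averaging step and analytically maximizes $(1-\beta(\rho))^k/\alpha(\rho)$), and the final constant bookkeeping needed to land exactly on $(cr^{\eta})^{-1/4}$ is left implicit but is routine.
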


\subsection{Proof of Theorem~\ref{thm:state-transfer}}
  
We can see that the certificate size of ${\cal P}_{{\sf SGDI}}$ is $(m+k+1)n=O(n^2 r^{5+3\eta})$ 
from step 2 of ${\cal P}_{{\sf SGDI}}$. Since ${\cal P}_{{\sf SGDI}}$ implements ${\cal P}_{{\sf SGDIV}}$ $(k+1)$ times,
and the message size of ${\cal P}_{{\sf SGDI}}$ is $O(n)$, the message size of ${\cal P}_{{\sf SGDI}}$ is $O(nk)=O(nr^{2+\eta})$.

The completeness clearly holds: since the prover honestly sends 
\[
|\varphi_l\rangle:=U_l\cdots U_1|\psi\rangle 
\]
as the content of ${\sf R}_{l,j}$ 
for each $j\in\{1,2,\ldots,m+k+1\}$ 
and then all the SWAP tests in ${\cal P}_{{\sf SGDIV}}$ accept with probability $1$. 

%The proof of the soundness can be found in Appendix~\ref{Appendix:proof-lem:FLNP21}.

%In this section, we complete the proof of Theorem~\ref{thm:state-transfer} by proving the soundness of ${\cal P}_{{\sf SGDI}}$.
For the soundness, we use the following lemma on the subprotocol ${\cal P}_{{\sf SGDIV}}$. 
  
\begin{lem}\label{lem:FLNP21}
Let $\rho$ be the reduced state from the prover to $v_r$ before the local test in step 2 of ${\cal P}_{{\sf SGDIV}}$. 
If $\langle \varphi_r|\rho|\varphi_r\rangle \leq 1-\frac{1}{\alpha(\rho)}$, 
the probability that some party rejects is at least $\beta(\rho)=\frac{1}{72\alpha(\rho)^2 r^2}$.
\end{lem}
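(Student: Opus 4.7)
My plan is to reduce the claim to a local statement about consecutive pairs: show that if the reduced state on $v_r$'s register is far from $|\varphi_r\rangle$, then there exist adjacent registers $({\sf R}_{j-1},{\sf R}_j)$ whose reduced states (after applying $U_j$ on the left one) are far apart in trace distance, so the corresponding SWAP test must reject with non-trivial probability. Throughout, let $\rho_l$ denote the reduced state on ${\sf R}_l$ of the joint state held by $v_0,v_1,\ldots,v_r$ at the very start of step~1 of ${\cal P}_{\sf SGDIV}$, and set $d_l := D(|\varphi_l\rangle\langle\varphi_l|,\rho_l)$. Crucially $\rho_0=|\psi\rangle\langle\psi|$ (prepared honestly by $v_0$), so $d_0=0$, while the hypothesis gives $\langle\varphi_r|\rho|\varphi_r\rangle \leq 1-1/\alpha(\rho)$.

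The key steps, in order, are the following. (i) Convert the fidelity hypothesis into a trace-distance lower bound: either invoke Lemma~\ref{lem:FvG} to obtain $d_r \geq 1-\sqrt{1-1/\alpha(\rho)} \geq 1/(2\alpha(\rho))$, or use the sharper pure-state bound $D(|\varphi_r\rangle\langle\varphi_r|,\rho)\geq 1-\langle\varphi_r|\rho|\varphi_r\rangle\geq 1/\alpha(\rho)$ obtained by plugging $|\varphi_r\rangle\langle\varphi_r|$ into the variational formula for $D$. (ii) Apply the triangle inequality together with unitary invariance of $D$ and the identity $U_j|\varphi_{j-1}\rangle=|\varphi_j\rangle$ to telescope
\[
d_r \;\leq\; \sum_{j=1}^{r} D\bigl(U_j\rho_{j-1}U_j^\dagger,\;\rho_j\bigr),
\]
so pigeonhole yields an index $j^{*}$ with $D(U_{j^*}\rho_{j^*-1}U_{j^*}^\dagger,\rho_{j^*})\geq d_r/r=\Omega(1/(r\alpha(\rho)))$. (iii) Condition on the random bits with $b_{j^*-1}=0$ and $b_{j^*}=1$, an event of probability $1/4$, which triggers the SWAP test at $v_{j^*}$ on the joint two-register state in $({\sf R}_{j^*-1},{\sf R}_{j^*})$ after $U_{j^*}$ has been applied to ${\sf R}_{j^*-1}$. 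Since the bit draws are independent of the quantum state, the two reduced states seen by this SWAP test are exactly $U_{j^*}\rho_{j^*-1}U_{j^*}^\dagger$ and $\rho_{j^*}$. The contrapositive of Lemma~\ref{lem:SWAPtest} then forces the conditional rejection probability to be $\Omega(1/(r^2\alpha(\rho)^2))$. (iv) Multiplying by the $1/4$ probability that the test happens gives the required $\beta(\rho)=1/(72\alpha(\rho)^2 r^2)$ lower bound on the total rejection probability, after tracking constants.

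The main conceptual care needed is in step (iii): I must argue that the bit choices, which are made locally and without shared randomness, do not correlate with the (possibly adversarially entangled) certificate in a way that invalidates the application of Lemma~\ref{lem:SWAPtest}. This is fine because the $b_j$'s are fresh private coins drawn after the certificates are fixed, so conditioning on $b_{j^*-1}=0,\,b_{j^*}=1$ leaves the quantum state unchanged, and the SWAP test is then applied to a joint state in $({\sf R}_{j^*-1},{\sf R}_{j^*})$ whose marginals are precisely $U_{j^*}\rho_{j^*-1}U_{j^*}^\dagger$ and $\rho_{j^*}$. The remaining work is purely arithmetic: solve $2\sqrt{p}+p\geq \delta$ (with $\delta=\Theta(1/(r\alpha(\rho)))$) using the quadratic formula to extract the explicit constant, combine with the $1/4$ factor, and check that the resulting bound is at least $1/(72\alpha(\rho)^2 r^2)$.
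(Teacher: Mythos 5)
Your proposal is correct and follows the same skeleton as the paper's proof: telescope the trace distance along the line via unitary invariance and the triangle inequality (your step (ii) is exactly the paper's Fact~\ref{fact1}), lower-bound $D(|\varphi_r\rangle\langle\varphi_r|,\rho)$ from the fidelity hypothesis, and convert a large gap between adjacent marginals into SWAP-test rejection via Lemma~\ref{lem:SWAPtest}. Where you genuinely diverge is the aggregation step. The paper keeps all $r$ terms: it bounds $D(U_j\rho_{j-1}U_j^\dagger,\rho_j)\leq 3\sqrt{\alpha_j}$ with $\alpha_j=\Pr[E_j\mid F_j]$, applies Cauchy--Schwarz to get $\sum_j\alpha_j\geq \frac{1}{36\alpha(\rho)^2 r}$, and then needs a somewhat delicate argument over the $2^r$ coin outcomes (using $\Pr[\vee_{i}A_i]\geq\frac{1}{\nu}\sum_i\Pr[A_i]$ and the independence of non-adjacent tests) to convert the sum of conditional rejection probabilities into a bound on the probability that some node rejects, at the cost of a factor $\frac{1}{2r}$. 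You instead pigeonhole onto a single worst index $j^*$ and only use the event that that one test fires and rejects; this entirely sidesteps the paper's coin-outcome bookkeeping, which is a genuine simplification, and your justification in step (iii) that the private coins are independent of the certificate is the right thing to say. The price is a factor of $2$ in the constant: with the Fuchs--van de Graaf bound $D\geq\frac{1}{2\alpha(\rho)}$, your route gives $D_{j^*}\geq\frac{1}{2\alpha(\rho)r}$, hence $\Pr[E_{j^*}\mid F_{j^*}]\geq\frac{1}{36\alpha(\rho)^2r^2}$ and an overall bound of $\frac{1}{4}\cdot\frac{1}{36\alpha(\rho)^2r^2}=\frac{1}{144\alpha(\rho)^2r^2}$, which falls short of the stated $\frac{1}{72\alpha(\rho)^2r^2}$. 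So you must commit to the sharper variational bound $D(|\varphi_r\rangle\langle\varphi_r|,\rho)\geq 1-\langle\varphi_r|\rho|\varphi_r\rangle\geq\frac{1}{\alpha(\rho)}$ that you list only as an alternative; with it you obtain $\frac{1}{36\alpha(\rho)^2r^2}$, which is in fact better than the claimed constant. (One cosmetic point: for $j^*=r$ the test fires whenever $b_{r-1}=0$, i.e., with probability at least $1/4$, so the $1/4$ factor still applies there.)
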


\begin{proof}
Let $\rho_0=|\psi\rangle\langle\psi|$, and let $\rho_j$ be the reduced state of ${\sf R}_j$ before the local test in step 2 of ${\cal P}_{{\sf SGDIV}}$, 
for $j=1,2,\ldots,r$ (note that $\rho_r=\rho$).
For every $j=1,2,\ldots,r$, let $F_j$ be the event that $v_j$ performs the local test in step 2 of ${\cal P}_{{\sf SGDIV}}$, 
and let $E_j$ be the event that the local test rejects. 

Let $\alpha_j = \Pr[E_j|F_j]$. 
Then, for every $j=1,2,\ldots,r$, $\Pr[\overline{E_j}|F_j]=1-\alpha_j$, 
where we note that the complementary event $\overline{E_j}$ is the event 
where the SWAP test on the two $n$-qubit states on ${\sf R}_{j-1}$ after applying $U_j$ and ${\sf R}_j$ accepts.
By Lemma~\ref{lem:SWAPtest},
\[
D(U_j\rho_{j-1}U_j^\dagger, \rho_j)\leq \left\{
\begin{array}{ll}
\frac{2}{\sqrt{1/\alpha_j}} + \frac{1}{1/\alpha_j} & \alpha_j\neq 0\\
0 & \mbox{otherwise}
\end{array}
\right.
\]
and thus 
\begin{equation}\label{eq:dist-bound-base}
D(U_j\rho_{j-1}U_j^\dagger, \rho_j)\leq 3\sqrt{\alpha_j}.
\end{equation} 
Then the following fact is shown by induction.

\

\noindent
%{\bf Fact 1}: 
\begin{fact}\label{fact1}
For any $\nu\in\{1,2,\ldots,r\}$, 
\[
D(|\varphi_{\nu}\rangle\langle\varphi_{\nu}|, \rho_{\nu})\leq 3\sum_{j=1}^{\nu} \sqrt{\alpha_j}.
\]
\end{fact}
\noindent
\begin{proof}[Proof of Fact \ref{fact1}]
%{\bf Proof of Fact 1}: 
$\nu=1$ holds by the inequality~(\ref{eq:dist-bound-base}) (the $j=1$ case).
For $\nu>1$, 
\begin{align*}
D(|\varphi_{\nu}\rangle\langle\varphi_{\nu}|, \rho_{\nu})
&\leq D(|\varphi_{\nu}\rangle\langle\varphi_{\nu}|, U_{\nu}\rho_{\nu-1}U_{\nu}^{\dagger}) + D(U_{\nu}\rho_{\nu-1}U_{\nu}^{\dagger}, \rho_{\nu})\\
&= D(|\varphi_{\nu-1}\rangle\langle\varphi_{\nu-1}|, \rho_{\nu-1}) + D(U_{\nu}\rho_{\nu-1}U_{\nu}^{\dagger}, \rho_{\nu})\\
&\leq 3\sum_{j=1}^{\nu-1} \sqrt{\alpha_j} + 3\sqrt{\alpha_{\nu}}\\
&= 3\sum_{j=1}^{\nu} \sqrt{\alpha_j},
\end{align*}
where the second inequality uses the induction and the inequality~(\ref{eq:dist-bound-base}). 
\end{proof}

\

By the Cauchy-Schwarz inequality and Fact 1,
\[
D(|\varphi_r\rangle\langle\varphi_r|,\rho)
=
D(|\varphi_r\rangle\langle\varphi_r|,\rho_r)
\leq 
3\sum_{j=1}^r \sqrt{\alpha_j }
\leq 3\sqrt{r}\sqrt{\sum_{j=1}^r \alpha_j}.
\]
On the contrary, by Lemma~\ref{lem:FvG},
\[
D(|\varphi_r\rangle\langle\varphi_r|, \rho)
\geq 1 - F(|\varphi_r\rangle\langle\varphi_r|,\rho)
\geq 1- \sqrt{1-\frac{1}{\alpha(\rho)}}
\geq 1- \left(1-\frac{1}{2\alpha(\rho)}\right)=\frac{1}{2\alpha(\rho)},
\]
where the second inequality comes from 
the assumption $\langle\varphi_r|\rho|\varphi_r\rangle\leq 1-\frac{1}{\alpha(\rho)}$, 
and the third inequality comes from 
$\sqrt{1-x}\leq 1-\frac{x}{2}$ 
($0\leq x\leq 1$). 
Thus, 
\[
\sum_{j=1}^r \alpha_j \geq \frac{1}{36\alpha(\rho)^2 r}.
\]
%we obtain 
%\[
%\sqrt{r}\sqrt{\sum_{j=1}^r \alpha_j} \geq \sum_{j=1}^r \sqrt{\alpha_j} \geq \frac{1}{3}(\alpha_r+3\sum_{j=1}^{r-1}\sqrt{\alpha_j})\geq \frac{1}{6\alpha(\rho)}.
%\]
Therefore, 
if $\langle \varphi_r|\rho|\varphi_r\rangle \leq 1-\frac{1}{\alpha(\rho)}$, 
\begin{equation}\label{eq:0830}
\sum_{j=1}^r \Pr[E_j|F_j]\geq \frac{1}{36\alpha(\rho)^2 r}.
\end{equation}

In step 2 (of ${\cal P}_{{\sf SGDIV}}$), $v_j$ ($j=1,2,\ldots,r$) performs the local test with probability at least $1/4$.
It follows that, for every $j=1,2,\ldots,r$, the event $F_j$ occurs in at least $(1/4)\times 2^r$ outcomes of all the $2^r$ possible outcomes $b_0\cdots b_{r-1}$ 
that induce $\kappa$ events $F_{j_1},\ldots,F_{j_{\kappa}}$ with $\kappa \neq 0$ where we note that $0\leq \kappa\leq \lfloor r/2\rfloor$ in general.
The probability that some node rejects in step 2 {\em under this outcome} is
\[
\Pr[\vee_{i=1}^{\kappa} E_{j_i}| \wedge_{i=1}^{\kappa} F_{j_i}]
\geq 
\frac{1}{\lfloor r/2\rfloor} \sum_{i=1}^{\kappa} \Pr[E_{j_i}| \wedge_{i=1}^{\kappa} F_{j_i}] 
= \frac{1}{\lfloor r/2\rfloor} \sum_{i=1}^{\kappa} \Pr[E_{j_i}|F_{j_i}],
\] 
where the first inequality comes from the elementary inequality 
on probability $\Pr[\vee_{j=1}^{\nu} A_j]\geq \frac{1}{\nu}\sum_{j=1}^{\nu} \Pr[A_j]$, 
and the equality comes from the fact that each of $F_{j_i}$ and $E_{j_i}$ is independent from all the other event $F_{j_{i'}}$ with $i'\neq i$
(note that $|j_{i'}-j_i|\geq 2$ since $F_{j-1}$ and $F_j$ never occur at the same time).
As each outcome occurs with probability $\frac{1}{2^r}$, the probability that some node rejects in step 2 is at least
\[
\frac{1}{2^r} \cdot [(1/4)\cdot 2^r]\cdot \frac{1}{\lfloor r/2\rfloor} \sum_{j=1}^r \Pr[E_{j}|F_{j}]\geq \frac{1}{2r} \sum_{j=1}^r \Pr[E_{j}|F_{j}]
\geq \frac{1}{72\alpha(\rho)^2 r^2},
\]
where the last inequality comes from the inequality~(\ref{eq:0830}).
This completes the proof of Lemma~\ref{lem:FLNP21}.
\end{proof}

The remaining arguments are based on the idea of the analysis in Ref.~\cite{MTH17PRA}. 

Fix $j\in\{1,2,\ldots,k+1\}$ arbitrarily. 
Let $\rho$ be the $v_r$'s contents in ${\sf R}_{r,j}$ before running ${\cal P}_{{\sf SGDIV}}$ (step 4 in ${\cal P}_{{\sf SGDI}}$), 
and let 
\begin{equation}\label{eq:1}
\langle\psi|\rho|\psi\rangle=1-\frac{1}{\alpha(\rho)}.
\end{equation}
Let $\sigma$ be the $n(r+1)$-qubit reduced state 
on ${\sf R}_{*j}:=\otimes_{l=0}^r {\sf R}_{l,j}$ 
before running ${\cal P}_{{\sf SGDIV}}$.  
%${\sf R}_{0,j},{\sf R}_{1,j},\ldots,{\sf R}_{r,j}$.   %for a arbitrary fixed $j\in\{1,2,\ldots,k+1\}$. 
Thus $\rho=\tr_{{\sf B}_j} (\sigma)$, 
where ${\sf B}_j$ is the system of $v_0,v_1,\ldots,v_{r-1}$, that is, ${\sf B}_j:=\otimes_{l=0}^{r-1} {\sf R}_{l,j}$. 
Let 
$$
\Pi^\bot=I_{{\sf B}_j} \otimes (I-|\psi\rangle\langle\psi|)_{{\sf R}_{r,j}}.
$$ 
Then, Eq.~(\ref{eq:1}) means that 
\begin{equation}\label{eq:2}
\tr(\Pi^\bot \sigma) = 1-\langle \psi|\rho|\psi\rangle = \frac{1}{\alpha(\rho)}.
\end{equation}

Let $\{M,I-M\}$ be the binary POVM such that $M$ corresponds to the acceptance (i.e., the case where no one rejects) 
in the verification of ${\cal P}_{{\sf SGDIV}}$. 
By Lemma~\ref{lem:FLNP21} and Eq.~(\ref{eq:2}),
\begin{equation}\label{eq:3}
\tr\left[ ( \Pi^\bot \otimes M^{\otimes k}  )  \sigma^{\otimes (k+1)} \right] 
\leq (1-\beta(\rho))^k\cdot \frac{1}{\alpha(\rho)}.
\end{equation} 
The righthand of the inequality~(\ref{eq:3}) is a function of $\frac{1}{\alpha(\rho)}$, 
and achieves the maximum value 
\begin{equation}\label{eq:3-1}
\left( 1-\frac{1}{2k+1} \right)^k \sqrt{\frac{72r^2}{2k+1}} \leq \sqrt{\frac{72r^2}{2k+1}} \leq \frac{1}{ 2(cr^{\eta})^{1/2} }
\end{equation}
when $\frac{1}{\alpha(\rho)} = \sqrt{\frac{72r^2}{2k+1}}$ (recalling that $k=144cr^{2+\eta}$).  

Now let $\xi$ be the state on $\otimes_{l=0}^r\otimes_{j=1}^{k+1} {\sf R}_{l,j}$ before running ${\cal P}_{{\sf SGDIV}}$. 
%with $l\in\{0,1,\ldots,r\}$ and $j\in\{1,2,\ldots,k+1\}$.
By the one-way LOCC measurement de Finetti theorem 
(Theorem \ref{lemma:LS15}) and the inequality~(\ref{eq:3-1}),
\begin{align}
\tr\left[ ( \Pi^\bot \otimes M^{\otimes k} )  \xi \right]
&\leq 
\int d\mu( \sigma ) 
\tr\left[ (\Pi^\bot \otimes M^{\otimes k}  )  \sigma^{\otimes (k+1)} \right] + \frac{1}{2}\sqrt{\frac{2k^2\ln (2^{n(r+1)})}{m} }\nonumber\\
&\leq
\frac{1}{2(cr^{\eta})^{1/2} } + \frac{1}{2}\sqrt{\frac{2k^2n(r+1)}{m} }\nonumber\\
&\leq \frac{1}{ (cr^{\eta})^{1/2} } \label{eq:4}. 
\end{align}
(Recall $m=2cn k^2(r+1)^{1+\eta}$.) 
%the righthand of the inequality~(\ref{eq:4}) is at most $\frac{1}{ (cr^{\eta})^{1/2} }$.

\ignore{Here, let \[\xi=\frac{(I\otimes \sqrt{M}^{\otimes k})\Psi(I\otimes \sqrt{M}^{\otimes k})}{\tr[(I\otimes M^{\otimes k})\Psi]}, <-- State after POVM is not uniquely determined!!!\]　where the identity $I$ acts on ${\sf R}_1:=\otimes_{j=0}^r {\sf R}_{j,1}$, and $\tilde{\sigma}$ be the reduced state of $\xi$ on ${\sf R}_1$. 
}
Then we have 
%Note that 
\begin{equation}\label{eq:5}
\tr\left[ ( \Pi^\bot \otimes M^{\otimes k} )  \xi \right]
%=
%\tr[(\Pi^\bot\otimes I)\xi]\tr\left[ ( I \otimes M^{\otimes k} ) \Psi \right]
=
\tr\left[ \Pi^\bot \tilde{\sigma} \right] 
\tr\left[ ( I \otimes M^{\otimes k} ) \xi \right],
\end{equation}
where $\tilde{\sigma}$ is 
the state on ${\sf R}_{*1}=
\otimes_{l=0}^r {\sf R}_{l,1}$ 
under the condition that the protocol accepts (i.e., no one rejects). 
By the inequality~(\ref{eq:4}) and Eq.~(\ref{eq:5}), 
if $\tr\left[ \Pi^\bot \tilde{\sigma} \right]
> \frac{1}{(cr^{\eta})^{1/4}}$, \sloppy 
then 
$\tr\left[ (I \otimes M^{\otimes k} ) \xi \right] < \frac{1}{(cr^{\eta})^{1/4}}$.

Let $\tilde{\rho}$ be the reduced state of $\tilde{\sigma}$ on ${\sf R}_{r,1}$. 
Note that $\tilde{\rho}$ 
is the output state of node $v_r$ under the condition that the protocol accepts. 
Since 
$
\langle\psi|\tilde{\rho}|\psi\rangle
=
1-\tr[\Pi^\bot \tilde{\sigma}],
$
%(recalling that $\rho=\tr_{{\sf B}}(\sigma)$ and $\langle \psi|\rho|\psi\rangle=1-\tr\left[ \Pi^\bot \sigma \right]$)
if the protocol accepts with probability $\geq \frac{1}{ (cr^{\eta})^{1/4} }$, 
then %with probability $1-\frac{1}{r^{1/4}}$,
\[
\langle\psi|\tilde{\rho}|\psi\rangle\geq 1-\frac{1}{ (cr^{\eta})^{1/4} }.
\]
Now the soundness proof of ${\cal P}_{{\sf SGDI}}$, and thus, 
the proof of Theorem~\ref{thm:state-transfer} are completed.

%=====================================================
\section{Application: dQMA Protocol for Set Equality}

In this section we prove Theorem~\ref{thm:SetEQ} by constructing a protocol for $\SE$ based on the protocol for $\SGDI$ developed in Section \ref{sec:SDGI}.

\begin{proof}[Proof of Theorem~\ref{thm:SetEQ}]
We consider $\SE$ (Definition~\ref{def:seteq}) for the line graph of length $r$ with nodes $v_0,v_1,\ldots,v_r$, 
where $v_j$ has $a_{j,1},\ldots,a_{j,\ell }$ and $b_{j,1},\ldots,b_{j,\ell }$.
Let 
$\alpha_j(s):= \prod_{i\in\{1,2,\ldots,\ell \}} (s-a_{j,i})$ and 
$\beta_j(s):=\prod_{i\in\{1,2,\ldots,\ell \}} (s-b_{j,i})$ for each $j\in\{0,1,\ldots,r\}$. 
We identify $a_{u,i},b_{u,i}$ as elements in a finite field $\mathbb{F}$ with size $|\mathbb{F}|\geq \tilde{c}\ell (r+1)2^{\log|U|}$ for some (sufficiently large) constant $\tilde{c}>0$.

Our protocol is as follows:
%\
\fboxsep=6pt
\begin{breakbox}
\noindent
{\bf Protocol ${\cal P}_{{\sf seteq}}$}:

\begin{enumerate}
\item $v_0,v_1,\ldots,v_r$ implement the protocol ${\cal P}_{{\sf SGDI}}$ of Theorem~\ref{thm:state-transfer} 
with $n=2(2\lceil\log |\mathbb{F}|\rceil+\lceil\log (r+1)\rceil)$
and with a sufficiently large $c>0$ and $\eta=0$
for the following case.

\begin{itemize}
\item Let $|\psi\rangle
=|\psi_A\rangle\otimes|\psi_B\rangle$, 
where   
\[
|\psi_A\rangle
=
\frac{1}{\sqrt{|\mathbb{F}|} } \sum_{s\in \mathbb{F}} 
|s\rangle_{{\sf R}_{A,1}} |\alpha_0(s)\rangle_{{\sf R}_{A,2}} 
|0\rangle_{{\sf R}_{A,3}}
\]
and 
\[
|\psi_B\rangle=
\frac{1}{\sqrt{|\mathbb{F}|} } \sum_{s\in \mathbb{F}} 
|s\rangle_{{\sf R}_{B,1}} |\beta_0(s)\rangle_{{\sf R}_{B,2}} 
|0\rangle_{{\sf R}_{B,3}},
\] 
where the contents of ${\sf R}_{A,1},{\sf R}_{A,2},{\sf R}_{B,1},{\sf R}_{B,2}$ are elements of $\mathbb{F}$, 
and the contents of ${\sf R}_{A,3}$ and ${\sf R}_{B,3}$ 
are numbers from $\{0,1,\ldots,r\}$. 
\item Let $U_j$ be
\[
U_j=G_{j,A}\otimes G_{j,B}.
\]
Here, $G_{j,A}$ (depending on $a_{j,1},\ldots,a_{j,\ell }$) 
is defined as a unitary transformation satisfying 
(A1) $G_{j,A}(|s\rangle|t\rangle|0\rangle):=|s\rangle|\alpha_j(s) t\rangle|0\rangle$ if $\alpha_j(s)\neq 0$, 
(A2) $G_{j,A}(|s\rangle|t\rangle|0\rangle):=|s\rangle|t\rangle|1\rangle$ if $\alpha_j(s)= 0$, 
(A3) $G_{j,A}(|s\rangle|t\rangle|\nu\rangle):=|s\rangle|t\rangle|\nu+1\rangle$ if $\nu\in\{1,2,\ldots,r-1\}$, and 
$G_{j,B}$ (depending on $b_{j,1},\ldots,b_{j,\ell }$) 
is defined as a unitary transformation satisfying 
(B1) $G_{j,B}(|s\rangle|t\rangle|0\rangle):=|s\rangle|\beta_j(s) t\rangle|0\rangle$ if $\beta_j(s)\neq 0$, 
(B2) $G_{j,B}(|s\rangle|t\rangle|0\rangle):=|s\rangle|t\rangle|1\rangle$ if $\beta_j(s)= 0$, 
(B3) $G_{j,B}(|s\rangle|t\rangle|\nu\rangle):=|s\rangle|t\rangle|\nu+1\rangle$ if $\nu\in\{1,2,\ldots,r-1\}$.
\end{itemize}

\item Node $v_r$ does the SWAP test for the output system 
${\sf R}_A\otimes {\sf R}_B$ of ${\cal P}_{{\sf SGDI}}$, 
and accepts or rejects accordingly.  
\end{enumerate}
\end{breakbox} \vspace{3mm} \par
\fboxsep=3pt
\noindent\textbf{Analysis:} First, we can see that 
the certificate size and the message size of ${\cal P}_{{\sf seteq}}$
are as desired from Theorem~\ref{thm:state-transfer} with $n=O(\log(\ell r)\log|U|)$. 

Now we consider the completeness and soundness of the protocol ${\cal P}_{{\sf seteq}}$. 
Let  
\[
p_A(s):=\prod_{j\in\{0,1,\ldots,r\} } \alpha_j(s) = \prod_{j\in\{0,1,\ldots,r\},\ i\in\{1,2,\ldots,\ell \}} (a_{j,i}-s)
\] 
and  
\[
p_B(s):=\prod_{j\in\{0,1,\ldots,r\} } \beta_j(s) = \prod_{j\in\{0,1,\ldots,r\},\ i\in\{1,2,\ldots,\ell \}} (b_{j,i}-s).
\]
Note that if $A=B$, $p_A(s)=p_B(s)$ for all $s\in \mathbb{F}$, while 
if $A\neq B$, there are at most $\ell (r+1)$ elements $s$ such that $p_A(s)=p_B(s)$ 
since $p_A(s)-p_B(s)$ is not zero-polynomial and its degree is at most $\ell (r+1)$. \vspace{2mm}
 
\noindent
{\bf Completeness:} The prover does honest operations for the run of ${\cal P}_{{\sf SGDI}}$.
Then, node $v_r$ obtains 
$|\psi_A^f\rangle\otimes |\psi_B^f\rangle$ 
as the output of ${\cal P}_{{\sf SGDI}}$, 
where
%\begin{equation}\label{eq:r-psiA}
\[
|\psi_A^f\rangle:=G_{r,A}\cdots G_{1,A}|\psi_A\rangle
\]
%\end{equation}
and 
%\begin{equation}\label{eq:r-psiB}
\[
|\psi_B^f\rangle:=G_{r,B}\cdots G_{1,B}|\psi_B\rangle.
\]
%\end{equation}

Let $F_A:=\{s\mid p_A(s)=0\}$ and $F_B:=\{s\mid p_B(s)=0\}$. 
Note that $|F_A|\leq \ell (r+1)$ and $|F_B|\leq \ell (r+1)$, and hence $|\overline{F_A}\cap \overline{F_B}|\geq |\mathbb{F}|-|F_A|-|F_B|\geq |\mathbb{F}|-2\ell (r+1)$.

By definition,
\begin{equation}\label{eq:r-psiA+}
|\psi_A^f\rangle = \frac{1}{\sqrt{|\mathbb{F}|}} \sum_{s\in \overline{F_A}} |s\rangle|p_A(s)\rangle|0\rangle + \frac{1}{\sqrt{|\mathbb{F}|}} \sum_{s\in F_A} |s\rangle|g_A(s)\rangle|h_A(s)\rangle 
\end{equation}
for some functions $g_A$ and $h_A$ satisfying $h_A(s)\neq 0$ for any $s\in F_A$. Similarly,
\begin{equation}\label{eq:r-psiB+}
|\psi_B^f\rangle = \frac{1}{\sqrt{|\mathbb{F}|}} \sum_{s\in \overline{F_B}} |s\rangle|p_B(s)\rangle|0\rangle + \frac{1}{\sqrt{|\mathbb{F}|}} \sum_{s\in F_B} |s\rangle|g_B(s)\rangle|h_B(s)\rangle 
\end{equation}
for some functions $g_B$ and $h_B$ satisfying $h_B(s)\neq 0$ for any $s\in F_B$.
Since the coefficients of $|\psi_A^f\rangle$ and $|\psi_B^f\rangle$ are nonnegative,
%\begin{align*}
\[
\langle\psi_A^f|\psi_B^f\rangle 
\geq \frac{1}{|\mathbb{F}|} \sum_{s\in \overline{F_A}\cap \overline{F_B} } 1
\geq \frac{|\mathbb{F}|-2\ell (r+1)}{|\mathbb{F}|}=1-\frac{2\ell (r+1)}{|\mathbb{F}|}.
\]
%\end{align*}
Thus, by Lemma~\ref{lem:swap-acceptance-probability}, the SWAP test of $v_r$ accepts with probability 
\[
\frac{1}{2}+\frac{1}{2}|\langle\psi_A^f|\psi_B^f\rangle|^2\geq 1-\frac{2\ell (r+1)}{|\mathbb{F}|},
\]
which is at least $1-1/c'$ where $c'$ is any large constant when we appropriately choose the constant $\tilde{c}$.\vspace{2mm}

\noindent
{\bf Soundness:} 
First we consider the ideal case that 
node $v_r$ obtains 
$|\psi_A^f\rangle\otimes |\psi_B^f\rangle$ 
as the output of ${\cal P}_{{\sf SGDI}}$. 
By Eq.~(\ref{eq:r-psiA+}) and Eq.~(\ref{eq:r-psiB+}),
\begin{align*}
\langle\psi_A^f|\psi_B^f\rangle
&=\frac{1}{|\mathbb{F}|} \sum_{s\in\overline{F_A}\cap\overline{F_B}} \langle p_A(s)|p_B(s)\rangle + \frac{1}{|\mathbb{F}|} \sum_{s\in F_A\cap F_B} \langle g_A(s)|g_B(s)\rangle\langle h_A(s)|h_B(s)\rangle\\
&\leq \frac{1}{|\mathbb{F}|} \cdot \ell (r+1) + \frac{1}{|\mathbb{F}|} |F_A\cap F_B|\\
&\leq \frac{2\ell (r+1)}{|\mathbb{F}|},
\end{align*}
where the first inequality comes from the fact that if $A\neq B$, 
there are at most $\ell (r+1)$ elements $s$ such that $p_A(s)=p_B(s)$, 
and the second inequality comes from $|F_A\cap F_B|\leq |F_A|\leq \ell (r+1)$. 
Then the SWAP test of $v_r$ accepts with probability at most
\[
\frac{1}{2} + \frac{1}{2} \left(\frac{2\ell (r+1)}{|\mathbb{F}|}\right)^2 = \frac{1}{2} + 2 \left(\frac{\ell (r+1)}{|\mathbb{F}|}\right)^2.  
\]

Next, we consider the case that 
${\cal P}_{{\sf SGDI}}$ accepts with probability less than $\frac{1}{c^{1/4}}$.
This means the verifier accepts with only this low probability, and the soundness holds in this case.

Finally, 
we consider the case that 
${\cal P}_{{\sf SGDI}}$ accepts with probability $\geq \frac{1}{c^{1/4}}$.
In this case, by Theorem \ref{thm:state-transfer}, 
the composite system ${\sf R}_A\otimes{\sf R}_B$ of the outputs of ${\cal P}_{{\sf SGDI}}$ 
has a quantum state $\tilde{\rho}$ 
such that 
\[
(\langle\psi_A^f|\langle\psi_B^f|)\tilde{\rho}(|\psi_A^f\rangle|\psi_B^f\rangle)\geq 
1-\frac{1}{c^{1/4}},
\]
under the condition that ${\cal P}_{{\sf SGDI}}$ accepts. 
By Lemma~\ref{lem:FvG},
\begin{align*}
D(|\psi_A^f\rangle\langle\psi_A^f|\otimes|\psi_B^f\rangle\langle\psi_B^f|, \tilde{\rho})
&\leq 
\sqrt{1-F(|\psi_A^f\rangle\langle\psi_A^f|\otimes|\psi_B^f\rangle\langle\psi_B^f|, \tilde{\rho})^2}\\
&= \sqrt{1- (\langle\psi_A^f|\langle\psi_B^f|)\tilde{\rho}(|\psi_A^f\rangle|\psi_B^f\rangle)
}\\
&\leq \frac{ 1 }{c^{1/8}}.
\end{align*}
Therefore, 
by the analysis of the ideal case, 
the SWAP test of $v_r$ accepts with probability at most 
\[
\frac{1}{2} + 2 \left(\frac{\ell (r+1)}{|\mathbb{F}|}\right)^2+\frac{ 1 }{c^{1/8}}
\leq \frac{1}{2} + 2 \left(\frac{1}{\tilde{c}\cdot 2^{\log|U|}}\right)^2+\frac{ 1 }{c^{1/8}},
\]
which is at most $\frac{1}{2}+\gamma$ for some small constant $\gamma>0$ when we choose sufficiently large $c$ and $\tilde{c}$. 
Now the acceptance probability $v_{{\sf acc}}$ of ${\cal P}_{{\sf seteq}}$ is 
\[
v_{{\sf acc}}
=
\Pr[\mbox{the SWAP test of $v_r$ on input $\tilde{\rho}$ accepts}]
\times
\Pr[{\cal P}_{{\sf SGDI}}\ \mbox{accepts}],
\]
which is also at most $\frac{1}{2}+\gamma$ for some small constant $\gamma>0$ when we choose sufficiently large $c$ and $\tilde{c}$. 
Then, we can reduce soundness to any small constant 
by parallel AND-type repetitions, namely, by running ${\cal P}_{{\sf seteq}}$ a suitable constant number of times in parallel and accepting when all the runs accept. (This soundness reduction follows from the standard parallel repetition of QMA. See, e.g.,~\cite{kitaev2002classical}.)
This completes the proof of Theorem~\ref{thm:SetEQ}.
\end{proof}

\subsection{Classical bounds for $\SE$}\label{subsec:classical_bound_for_se}
Here we prove the classical lower bounds shown in Theorem~\ref{thm:SetEQ_classical_lower_bound}.\\

\noindent\textbf{Proof of Theorem~\ref{thm:SetEQ_classical_lower_bound}}:
In order to prove our lower bounds for $\SE$,
we utilize reductions from $\mathsf{EQ}_n^2$ to $\SE$, then apply the following lower bound of $\mathsf{EQ}_n^2$ that appears in~\cite{FLNP21}.
\begin{lem}[Theorem 9 of~\cite{FLNP21}]\label{lemma:lower_bound_for_eq}
Let $r\geq 3$ be a positive integer. Consider an instance of $\mathsf{EQ}_n^2$ where the two nodes $v_0$ and $v_r$ on a line graph $v_0,\ldots,v_r$ are provided with inputs $x\in \{0,1\}^n$ and $y \in \{0,1\}^n$. Then, for any dMA protocol that solves $\mathsf{EQ}_n^2$ for this instance with completeness $1-p$ and soundness $1-2p-\varepsilon$ for any $p,\varepsilon>0$, there exists $i\in \{1,\ldots, r-1\}$ such that the certificate size of $v_i$ is $\Omega(n)$.
\end{lem}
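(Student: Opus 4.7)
The plan is to use a standard cross-over argument via pigeonhole on internal certificates, deriving a contradiction with the soundness condition. Assume for contradiction that $\sum_{i=1}^{r-1}|c_i|<n$, where $|c_i|$ denotes the length of the certificate assigned to internal node $v_i$ by a fixed honest prover strategy for yes-instances. Then the map $x\mapsto(c_1(x),\ldots,c_{r-1}(x))$ from $\{0,1\}^n$ into the space of internal-certificate tuples (of cardinality less than $2^n$) cannot be injective, so by pigeonhole there exist distinct $x\neq x'\in\{0,1\}^n$ for which the internal certificates $c_i(x)=c_i(x')$ coincide for every $i\in\{1,\ldots,r-1\}$.

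Given such a pair $(x,x')$, I would exhibit a malicious prover on the no-instance $(x,x')$ as follows: send $v_0$ the honest certificate $c_0(x)$ it would receive on input $(x,x)$, send each internal $v_i$ the common certificate $c_i(x)=c_i(x')$, and send $v_r$ the honest certificate $c_r(x')$ it would receive on input $(x',x')$. Because the protocol is one-round and each outgoing message depends only on the sender's input, certificate, and private randomness, the fact that internal certificates coincide in both honest executions means that every message exchanged between internal nodes has the same joint distribution (in terms of the independent private random strings) in the adversarial execution as in either of the two honest executions. Tracing through each node's local view, I would verify that every $v_j$ with $j\leq r-2$ sees the same local state as in the honest $(x,x)$-execution, whereas $v_{r-1}$ and $v_r$ see the same local state as in the honest $(x',x')$-execution.

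It follows that the event ``$v_0,\ldots,v_{r-2}$ all accept'' has, under the adversarial strategy, the same probability as in the honest $(x,x)$-execution, and hence is at least $1-p$ by completeness; symmetrically, ``$v_{r-1}$ and $v_r$ accept'' happens with probability at least $1-p$ under the adversarial strategy. A union bound over the complementary events yields adversarial acceptance probability at least $1-2p$ on the no-instance $(x,x')$, contradicting the soundness upper bound of $1-2p-\varepsilon$. The contradiction forces $\sum_{i=1}^{r-1}|c_i|\geq n$, so at least one internal certificate has size $\Omega(n)$ (in the regime of constant $r$ relevant to the paper's applications).

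The main obstacle will be the careful bookkeeping at the ``boundary'' nodes $v_0,v_1,v_{r-1},v_r$: the equality of internal certificates alone does not obviously equate the local views of these nodes with the corresponding honest ones, so one must verify precisely which arguments of each message function (input, certificate, randomness) change between the adversarial and honest settings, and exploit that the internal certificates coincide in both honest runs. Once this case analysis is carried out, the union bound step and the resulting contradiction with soundness follow directly.
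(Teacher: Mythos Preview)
The paper does not prove this lemma; it is quoted verbatim from \cite{FLNP21} and used as a black box in the proof of Theorem~\ref{thm:SetEQ_classical_lower_bound}. So there is no ``paper's own proof'' to compare against, and the relevant question is whether your argument actually establishes the stated bound.

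Your crossover-plus-pigeonhole strategy is the right one, and the analysis of local views and the final union bound are fine. The gap is quantitative. By insisting that \emph{all} internal certificates collide, your pigeonhole step only yields
\[
\sum_{i=1}^{r-1}|c_i|\ \ge\ n,
\]
from which you can conclude at best $\max_i|c_i|\ge n/(r-1)$. You note this and retreat to ``constant $r$'', but that is not the regime the paper needs: in Cases~1 and~2 of the proof of Theorem~\ref{thm:SetEQ_classical_lower_bound} the reduction is to $\mathsf{EQ}_n^2$ on the same line of length $r$, so the $\Omega(n)$ bound on some internal certificate must hold uniformly in $r$.

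The fix is to localise the collision. Pick any internal edge $(v_j,v_{j+1})$ with $1\le j\le r-2$, and apply pigeonhole only to the pair $(c_j,c_{j+1})$: if $|c_j|+|c_{j+1}|<n$ then there exist $x\neq x'$ with $c_j(x)=c_j(x')$ and $c_{j+1}(x)=c_{j+1}(x')$. Now cross over at that edge, giving $v_i$ the honest certificate $c_i(x)$ for $i\le j$ and $c_i(x')$ for $i\ge j+1$. The only messages crossing the cut are produced from $c_j$ and $c_{j+1}$, both of which agree in the two honest runs, so the joint view of $\{v_0,\ldots,v_j\}$ matches the $(x,x)$ execution and that of $\{v_{j+1},\ldots,v_r\}$ matches the $(x',x')$ execution. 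Your union bound then gives acceptance probability $\ge 1-2p$ on the no-instance, contradicting soundness. Hence $|c_j|+|c_{j+1}|\ge n$ for every internal edge, and in particular some internal $|c_i|\ge n/2=\Omega(n)$, independent of $r$.
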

Let $\mathcal{P}$ be a dMA protocol for $\SE$ with the certificate size $s_c$ which appears in the statement of the theorem. We show three different reductions depending on the size of $|U|$.\\ 

\noindent \textbf{Case 1 $|U|<\ell$:}
We consider the following instance of $\mathsf{EQ}_n^2$ on a line graph of length $r\geq 3$ for $n=(|U|-1)\cdot \log (\ell/|U|)$: the node $v_0$ is provided with $x$, and the node $v_r$ is provided with $y$. 
The input $x$ corresponds to the input $(a_{v_0,1},\ldots,a_{v_0,\ell})$ of $\SE$ in the following way: The input $x$ is split into $|U|-1$ substrings $x_1,\ldots, x_{|U|-1}$ of length $\lceil \log (\ell/|U|) \rceil$. Let $U= \{u_1,\ldots , u_{|U|-1},u_{|U|}\}$ be the universal set. Let $c(x_i)$ be the number whose binary representation is $x_i$ for each $i \in \{1,\ldots,|U|-1\}$. We interpret the number of $u_i$'s $v_0$ holds as $c(x_i)$ and the number of $u_{|U|}$'s $v_0$ holds as $\ell - \sum_i c(x_i)$ so that the total number of elements $v_0$ holds is $\ell$. (Note that the total number of elements held by $v_0$ does not exceed $\ell$ since $\sum_i c(x_i)\leq (|U|-1)\cdot \ell/|U| \leq \ell$.) The input $(b_{v_0,1},\ldots,b_{v_0,\ell})$ is $(u_{|U|},\ldots,u_{|U|})$. The input for $v_r$ is constructed in the same way, but in this case the input $(a_{v_r,1},\ldots,a_{v_r,\ell})$ is $(u_{|U|},\ldots,u_{|U|})$ and the input $(b_{v_r,1},\ldots,b_{v_r,\ell})$ is determined by $y$. For internal nodes $v_1,\ldots, v_{r-1}$, all input elements are $u_{|U|}$ (i.e., each internal node is provided with $2\ell$ identical elements). Now the numbers of $u_i$'s $v_0$ and $v_r$ hold are the same if and only if $x_i = y_i$ for each $i \in \{1,\ldots,|U|-1\}$. Now we can use the protocol $\mathcal{P}$ to solve $\mathsf{EQ}_n^2$ for $n = (|U|-1)\cdot \log (\ell/|U|)$. Thus by Lemma~\ref{lemma:lower_bound_for_eq}, we have $s_c = \Omega(|U|\log(\ell/|U|))$.\\

\noindent \textbf{Case 2 $|U|=\Omega(\ell)$:}
We consider the following instance of $\mathsf{EQ}_n^2$ on a line graph $v_0,\ldots,v_r$ of length $r\geq 3$: the node $v_0$ is provided with $x$, and the node $v_r$ is provided with $y$. 
Let $\ell$ be the minimum integer satisfying
$
\binom{3\ell}{\ell} \geq 2^n.
$
Since $\binom{3\ell}{\ell} = 2^{3\ell H(1/3) - O(\log \ell)}$ where $H(\cdot)$ is the binary entropy function, we have $\ell = \Theta(n)$. Let $S=\{s\in\{0,1\}^{3\ell}: |s|=\ell\}$ be the set of $3\ell$-bit strings so that $|S|=\binom{3\ell}{\ell}$. We arbitrarily choose one injection $f:\{0,1\}^n \rightarrow S$ (this kind of injection exists since we have $|S|\geq 2^n$).
The network constructs the following instance of $\mathsf{SetEquality}_{\ell,U}$ for the universal set $U=\{0,1,2,\ldots, 3\ell\}$ without communication:
\begin{itemize}
\item The inputs $x$ and $y$ are converted to $f(x),f(y)\in S$. Let $X=\{i:f(x)_i = 1\}$ and $Y=\{i:f(y)_i = 1\}$ be two sets of $\ell$ elements from the universal set $\{1,2,\ldots,3\ell\}$. $X$ and $Y$ are regarded as the inputs $(a_{v_0,1},\ldots,a_{v_0,\ell})$ and $(b_{v_r,1},\ldots,b_{v_r,\ell})$ of $\mathsf{SetEquality}_{\ell,U}$. Furthermore, we set $(b_{v_0,1},\ldots,b_{v_0,\ell}) = (a_{v_r,1},\ldots,a_{v_r,\ell}) = (0,\ldots, 0)$. 
\item The inputs to each internal node $v_1,\ldots,v_{r-1}$ are set to $(0,\ldots, 0),(0,\ldots, 0)$. 
\end{itemize}
Now the set $A$ and $B$ of this instance of $\mathsf{SetEquality}_{\ell,U}$ are identical as multisets if and only if $f(x)=f(y)$. Since $f$ is an injection, we have $f(x)=f(y) \Leftrightarrow x=y$ and thus the output of $\mathsf{SetEquality}_{\ell,U}$ on this instance is identical to that of $\mathsf{EQ}_n^2$ on the input $x$ and $y$. Now we can use the protocol $\mathcal{P}$ to solve $\mathsf{EQ}_n^2$ for $n = \Theta(\ell)$. Thus by Lemma~\ref{lemma:lower_bound_for_eq}, we have $s_c = \Omega(\ell)$.\\

\noindent \textbf{Case 3 $|U| = \Omega(r\ell)$:}
%If $r = O(1)$, then Lemma~\ref{lemma:eq_to_se_2} is essentially equivalent to Lemma~\ref{lemma:eq_to_se}. We thus consider the case that $r = \omega(1)$.
Without loss of generality, assume that $r$ is odd so that $r = 2k + 1$ for some positive integer $k$. Let $\{v_0,\ldots , v_{2k+1}\}$ be the set of nodes. Consider the following situation: for any $i\in\{0,1,\ldots,k-1\}$, the node $v_i$ receives the input $x_i\in\{0,1\}^n$ and the node $v_{2k+1-i}$ receives the input $y_i\in\{0,1\}^n$. There is no input for $v_k,v_{k+1}$. For these inputs, the goal is to decide iff $x_i=y_i$ for all $i\in\{0,1,\ldots,k-1\}$. The answer is thus identical to $\prod_{i}\mathsf{EQ}_n^2(x_i,y_i)$. Now we construct the following instance of $\mathsf{SetEquality}_{\ell,U}$ for $|U|=3k\ell + 1$: 
\begin{itemize}
\item According to the proof for Case 2, we convert the input $(x_i,y_i)$ to the inputs $(a_{v_i,1},\ldots,a_{v_i,\ell}),(b_{v_i,1},\ldots,b_{v_i,\ell})$ and $(a_{v_{2k+1-i},1},\ldots,a_{v_{2k+1-i},\ell}),(b_{v_{2k+1-i},1},\ldots,b_{v_{2k+1-i},\ell})$ of $\mathsf{SetEquality}_{\ell,U_i}$ for $\ell = \Theta(n)$ where the universal set is $U_i=\{0,3i\ell + 1,\ldots,3i\ell + 3\ell\}$;
\item The input to the node $v_i$ for $i=k,k+1$ is $ (a_{v_i,1},\ldots, a_{v_i,\ell}) = (b_{v_i,1},\ldots, b_{v_i,\ell}) = (0,\ldots,0)$.
\end{itemize}
The goal is to decide if $A=B$ as multisets for 
\begin{align*}
&A=\bigl\{a_{v_i,j}:i\in\{0,\ldots,2k+1\}, j\in \{1,\ldots,\ell\}\bigr\},\\
&B=\bigl\{b_{v_i,j}:i\in\{0,\ldots,2k+1\}, j\in \{1,\ldots,\ell\}\bigr\}.
\end{align*}
The output is thus the same as that of $\prod_{i}\mathsf{EQ}_n^2(x_i,y_i)$. This is because the inputs $(a_{v_i,1},\ldots,a_{v_i,\ell})$ and $(b_{v_i,1},\ldots,b_{v_i,\ell})$ of $v_i$, which are the multisets from the universe $\{3i\ell + 1,\ldots,3i\ell + 3\ell\}$ unless they are $(0,\ldots,0)$, are disjoint from the inputs of $v_{i'}$ for any $v_{i'}\neq v_{i}$.

We consider the instance of $\mathsf{EQ}_{kn}^2(x_0x_1\cdots x_{k-1},y_0y_1\cdots y_{k-1})$ on the four node line graph $\{u_0,u_1,u_2,u_3\}$ where $u_0$ possesses the input $x_0x_1\cdots x_{k-1}$ and $u_3$ possesses the input $y_0y_1\cdots y_{k-1}$. The network can use the protocol $\mathcal{P}$ to solve this $\mathsf{EQ}_{kn}^2$ instance where $|U| = 3k\ell + 1 = \Theta(r\ell)$. Here the node $u_0$ simulates $k$ nodes $\{v_0,\ldots, v_{k-1}\}$, $u_1$ simulates $v_k$, $u_2$ simulates $v_{k+1}$, and $u_3$ simulates $k$ nodes $\{v_{k+2},\ldots, v_{2k+1}\}$. The certificate size of $u_i$ is then $ks_c$ for $i=0,3$, otherwise $s_c$. Now it is followed by Lemma~\ref{lemma:lower_bound_for_eq} that $s_c=\Omega(kn) = \Omega(r\ell)$.
 \qed \\

Finally we prove the classical upper bound for $\SE$.\\

\noindent\textbf{Proof of Theorem~\ref{thm:SetEQ_classical_upper_bound}}:
The upper bound of $O(r\ell \log |U|)$ is easily achieved by the trivial protocol, i.e., the prover sends $A$ and $B$ (which are represented by $r\ell \lceil \log |U|\rceil $ bits) to each node, then each node (1) checks if its input is consistent with the corresponding part of the certificate, (2) checks if the whole certificate is the same as those of its adjacent nodes, and (3) performs local computation to check if $A=B$. Now we show the another upper bound. Let $U=\{u_1,\ldots,u_{|U|}\}$ be the universal set and $\{v_0,\ldots, v_r\}$ be the set of nodes. Since the number of elements in the multisets $A$ and $B$ are both $r\ell$, the number of $u_i$ in $A$ and $B$ for each $i\in \{1,\ldots,|U|\}$ can both be represented by $\lceil \log(r\ell) \rceil$-bit. Consider the following protocol: The honest prover sends each node $v_i$ the number of each $u_j$'s appearing in the input $\{a_{v_k,k'}:k\in \{0,\ldots, i\}, k'\in \{1,\ldots,\ell\}\}$ and the number of $u_j$'s appearing in the input $\{b_{v_k,k'}:k\in \{0,\ldots, i\}, k'\in \{1,\ldots,\ell\}\}$ using $O(|U|\log(r\ell))$ bits. The node $v_0$ can locally check the consistency of its certificate. Assume that the certificate of $v_{i-1}$ is consistent with the honest prover's certificate. The node $v_i$ can check the consistency of its certificate from its input and the certificate of $v_i$. Therefore, if the prover sends any illegal certificate at least one node can recognize it. If the prover sends legal certificate (i.e., the certificate assignment that is accepted by all $v_i$ for $i\in \{0,\ldots, r-1\}$), the node $v_r$ can construct $A$ and $B$ locally and thus can decide the correct output.\qed

%%%%%%%%%%%%%%%%%%%%%%%%%%%%%%%%%%%%%%%
\section{Conversion of dQMA protocols into LOCC dQMA protocols}\label{subsec:ZH}
%%%%%%%%%%%%%%%%%%%%%%%%%%%%%%%%%%%%%%%
%\subsection{Distributed verification of EPR pairs by LOCC}\label{subsec:ZH}

In this section we show how to create an EPR pair $|\Phi^+\rangle=\frac{1}{\sqrt{2}}(|00\rangle+|11\rangle)$ between two parties without quantum communication in the setting where a prover helps the nodes in a non-interactive way. Our protocol is based on the verification protocol of the EPR pair in the adversarial setting proposed by Zhu and Hayashi~\cite{ZH19PRA} 
(see also \cite{ZH19PRL,ZH19PRA-2}), 
who showed that a verifier $V$ can check whether a two-qubit state sent from a (possibly malicious) prover is $|\Phi^+\rangle$. 

The following is the verification protocol given in Ref.~\cite{ZH19PRA}. 
%(the protocol itself is essentially the same as the protocol described in Ref.~\cite{PLM18PRL} but Ref.~\cite{ZH19PRA} showed that it can be used even in the adversarial setting).

\fboxsep=6pt
\begin{breakbox}
\noindent
{\bf Protocol ${\cal P}_{{\sf ZH}}$}: 
Let %$\sigma_1,\sigma_2,\ldots,\sigma_N,\sigma_{N+1}$ be $(N+1)$ 
${\sf R}_1,{\sf R}_2,\ldots,{\sf R}_N, {\sf R}_{N+1}$ be $(N+1)$ two-qubit registers from the prover. 
Here, $|+\rangle:=\frac{1}{\sqrt{2}}(|0\rangle+|1\rangle)$, $|-\rangle:=\frac{1}{\sqrt{2}}(|0\rangle-|1\rangle)$, $|+'\rangle:=\frac{1}{\sqrt{2}}(|0\rangle+i|1\rangle)$ and $|-'\rangle:=\frac{1}{\sqrt{2}}(|0\rangle- i|1\rangle)$. 

\begin{enumerate}
\item Perform a random permutation $\pi$ on the $(N+1)$ two-qubit registers, 
and rename ${\sf R}_j:={\sf R}_{\pi(j)}$ for $j=1,2,\ldots,N+1$.
\item For each $j=1,2,\ldots,N$, the verifier $V$ does one of the following three POVMs on register ${\sf R}_{j}$ with probability $1/3$ for each: 
\begin{itemize}
    \item $M_1=\{E_1,I-E_1\}$ with $E_1=|00\rangle\langle 00|+|11\rangle\langle 11|$. 
    \item $M_2=\{E_2,I-E_2\}$ with $E_2=|++\rangle\langle ++|+|--\rangle\langle --|$. 
    \item $M_3=\{E_3,I-E_3\}$ with $E_3=|+'-'\rangle\langle +'-'|+|-'+'\rangle\langle -'+'|$.
\end{itemize}
\item Reject if the second components in the POVMs are obtained. Otherwise, the test passes and outputs ${\sf R}_{N+1}$.
\end{enumerate}
\end{breakbox}%\vspace{3mm}
\fboxsep=3pt

Ref.~\cite{ZH19PRA} describes $E_1=\frac{I+Z^{\otimes 2}}{2}$, $E_2=\frac{I+X^{\otimes 2}}{2}$ and $E_3=\frac{I-Y^{\otimes 2}}{2}$, 
where 
\[X=\left(\begin{matrix}
0 & 1\\
1 & 0
\end{matrix}
\right), Y=\left(\begin{matrix}
0 & -i\\
i & 0
\end{matrix}
\right),\text{ and }Z=
\left(\begin{matrix}
1 & 0\\
0 & -1
\end{matrix}
\right),
\]
while we rewrite them as above (which is a similar expression to the protocol in Ref.~\cite{PLM18PRL}) 
since it would be easy to see for our purpose.
Importantly, 
step 2 implements the POVM $\{\Omega, I-\Omega\}$ on ${\sf R}_{j}$ with $\Omega=\frac{2}{3}|\Phi^+\rangle\langle\Phi^+|+\frac{1}{3} I$ for each $j$, but it is implemented by local measurements on each qubit of ${\sf R}_{j}$. 

The following result was shown for the protocol ${\cal P}_{{\sf ZH}}$ in Ref.~\cite{ZH19PRA}.

\begin{theorem}[Zhu-Hayashi]\label{thm:ZhuHayashi}
There is a number 
$N=O(\frac{1}{\varepsilon}\log(\frac{1}{\delta}))$ such that if the test passed with probability 
at least $\delta$, 
then the output state $\tilde{\sigma}$ of ${\cal P}_{{\sf ZH}}$ (under the condition that the test passes)
satisfies $\langle \Phi^+|\tilde{\sigma}|\Phi^+\rangle\geq 1-\varepsilon$.
% see 062335-5 (ZH19PRA 100 062335) B.Main figures of merit for what delta is
%To verify the target state with fidelity $1-\varepsilon$ and significance level $\delta$, $N=O(\frac{1}{\varepsilon}\log(\frac{1}{\delta}))$ is sufficient.
\end{theorem}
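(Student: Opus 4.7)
The plan is to reduce the analysis to a single-copy operator-inequality calculation and then to a Chernoff-type bound on an i.i.d.\ input, exploiting the random permutation in step 1 as the source of symmetry.

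First, I would verify that step 2 is operationally equivalent to performing, on each of the first $N$ registers, the single two-outcome POVM $\{\Omega, I-\Omega\}$ with $\Omega = \frac{2}{3}|\Phi^+\rangle\langle\Phi^+| + \frac{1}{3}I$. This is an immediate consequence of the Pauli decomposition $|\Phi^+\rangle\langle\Phi^+| = \frac{1}{4}(I + X\otimes X - Y\otimes Y + Z\otimes Z)$, which gives $\frac{1}{3}(E_1+E_2+E_3) = \Omega$, so averaging over the three local measurements yields $\Omega$. The structural fact that will drive everything is that $\Omega$ takes eigenvalue $1$ on $\mathrm{span}\{|\Phi^+\rangle\}$ and eigenvalue $1/3$ on its three-dimensional orthogonal complement, so a "bad" direction contributes a gap of $2/3$ per tested copy to the rejection probability.

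Next, step 1 applies a uniformly random permutation on the $N+1$ registers, so without loss of generality the input state $\rho$ is permutation-invariant on the two-qubit systems. Setting $P = I - |\Phi^+\rangle\langle\Phi^+|$, the quantity I want to control is
\[
q \;:=\; \mathrm{Tr}\bigl[(\Omega^{\otimes N}\otimes P)\,\rho\bigr] \;=\; \Pr[\text{test passes}]\cdot\bigl(1 - \langle\Phi^+|\tilde\sigma|\Phi^+\rangle\bigr),
\]
so it suffices to prove $q \le \varepsilon\delta$: this gives the stated fidelity bound whenever $\Pr[\text{test passes}] \ge \delta$. To bound $q$, I would reduce to the i.i.d.\ case. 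Because $\rho$ is permutation-invariant and the roles of the $N+1$ positions are interchangeable by symmetry, the expression for $q$ is symmetric over the full register set; by a de~Finetti-style representation on the symmetric subspace it then suffices, by convexity, to bound $q$ for pure i.i.d.\ inputs $\rho = (|\psi\rangle\langle\psi|)^{\otimes(N+1)}$. For such an input, with $p := \langle\psi|P|\psi\rangle$, a direct calculation gives $q = p\,(1 - 2p/3)^N$. Split cases: if $p < \varepsilon$ the output already has fidelity $\ge 1-\varepsilon$ with $|\Phi^+\rangle$ by the held-out marginal, and if $p \ge \varepsilon$ then $q \le (1 - 2\varepsilon/3)^N \le e^{-2\varepsilon N/3}$. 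Choosing $N = \lceil \frac{3}{2\varepsilon}\ln(1/\delta)\rceil = O\bigl(\frac{1}{\varepsilon}\log\frac{1}{\delta}\bigr)$ drives this below $\delta$ and, after multiplying by the $p \le 1$ prefactor, below $\varepsilon\delta$ up to a constant that can be absorbed into the $O(\cdot)$.

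The main obstacle I anticipate is making the reduction from a general permutation-invariant $\rho$ to a convex combination of i.i.d.\ inputs rigorous without paying a de~Finetti penalty: the finite quantum de~Finetti theorem would introduce an additive error of order $d^2/N$ that destroys the clean $O(\frac{1}{\varepsilon}\log\frac{1}{\delta})$ scaling. The clean way around this, as in Zhu--Hayashi, is to avoid a generic de~Finetti bound and instead use the exact structure of the symmetric subspace of $(\mathbb{C}^4)^{\otimes(N+1)}$: one shows directly that on the symmetric subspace the operator $\Omega^{\otimes N}\otimes P$ is dominated by an integral of $(|\psi\rangle\langle\psi|)^{\otimes N}\otimes P$ against a probability measure on pure states, transferring the i.i.d.\ Chernoff bound above to all permutation-invariant inputs without loss. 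This exact "symmetric-subspace integral" identity is the technical heart of the proof.
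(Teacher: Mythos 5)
You should first be aware that the paper does not prove this statement at all: Theorem~\ref{thm:ZhuHayashi} is imported as a black box from Zhu and Hayashi~\cite{ZH19PRA} (``The following result was shown for the protocol ${\cal P}_{{\sf ZH}}$ in Ref.~\cite{ZH19PRA}''), so there is no in-paper argument to compare against. Judged on its own terms, your sketch has the right skeleton: the identity $\frac{1}{3}(E_1+E_2+E_3)=\Omega=\frac{2}{3}|\Phi^+\rangle\langle\Phi^+|+\frac{1}{3}I$, the spectral gap $2/3$ of $\Omega$, the reduction to controlling $q=\tr[(\Omega^{\otimes N}\otimes P)\rho]$ against $\Pr[\mbox{pass}]$, and the i.i.d.\ computation $q=p(1-2p/3)^N$ are all correct and are genuinely the ingredients of the Zhu--Hayashi analysis.

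The gap is exactly where you flag it, and your proposed repair does not work as stated. A permutation-invariant mixed state on $(\mathbb{C}^4)^{\otimes(N+1)}$ is neither a convex combination of i.i.d.\ states (finite de Finetti fails exactly, and approximate versions cost an additive error you cannot afford) nor supported on the symmetric subspace (the maximally mixed state is permutation-invariant yet has weight on non-symmetric irreps), so there is no ``symmetric-subspace integral identity'' dominating $\Omega^{\otimes N}\otimes P$ by $\int(|\psi\rangle\langle\psi|)^{\otimes N}\otimes P\,d\mu(\psi)$. The lossless reduction is more elementary: $\Omega$ and $P$ are both functions of the single projector $Q=|\Phi^+\rangle\langle\Phi^+|$, so $\Omega^{\otimes N}\otimes P$ and $\Omega^{\otimes N}\otimes I$ commute with the pinching onto the $2^{N+1}$ sectors labelled good/bad at each site; only the induced classical distribution on $\{g,b\}^{N+1}$ matters, and by permutation invariance it is exchangeable. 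Conditioning on the total number $k$ of bad sites gives the exact (urn, not de Finetti) formulas $\Pr[\mbox{pass}\mid k]=(1-\frac{k}{N+1})3^{-k}+\frac{k}{N+1}3^{-(k-1)}$ and $q_k=\frac{k}{N+1}3^{-(k-1)}$, hence per-sector conditional infidelity $3k/(N+1+2k)$; your Chernoff-style split on $k\gtrless\varepsilon(N+1)/3$ then yields $N=O(\frac{1}{\varepsilon}\log\frac{1}{\delta})$. A secondary issue: your stated target $q\le\varepsilon\delta$ is sufficient but is not what your case analysis delivers (a component with $p$ slightly below $\varepsilon$ gives $q\approx p(1-2p/3)^N$, which can exceed $\varepsilon\delta$ while still passing with probability $\ge\delta$); what you actually need, and what the sector decomposition provides, is a direct bound on the ratio $q/\Pr[\mbox{pass}]$ combining the small-$k$ sectors (small ratio) with the large-$k$ sectors (exponentially small passing probability).
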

    
The protocol ${\cal P}_{{\sf ZH}}$ uses only local measurements, and thus, 
it can be used for verifying the sharing of an EPR pair by two parties 
who only use local operations and classical communication 
(LOCC). 

Let $V_1$ and $V_2$ be neighboring parties who expect to receive $|\Phi^+\rangle$ jointly from the prover.   
The following protocol is a simple implementation 
of ${\cal P}_{{\sf ZH}}$ with LOCC by $V_1$ and $V_2$. \vspace{2mm}

%The protocol ${\cal P}_{{\sf ZH}}$ is very suitable to use distributed verification between $V_1$ and $V_2$ in the LOCC style as follows.

\fboxsep=6pt
\begin{breakbox}
\noindent
{\bf Protocol ${\cal P}_{{\sf ZHLOCC}}$}: 
Let %$\sigma_1,\sigma_2,\ldots,\sigma_N,\sigma_{N+1}$ be $(N+1)$ 
${\sf R}_{1,1}, \ldots, {\sf R}_{N,1}, {\sf R}_{N+1,1}$ be $(N+1)$ one-qubit registers from the prover to $V_1$, 
and ${\sf R}_{1,2}, \ldots, {\sf R}_{N,2}, {\sf R}_{N+1,2}$ be $(N+1)$ one-qubit registers from the prover to $V_2$, respectively. 
%Here, $|+'\rangle:=\frac{1}{\sqrt{2}}(|0\rangle+i|1\rangle)$ and $|-'\rangle:=\frac{1}{\sqrt{2}}(|0\rangle- i|1\rangle)$. 

\begin{enumerate}
\item $V_1$ chooses a random permutation $\pi$ on $\{1,2,\ldots,N+1\}$ and sends it to $V_2$, and then both perform $\pi$ 
on the $(N+1)$ two-qubit registers $({\sf R}_{1,1}, {\sf R}_{1,2}),\ldots, ({\sf R}_{N,1}, {\sf R}_{N,2}), ({\sf R}_{N+1,1}, {\sf R}_{N+1,2})$. Rename ${\sf R}_{j,1}:={\sf R}_{\pi(j),1}$ and ${\sf R}_{j,2}:={\sf R}_{\pi(j),2}$. 
\item $V_1$ chooses $N$ random numbers $k_1,k_2,\ldots,k_N \in \{1,2,3\}$ and sends them to $V_2$.  
For each $j=1,2,\ldots,N$, $V_1$ and $V_2$ implement one of the POVMs $M_1,M_2,M_3$ on register $({\sf R}_{j,1}, {\sf R}_{j,2})$ jointly as follows.
\begin{itemize} 
\item when $k_j=1$, they jointly implement $M_1=\{E_1,I-E_1\}$; $V_1$ and $V_2$ measure ${\sf R}_{j,1}$ and  ${\sf R}_{j,2}$ in the $Z$ basis $\{|0\rangle,|1\rangle\}$, respectively, 
and $V_1$ sends the measurement value to $V_2$, who rejects iff it differs from the measurement value of $V_2$. 
\item when $k_j=2$, they jointly implement $M_2=\{E_2,I-E_2\}$; $V_1$ and $V_2$ measure ${\sf R}_{j,1}$ and  ${\sf R}_{j,2}$ in the $X$ basis $\{|+\rangle,|-\rangle\}$, respectively, 
and $V_1$ sends the measurement value to $V_2$, who rejects iff it differs from the measurement value of $V_2$. 
\item when $k_j=3$, they jointly implement $M_3=\{E_3,I-E_3\}$; $V_1$ and $V_2$ measure ${\sf R}_{j,1}$ and  ${\sf R}_{j,2}$ in the $Y$ basis $\{|+'\rangle,|-'\rangle\}$, respectively, and $V_1$ sends the measurement value to $V_2$, who rejects iff it is same as the measurement value of $V_2$.  
%(ii) $M_2=\{E_2,I-E_2\}$ with $E_2=|++\rangle\langle ++|+|--\rangle\langle --|$; and 
%(iii) $M_3=\{E_3,I-E_3\}$ with $E_3=|+'-'\rangle\langle +'-'|+|-'+'\rangle\langle -'+'|$.
\end{itemize}
%If $V_2$ rejects, he/she reports it to $V_1$.
\item The test passes and $V_1$ and $V_2$ output ${\sf R}_{N+1,1}$ and ${\sf R}_{N+1,2}$, respectively. 
\end{enumerate}
\end{breakbox} 
\fboxsep=3pt
\vspace{3mm}
It is easy to see that ${\cal P}_{{\sf ZHLOCC}}$ simulates ${\cal P}_{{\sf ZH}}$ exactly in a distributed manner. 
The protocol ${\cal P}_{{\sf ZHLOCC}}$ does not use 
any quantum communication between $V_1$ and $V_2$, 
while the amount of classical communication used 
between $V_1$ and $V_2$ is 
$
\lceil \log (N+1)! \rceil + \lceil \log 3^N\rceil + N = O(N\log N).
$

Furthermore, we can replace a random permutation $\pi$ in step 1 
of ${\cal P}_{{\sf ZHLOCC}}$ 
by switching the $j$th two-qubit register $({\sf R}_{j,1},{\sf R}_{j,2})$
and the $(N+1)$th register $({\sf R}_{N+1,1},{\sf R}_{N+1,2})$ 
by choosing $j$ uniformly at random from $\{1,2,\ldots,N+1\}$ 
(actually, doing nothing when $j=N+1$) 
since the output state by such change is the same as protocol ${\cal P}_{{\sf ZHLOCC}}$.  
We call the protocol by such change ${\cal P}^+_{{\sf ZHLOCC}}$.
Now the amount of classical communication used 
between $V_1$ and $V_2$ in ${\cal P}^+_{{\sf ZHLOCC}}$ is improved to 
$
\lceil \log (N+1) \rceil + \lceil \log 3^N\rceil + N = O(N).
$

Thus the following theorem holds for ${\cal P}^+_{{\sf ZHLOCC}}$.

\begin{theorem}\label{thm:distributedZH}
For the same number $N=O(\frac{1}{\varepsilon}\log(\frac{1}{\delta}))$ as Theorem~\ref{thm:ZhuHayashi},  
if the test passed with at least probability $\delta$, 
then the two-qubit state $\tilde{\sigma}$ output by $V_1$ and $V_2$ in ${\cal P}^+_{{\sf ZHLOCC}}$ 
satisfies $\langle \Phi^+|\tilde{\sigma}|\Phi^+\rangle\geq 1-\varepsilon$. 
\end{theorem}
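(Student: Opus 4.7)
The plan is to establish Theorem~\ref{thm:distributedZH} by reducing it to Theorem~\ref{thm:ZhuHayashi} in two steps: (a) show that ${\cal P}^+_{{\sf ZHLOCC}}$ and ${\cal P}_{{\sf ZHLOCC}}$ produce identical joint distributions on (accept/reject, output state), and (b) show that ${\cal P}_{{\sf ZHLOCC}}$ is an exact distributed simulation of ${\cal P}_{{\sf ZH}}$ on the $(2N+2)$-qubit state supplied by the prover.

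For step (b), I would argue that each of the three POVMs $M_1,M_2,M_3$ used in ${\cal P}_{{\sf ZH}}$ can be realized by measuring the two qubits of the register in the same single-qubit basis ($Z$, $X$, or $Y$, respectively) and accepting iff the two classical outcomes satisfy the right parity constraint. A direct computation confirms that the single-qubit $Z \otimes Z$ measurement followed by equality check implements $\{E_1, I-E_1\}$, that $X\otimes X$ with equality check implements $\{E_2, I-E_2\}$, and that $Y\otimes Y$ with inequality check implements $\{E_3, I-E_3\}$. In ${\cal P}_{{\sf ZHLOCC}}$, $V_1$ announces the permutation $\pi$ and the choices $k_1,\dots,k_N$ to $V_2$ (which takes only classical communication); both parties then perform these joint measurements and reject iff at least one test fails. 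The output registers $({\sf R}_{N+1,1},{\sf R}_{N+1,2})$ are left untouched, so the conditional state on acceptance is identical to the output of ${\cal P}_{{\sf ZH}}$.

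For step (a), the key observation is that steps 2--3 of both protocols are completely symmetric under permutations of the first $N$ register pairs: the verifier applies i.i.d. choices of $\{M_1,M_2,M_3\}$ to each of them and accepts iff all succeed. Hence, for any input state $\rho$ on the $(N+1)$ two-qubit registers, the joint distribution of (accept/reject, output state) produced by first applying a permutation $\sigma$ chosen uniformly from the stabilizer of $\{N+1\}$ in $S_{N+1}$ and then running steps 2--3 is the same as running steps 2--3 directly. Consequently, one may factor the uniform distribution over $S_{N+1}$ into the choice of $\pi(N+1) \in \{1,\dots,N+1\}$ (which is uniform) followed by a uniform permutation of the remaining $N$ indices, and absorb the latter into the symmetric verification phase. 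The remaining uniform choice of $\pi(N+1)$ is exactly the swap-with-last operation used in ${\cal P}^+_{{\sf ZHLOCC}}$, so the two protocols induce identical distributions on (accept/reject, output state).

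Combining (a) and (b), for every prover state the probability that ${\cal P}^+_{{\sf ZHLOCC}}$ passes equals the probability that ${\cal P}_{{\sf ZH}}$ passes, and the conditional output state $\tilde{\sigma}$ of $({\sf R}_{N+1,1},{\sf R}_{N+1,2})$ is the same. Theorem~\ref{thm:ZhuHayashi}, applied with the same $N=O(\frac{1}{\varepsilon}\log(\frac{1}{\delta}))$, then yields $\langle\Phi^+|\tilde{\sigma}|\Phi^+\rangle \geq 1-\varepsilon$ whenever the passing probability is at least $\delta$. The main (admittedly mild) obstacle is justifying the symmetry reduction in (a) rigorously; I would do this by writing the acceptance-plus-output channel of ${\cal P}_{{\sf ZHLOCC}}$ as an explicit sum over permutations and outcomes, and verifying that averaging over permutations of $\{1,\dots,N\}$ is a no-op because the verification POVM on those registers is symmetric under such permutations.
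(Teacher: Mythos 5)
Your proposal is correct and follows essentially the same route as the paper, which simply asserts (i) that ${\cal P}_{{\sf ZHLOCC}}$ simulates ${\cal P}_{{\sf ZH}}$ exactly via local measurements in the $Z$, $X$, $Y$ bases with the appropriate parity checks, and (ii) that the full random permutation can be replaced by a random swap with the $(N+1)$st register because the verification phase is symmetric under permutations of the first $N$ register pairs. Your write-up just makes explicit (via the coset decomposition of $S_{N+1}$ over the stabilizer of $N+1$) what the paper leaves as "easy to see," so there is nothing to add.
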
  
 
%Now by using protocol ${\cal P}^+_{{\sf ZHLOCC}}$ with the number $N$ of Theorem~\ref{thm:distributedZH}, $V_1$ and $V_2$ can teleport one qubit without any shared EPR pair but with a certificate of size $s_{c}^{{\sf ZH}} = O(\frac{1}{\varepsilon}\log(\frac{1}{\delta}) )$ from the prover and with a classical message of size $s_{m}^{{\sf ZH}} = O(\frac{1}{\varepsilon}\log(\frac{1}{\delta}) )$. 
%In Appendix~\ref{appendix:analysis}, we prove Theorem~\ref{thm:convertion-LOCC} by showing how to use the protocol ${\cal P}^+_{{\sf ZHLOCC}}$.

Now we prove Theorem~\ref{thm:convertion-LOCC} by showing how to use the above protocol in order to convert any dQMA protocol into an LOCC dQMA protocol.
\begin{proof}[Proof of Theorem~\ref{thm:convertion-LOCC}]
Let ${\cal P}$ be the original dQMA protocol on $G=(V,E)$. Let $C_{uv}$ be the number of qubits sent from $u\in V$ to $v\in N(u)$ in ${\cal P}$.
Let $\mathsf{M}_u$ be the quantum register 
from the prover to $u$ in ${\cal P}$. 
Our approach is to create $C_{uv}$ EPR-pairs $\ket{\Phi^+}^{\otimes C_{uv}}$ (more precisely, a quantum state that is sufficiently close to the desired EPR-pairs) between $u$ and $v$ using ${\cal P}^+_{{\sf ZHLOCC}}$. According to Theorem~\ref{thm:distributedZH}, we get $\tilde{\sigma}$ satisfying $\bra{\Phi^+}\tilde{\sigma}\ket{\Phi^+} \leq 1- \varepsilon$ using $O(\frac{1}{\varepsilon}\log(\frac{1}{\delta}))$ qubits. Since we need $s^{\mathcal{P}}_{tm}$ EPR-pairs, the amount of errors is at most $s^{\mathcal{P}}_{tm}\varepsilon$. Taking $\varepsilon = O(1/s^{\mathcal{P}}_{tm})$, we get a quantum state sufficiently close to $\ket{\Phi^+}^{\otimes C_{uv}}$. The total number of qubits needed as the certificate for quantum teleportation between each edge $(u,v)$ is $C_{uv}\cdot s^{\mathcal{P}}_{tm}$.
We formalize this approach as a protocol ${\cal P}_{{\sf LOCC}}$ as follows, and analyze completeness, soundness, the certificate size and the message size of ${\cal P}_{{\sf LOCC}}$.
 \vspace{3mm}
\fboxsep=6pt
\begin{breakbox}
\noindent
{\bf Protocol ${\cal P}_{{\sf LOCC}}$}: 
\begin{enumerate}
\item For each $u\in V$, the following is done:
\begin{enumerate}
\item Node $u$ receives $\mathsf{M}_u$ and $(N+1)C_{uv}$-qubit registers ${\sf R}_{u,v}$, which consist of  $C_{uv}$ blocks of $(N+1)$-qubit registers 
for each $v\in N(u)$ from the prover.
\item For each $v\in N(u)$, $u$ and $v$ do as follows:
\begin{itemize}
\item To create $j$th EPR pair $(1\leq j\leq C_{uv})$ between $u$ and $v$, implement protocol ${\cal P}^+_{{\sf ZHLOCC}}$ on the $j$th block of ${\sf R}_{u,v}$. 
Let ${\sf Q}_{u,v,j}$ be the two-qubit register obtained by ${\cal P}^+_{{\sf ZHLOCC}}$, where one is owned by $u$ and the other is owned by $v$ 
(if ${\cal P}^+_{{\sf ZHLOCC}}$ passes).  
\end{itemize}
\end{enumerate}
\item Simulate the verification stage of ${\cal P}$ 
using ${\sf M}:=\otimes_{u\in V} {\sf M}_u$, 
where sending the $j$th qubit from $u$ to $v$ is implemented 
by the quantum teleportation protocol~\cite{BBC+93PRL} using ${\sf Q}_{u,v,j}$.
\end{enumerate}
\end{breakbox}\vspace{3mm}
\fboxsep=3pt

\noindent
{\bf Analysis:} 
Completeness holds trivially 
since the prover can send the certificates of ${\cal P}$ as the contents of ${\sf M}$, and the EPR pairs for quantum teleportation 
to keep the completeness parameter $p_c$ of ${\cal P}$. 
Thus we consider soundness.

For the soundness $p_s$ of the original protocol ${\cal P}$, 
we set $\delta=p_s+\gamma$ for each run of ${\cal P}^+_{{\sf ZHLOCC}}$ 
implemented in protocol ${\cal P}_{{\sf LOCC}}$, where $\gamma$ is the parameter appearing in the statement of Theorem~\ref{thm:convertion-LOCC}.

Now we can assume that 
each run of ${\cal P}^+_{{\sf ZHLOCC}}$ 
passes with probability at least $\delta$ 
under the condition that the other runs passed 
since otherwise ${\cal P}_{{\sf LOCC}}$ accepts 
with probability $< p_s+\gamma$, 
which satisfies the required soundness $p_s+\gamma$ 
of ${\cal P}_{{\sf LOCC}}$. 

Let $\tilde{\sigma}$ be the total state of all the registers sent from the prover after passing the tests of ${\cal P}^+_{{\sf ZHLOCC}}$, 
and $\tilde{\sigma}^{\mathrm{EPRs}}$ 
be the reduced state of $\tilde{\sigma}$ 
on the registers 
\[
{\sf Q}:=\bigotimes_{u\in V} \bigotimes_{v\in N(u)} 
\bigotimes_{j\in\{1,2,\ldots,C_{uv}\}} {\sf Q}_{u,v,j}.
\] 
Then, 
by Theorem \ref{thm:distributedZH} and the above assumption, 
the reduced state of $\tilde{\sigma}$ 
on ${\sf Q}_{u,v,j}$, 
$\tilde{\sigma}_{u,v,j}$ 
satisfies
\begin{equation}\label{eq:C-analysis}
\tr(P^- \tilde{\sigma}_{u,v,j})\leq \varepsilon,
\end{equation}   
where $P^-:= I-|\Phi^+\rangle\langle\Phi^+|$. 
By Lemma~\ref{lem:union-bound} and the inequality~(\ref{eq:C-analysis}), 
we obtain
\begin{align}
\lefteqn{\tr\left(
\bigotimes_{u\in V} \bigotimes_{v\in N(u)} \bigotimes_{j\in\{1,2,\ldots,C_{uv}\}} (|\Phi^+\rangle\langle\Phi^+|)_{{\sf Q}_{u,v,j}}\tilde{\sigma}^{\mathrm{EPRs}}
\right)}\quad\nonumber\\
&\geq 1 - \sum_{u\in V,v\in N(u)}\sum_{j\in \{1,2,\ldots,C_{uv} \}} \tr(
(I_{{\sf Q}\setminus{\sf Q}_{u,v,j}}\otimes (P^-)_{{\sf Q}_{u,v,j}})\tilde{\sigma}^{\mathrm{EPRs}} )
)\nonumber\\
&= 
1 - \sum_{u\in V,v\in N(u)}\sum_{j\in \{1,2,\ldots,C_{uv} \}} \tr( (P^-)_{{\sf Q}_{u,v,j}}\tilde{\sigma}_{u,v,j} ))\nonumber\\
&\geq 1- s_{tm}^{{\cal P}} \varepsilon, \label{eq:union}
\end{align}
where the identity $I_{
{\sf Q}\setminus{\sf Q}_{u,v,j}}$ 
acts on the registers of ${\sf Q}$ except ${\sf Q}_{u,v,j}$. 

The acceptance probability $p_{\mathsf{acc}}$ of ${\cal P}_{{\sf LOCC}}$ is
\[
p_{\mathsf{acc}} = \tr(M_x \tilde{\tau})\times\Pr[\mbox{all runs of ${\cal P}^+_{{\sf ZHLOCC}}$ pass}],
\]
where $M_x$ is the POVM element that corresponds to the acceptance by the simulation of the verification stage of ${\cal P}$ (step 2 in ${\cal P}_{{\sf LOCC}}$), 
and 
$\tilde{\tau}$ 
is the state on the system ${\sf Q}\otimes {\sf M}$ 
after passing all runs of ${\cal P}^+_{{\sf ZHLOCC}}$. 

Let $|\mathrm{EPRs}\rangle=\bigotimes_u\bigotimes_{v\in N(u)}\bigotimes_{j=1}^{C_{uv}} |\Phi^+\rangle_{{\sf Q}_{u,v,j}}$.  
By Lemma~\ref{lem:MHNF15}, the inequality~(\ref{eq:union}), and Lemma~\ref{lem:FvG}, 
there is some state $\xi$ on ${\sf M}$ 
such that
\begin{align*}
D(\tilde{\tau},|\mathrm{EPRs}\rangle\langle \mathrm{EPRs}|\otimes\xi)
&\leq \sqrt{1-F(|\mathrm{EPRs}\rangle\langle \mathrm{EPRs}|\otimes\xi,\tilde{\tau})^2}\\
&=\sqrt{1-F(|\mathrm{EPRs}\rangle\langle \mathrm{EPRs}|,\tr_{{\sf M}}(\tilde{\tau} ))^2}\\
&\leq \sqrt{s_{tm}^{{\cal P}}\varepsilon}.
\end{align*}
Therefore, 
\begin{align*}
p_{\mathsf{acc}} 
&\leq \tr(M_x\tilde{\tau})\\
&\leq \tr(M_x (|\mathrm{EPRs}\rangle\langle \mathrm{EPRs}|\otimes\xi))+\sqrt{s_{tm}^{{\cal P}}\varepsilon}\\
&\leq p_s + \sqrt{s_{tm}^{{\cal P}}\varepsilon}.
\end{align*}
Now we take $\varepsilon=\frac{\gamma^2}{s_{tm}^{{\cal P}}}$, 
which implies that protocol ${\cal P}_{{\sf LOCC}}$ accepts with probability at most $p_s+\gamma$, as required.

%Let $d_{\max}$ be the maximum degree of the network, 
%and $C$ be the maximum of the qubits communicated between any two nodes in ${\cal P}$.  
Finally, we consider the certificate size and the message size 
of ${\cal P}_{{\sf LOCC}}$. 
The qubits sent from the prover to each node $u$ 
are the qubits sent in ${\cal P}$ 
and the qubits sent in ${\cal P}^+_{{\sf ZHLOCC}}$ to create $\sum_{v\in N(u)} (C_{uv}+C_{vu})$ EPR pairs.
Thus, the certificate size of ${\cal P}_{{\sf LOCC}}$ is at most
\[
s_c^{{\cal P}} + d_{\max} s_m^{{\cal P}} s_{c}^{{\sf ZH}} = s_c^{{\cal P}} + O(d_{\max} s_m^{{\cal P}} s_{tm}^{{\cal P}}).
\]
The bits communicated between any two neighboring nodes $u$ and $v$ 
are the two bits sent in quantum teleportation and the bits sent in ${\cal P}^+_{{\sf ZHLOCC}}$ 
for each qubit communicated in ${\cal P}$. 
Thus, the message size of  ${\cal P}_{{\sf LOCC}}$ is   
\[
s_m^{{\cal P}} (2+s_{m}^{{\sf ZH}}) = O(s_m^{{\cal P}} s_{tm}^{{\cal P}}).
\]
%Now the proof of Theorem~\ref{thm:convertion-LOCC} is completed.
\end{proof}

\bibliography{dQMA-bib}

\end{document}